\documentclass[11pt,a4paper,reqno,french,tikz]{amsart}
\usepackage[utf8]{inputenc} \usepackage[T1]{fontenc}
\usepackage{amsthm,amsmath,amsfonts,amssymb,amsxtra,appendix,bookmark,dsfont,bm,mathrsfs,amstext,amsopn,mathrsfs,mathtools,comment,cite,hyperref,color,xcolor,cite}

\newtheorem{theorem}{Theorem}[section]\newtheorem{lemma}[theorem]{Lemma}\newtheorem{proposition}[theorem]{Proposition}\newtheorem{corollary}[theorem]{Corollary}\newtheorem{remark}[theorem]{Remark}

\let\C\relax\newcommand{\C}{\mathbb{C}}\newcommand{\R}{\mathbb{R}}\newcommand{\N}{\mathbb{N}}

\newcommand\cA{\mathcal{A}}\newcommand\cB{\mathcal{B}}\newcommand\cC{\mathcal{C}}\newcommand\cD{\mathcal{D}}\newcommand\cE{\mathcal{E}}\newcommand\cF{\mathcal{F}}\newcommand\cH{\mathcal{H}}\newcommand\cL{\mathcal{L}}\newcommand\cO{\mathcal{O}}\newcommand\cP{\mathcal{P}}\newcommand\cU{\mathcal{U}}

\DeclareMathOperator{\tr}{Tr}\DeclareMathOperator{\Span}{Span}\DeclareMathOperator{\Ker}{Ker}\DeclareMathOperator{\im}{Im}\DeclareMathOperator{\dist}{dist}
\def\d{{\rm d}}

\DeclareMathOperator{\re}{Re}

\renewcommand{\ge}{\geqslant}\renewcommand{\le}{\leqslant}

\newcommand{\pa}[1]{\left( #1 \right)} 
\newcommand{\bpa}[1]{\big( #1 \big)} 
\newcommand{\seg}[1]{\left[ #1 \right]} 
\newcommand{\ab}[1]{\left|#1\right|} 
\newcommand{\ps}[1]{\left< #1 \right>} 
\newcommand{\norb}[2]{ \| #1 \|_{#2} } 
\newcommand{\nor}[2]{ \|  #1 \|_{#2} } 
\newcommand{\nore}[1]{ \|  #1  \|_{ee} } 
\newcommand{\norm}[1]{  \| #1 \|} 
\newcommand{\proj}[1]{\left| #1 \right> \left< #1 \right|} 
\newcommand{\ketbra}[2]{\left| #1 \right> \left< #2 \right|} 

\newcommand\vp{\varphi} 
\newcommand{\ep}{\varepsilon} 
\let\p\relax\newcommand{\p}{\psi} 

\newcommand{\f}[2]{\frac{#1}{#2}} 
\newcommand{\mymax}[1]{\underset{\substack{#1}}{\text{\normalfont{max}}}\;} 
\newcommand{\mylim}[1]{\underset{\substack{#1}}{\text{\normalfont{lim}}}\;} 
\newcommand{\mymin}[1]{\underset{\substack{#1}}{\text{\normalfont{min}}}\;} 
\newcommand{\mysup}[1]{\underset{\substack{#1}}{\text{\normalfont{sup}}}\;} 
\newcommand{\myinf}[1]{\underset{\substack{#1}}{\text{\normalfont{inf}}}\;} 
\def\1{{\mathds{1}}}

\newcommand{\Nz}{\N \cup \{0\}}
\newcommand{\Np}{\N}

\newcommand{\dm}[1]{\cD_{#1}}

\newcommand{\hs}[1]{\left( #1 \right)_{\text{2}}} 
\newcommand{\rr}{G}
\newcommand{\hh}{\mathfrak{H}}
\newcommand{\hls}{\mathfrak{S}_2}
\newcommand{\ggp}{\ab{H + a}^{\f 12}}
\newcommand{\ggm}{\ab{H + a}^{-\f 12}}
\newcommand{\bphi}{{\bm{\phi}}}
\newcommand{\bpsi}{{\bm{\psi}}}
\newcommand{\bvp}{{\bm{\varphi}}}
\newcommand{\restr}[2]{#1_{\mkern 1mu \vrule height 2ex\mkern2mu #2}}
\newcommand{\PHP}{\restr{\pa{\cP H \cP}}{\cP \cH \rightarrow \cP \cH}}
\newcommand{\PHlP}{\restr{\pa{\cP H(\lambda) \cP}}{\cP \cH \rightarrow \cP \cH}}
\newcommand{\PHlOP}{\restr{\pa{\cP H^0 \cP}}{\cP \cH \rightarrow \cP \cH}}

\usepackage{tikz, pgfplots}
\usetikzlibrary{pgfplots.fillbetween}
\usepackage{scrextend}


\author{Louis Garrigue}
\address[Louis Garrigue]{Laboratoire ``analyse géométrie modélisation'', CY Cergy Paris Université, 95302 Cergy-Pontoise, France} 
\email{louis.garrigue@cyu.fr}

\author{Benjamin Stamm}
\address[Benjamin Stamm]{Institute of Applied Analysis and Numerical Simulation, University of Stuttgart, 70569 Stuttgart, Germany
}
\email{benjamin.stamm@mathematik.uni-stuttgart.de}

\title[Reduced basis method and perturbation theory]{On reduced basis methods for eigenvalue\\problems, and on its coupling\\with perturbation theory}
\date{\today}
\begin{document}
\maketitle

\begin{abstract}
In this article, we study eigenvalue problems associated to self-adjoint operators and their approximation obtained by subspace projection, as used in the reduced basis method for instance. We provide error bounds between the exact eigenmodes and the approximated ones and also consider degenerate cases in the analysis. When the operator depends on a parameter, we apply the bounds assuming that the reduced space contains the derivatives of the eigenfunction with respect to the parameter. Finally, we provide some numerical examples that reflect the analytical results.
\end{abstract}

\section{Introduction}

\subsection{Projected space methods for eigenvalue problems}%
\label{sub:Projected space methods for eigenvalue problems}

A classical issue in eigenvalue problems is to reduce the number of degrees of freedom of the studied systems by extracting only the relevant ones, leading to subspace methods, the full considered Hilbert space $\cH$ being too large to be addressed in its exact form. 
In particular, reduced basis method (RBM) approximations aim at approximating $\cH$ by a well-chosen low-dimensional subset $\cP \cH$, created via an orthogonal projector $\cP$. 
Denoting the exact self-adjoint operator by $H$, then the approximated operator of interest is
\begin{align*}
\PHP, 
\end{align*}
as the restriction of $\cP H \cP$ to $\cP \cH \rightarrow \cP \cH$, whose eigenmodes we are interested in. Among other works, reduced basis problems have been investigated in the context of eigenvalue problems in~\cite{MadPatPer99,MacMadOli00,FumManPar16,CanDusMad17,CanDusMad18,HerStaWes22,BreHerWes23,TauDusEhr24,ManStaZen25}, the case of several eigenvalues was examined in~\cite{HorWohDic17,CanDusMad20,BofHalPri24} where more details about the RBM are considered, while here, we only focus on the approximation results. Moreover, in the previously mentioned works, \textit{a posteriori} bounds were derived, while we provide \textit{a priori} ones in this document. See also~\cite{Aronszajn51,Weinberger74,Greenlee83,BabOsb91,Gould12,KnyMehXu14} for other mathematical documents about the approximation of eigenvalue problems. 

In Theorem~\ref{thm:main_deg_thm} and Proposition~\ref{prop:non-deg_main_thm}, we first provide bounds enabling to estimate the error between the eigenmodes of the exact operator $H$ and the ones of the approximated operator $\PHP$ in a general setting. We treat the degenerate (i.e. when several eigenvalues are equal) and almost-degenerate cases by using the formalism of density matrices and clusters of eigenmodes, and we treat the non-degenerate case with an eigenvector formalism. 
We sought to provide general bounds which could be applied to diverse settings, for instance to derive effective operators in quantum mechanics, or to discretize the Hilbert space for numerical purposes. In the vector case, it is showed that those errors are controlled by the key quantity $\cP^\perp \phi$, where $\phi$ is the exact eigenvector of $H$. Since $\cP$ is chosen so that $\phi$ is expected to be ``almost'' contained in $\cP \cH$, $\cP^\perp \phi$ is small in norm. 
A similar quantity also controls the error in the case of clusters of eigenmodes.

\subsection{Coupling with perturbation theory}%
\label{sub:Coupling with perturbation theory}

We then apply our bounds to a reduced basis method in the context of parametrized eigenvalue problems, which uses the derivatives of the eigenvectors with respect to the parameter, to build the reduced space. To the best of our knowledge, this method was used for the first time in an article by Noor and Lowder~\cite{NooLow74} for computational engineering science, see also~\cite{MurHaf88,NaiKeaLan98,ItoRad01,TouUmr08} and their references. It was more recently used in nuclear physics~\cite{FraHeIps18}. Since then, many works showed the very interesting performance of this method applied to quantum mechanics, for instance in~\cite{KonEksHeb20,DemDugEks20,FurGarMil20,DemFroTic21,SarLee21,DriQuiGiu21,SarLee22,MelDriFur22,DugEksFur24}, providing perspectives to improve several areas of quantum physics beyond perturbation theory. We think that it could also lead to the derivation of more precise theoretical models, ongoing projects indicate this. The method is interpreted as a resummation of the perturbation series, and it yields a systematic way of forming effective operators. Other and different ways of coupling reduced basis methods and perturbation theory were also developped, we present several examples in Appendix~\ref{sec:Other kinds}. We refer to the method at stake here as the coupling between reduced basis methods (RBM) and perturbation theory (PT), which we call RBM+PT. The situation is illustrated in Figure~\ref{fig:main_idea}, on which we represent the spectra of $H(\lambda)$ and of $\PHlP$, where $H(\lambda)$ is the exact self-adjoint operator, depending on one parameter $\lambda \in \R$. Denoting one eigenvector of $H(\lambda)$ by $\phi(\lambda)$, the assumption of RBM+PT is that 
\begin{align}\label{eq:rbm_pt_hypo} 
\pa{\f{\d^n}{\d \lambda^n} \phi(\lambda)}_{\mkern 1mu \vrule height 2ex\mkern2mu \lambda = 0} \in \cP \cH, \qquad \text{ for all } n \in \{0,\dots,\ell\}.
\end{align}
It has been practically observed that the corresponding eigenvector of the effective operator $\PHlP$ is very close to the exact one, much closer than the perturbative approximation. To explain this: a heuristics is given in Section~\ref{sub:Heuristics}; quantitative bounds on the error between the exact and the approximate eigenvectors are provided in Corollary~\ref{cor:main_ec_vec}, more precisely, ~\eqref{eq:eigenvectors_bound} gives the leading order; to treat degeneracies and almost degeneracies, bounds on density matrices are given in Corollary~\ref{cor:main_ec_dm} and Theorem~\ref{thm:deg_vecs}.
Finally, we present in Section~\ref{sec:illustration}, on a one-body Schrödinger operator, the many benefits of RBM+PT compared to PT, and of RBM+PT compared to building a reduced basis from excited states taken at $\lambda = 0$, illustrating the efficiency of the method.

\begin{figure}[h!]
\begin{center}
\begin{minipage}{0.5\textwidth}
\begin{tikzpicture}[scale=0.8]
  \draw[scale=1, domain=-3:3, smooth, variable=\x, blue, name path=C] plot ({\x}, {-0.03*\x + 0.01*\x*\x + 4});
  \draw[scale=1, domain=-3:3, smooth, variable=\x, blue, name path=D] plot ({\x}, {0.03*\x*\x + 0.01*\x*\x*\x + 3.5});
  \draw[scale=1, domain=-3:3, smooth, variable=\x, red] plot ({\x}, {-0.3*\x - 0.1*\x*\x + 0.05*\x*\x*\x + 2.5});
  \draw[scale=1, domain=-3:3, smooth, variable=\x, blue] plot ({\x}, {0.2*\x - 0.2*\x*\x - 0.01*\x*\x*\x + 2.2});
  \draw[scale=1, domain=-3:3, smooth, variable=\x, red] plot ({\x}, {0.2*\x + 0.05*\x*\x - 0.00*\x*\x*\x + 2.5});
  \draw[scale=1, domain=-3:3, smooth, variable=\x, blue, name path=A] plot ({\x}, {-0.1*\x - 0.1*(\x-1)*(\x-1) + 0.01*\x*\x*\x + 1.3});
  \draw[scale=1, domain=-3:3, smooth, variable=\x, blue, name path=B] plot ({\x}, {0.05*(\x-1) - 0.05*(\x-1)*(\x-1) + 0.01*(\x-1)*(\x-1)*(\x-1) + 1.6});
  \draw[scale=1, domain=-3:3, smooth, variable=\x, blue] plot ({\x}, {0.1*\x + 0.02*(\x-1)*(\x-1) + 0.01*\x*\x*\x + 0.7});
\tikzfillbetween[of=A and B]{blue, opacity=1};
\tikzfillbetween[of=C and D]{blue, opacity=1};
  \draw[->] (-3.4, 0) -- (3.4, 0) node[right] {$\lambda$};
  \draw[-] (0, 0) -- (0, -0.1); \node at (0,-0.4) {$0$};
  \draw[-] (3, 0) -- (3, -0.1); \node at (3,-0.4) {$\lambda_0$};
  \draw[-] (-3, 0) -- (-3, -0.1); \node at (-3,-0.4) {$-\lambda_0$};
  \draw[->] (0, 0) -- (0, 5) node[above] {$\sigma(H(\lambda))$};
\end{tikzpicture}
\end{minipage}\hfill
\begin{minipage}{0.5\textwidth}
\begin{tikzpicture}[scale=0.8]
  \draw[scale=1, domain=-3:3, smooth, variable=\x, blue] plot ({\x}, {0.1*\x - 0.07*\x*\x - 0.01*\x*\x*\x + 4.5});
  \draw[scale=1, domain=-3:3, smooth, variable=\x, blue] plot ({\x}, {0.2*\x + 0.07*\x*\x - 0.01*\x*\x*\x + 3.2});
  \draw[scale=1, domain=-3:3, smooth, variable=\x, red] plot ({\x}, {-0.3*\x - 0.1*\x*\x + 0.05*\x*\x*\x + 2.5});
  \draw[scale=1, domain=-3:3, smooth, variable=\x, red] plot ({\x}, {0.2*\x + 0.05*\x*\x - 0.00*\x*\x*\x + 2.5});
  \draw[scale=1, domain=-3:3, smooth, variable=\x, blue] plot ({\x}, {0.3*\x + 0.13*(\x-1)*(\x-1) + 0.01*\x*\x*\x -0.2});
  \draw[scale=1, domain=-3:3, smooth, variable=\x, blue] plot ({\x}, {0.2*(\x-1) - 0.04*(\x-1)*(\x-1) - 0.01*\x*(\x-1)*(\x-1) + 1});
  \draw[->] (-3.4, 0) -- (3.4, 0) node[right] {$\lambda$};
  \draw[-] (0, 0) -- (0, -0.1); \node at (0,-0.4) {$0$};
  \draw[-] (3, 0) -- (3, -0.1); \node at (3,-0.4) {$\lambda_0$};
  \draw[-] (-3, 0) -- (-3, -0.1); \node at (-3,-0.4) {$-\lambda_0$};
  \draw[->] (0, 0) -- (0, 5) node[above] {$\sigma\bpa{\PHlP}$};
\end{tikzpicture}
\end{minipage}
\end{center}\label{fig:main_idea}
\caption{RMB+PT approximates very well the targeted eigenmodes (whose eigenvalues are in red), but can fail to reproduce the other ones (eigenvalues in blue). One can put (in $\cP \cH$) Taylor series of several eigenvectors if one wants to model several eigenmodes, as on this figure where the two eigenmodes in red are taken into account, while blue ones are not. In quantum physics, most of the time one is interested in only a few eigenmodes, those which are at the interface between occupied and unoccupied spectrum.}
\end{figure}



\newpage

\subsection{Organization of the document}%
\label{sub:orga of the document}

\begin{itemize}
\item In Section~\ref{sec:definitions_general}, we present the norms that we use, the formalism of density matrices, the projection $\cP$ onto a subspace, the targeted eigenmodes and the partial inverses

\item In Section~\ref{sec:main result on the RBM}, we give the results on the reduced basis method in the general case. We treat clusters of eigenmodes in Section~\ref{sub:Density matrices} using density matrices, and one eigenmode in Section~\ref{sub:One eigenmode} using eigenvectors. Our main results quantify the error between the exact eigenmodes and the approximate ones.

\item In Section~\ref{sec:definitions_rbm_pt}, we present the context of perturbation theory, where the considered operator depends on a real parameter. We introduce an analytic family of operators $H(\lambda)$ and choose a cluster of eigenmodes. The reduced operator $\PHlP$ has the same targeted eigenmodes as $H(\lambda)$ when $\lambda = 0$, hence Rellich's theorem enable to analytically continue the corresponding branches, this is explained in Sections~\ref{sub:Starting point for the reduced operator} and~\ref{sub:Analytic branches}.

\item In Section~\ref{sec:Application to RBM+PT}, we present the main results concerning the coupling between RBM and PT. In Section~\ref{sub:Heuristics}, we start by providing the idea of the mechanism at stake. Under the assumption~\eqref{eq:rbm_pt_hypo} defining RBM+PT, we then provide bounds between exact and approximate eigenmodes, in the perturbative regime. We treat clusters of eigenmodes in Section~\ref{sub:Clusters of eigenmodes} and the case of one eigenmode in~\ref{sub:One eigenmode}. We treat degenerate eigenvectors in Section~\ref{sub:Vectors in the degenerate case}, which needs a refined analysis using clusters of eigenmodes and density matrices.

\item In Section~\ref{sec:illustration}, we compare RBM+PT with other approximations, and we give numerical results in the case of a one-dimensional one-body Schrödinger operator. In Section~\ref{sub:RBM+PT versus PT} we compare RBM+PT with PT alone, we note that both in the perturbative regime and in the non-perturbative regimes, RBM+PT is more efficient than PT. In Section~\ref{sub:RBM+PT vs RBM with excited states}, we compare RBM+PT with RBM+ES, where ES stands for ``excited states'', RBM+ES being the RBM approximation where the approximation space is built from the eigenvectors of $H(0)$, that is $\cP \cH = \Span \pa{\phi_\mu(0)}_{0 \le \mu \le \beta}$. Some simulations suggest that the errors of RBM+PT and RBM+ES have similar behaviors with respect to $\lambda$ in the non-perturbative regime, but RBM+PT needs much less vectors than RBM+ES to reach a given precision.

\item Sections~\ref{sec:Proof of Theorem mauin},~\ref{sec:Proof of Theorem non deg} provide the proofs the Section~\ref{sec:main result on the RBM}. Section~\ref{sec:Bounds on the Rayleigh-Schrödinger series in perturbation theory} is devoted to provide bounds on the Rayleigh-Schrödinger series $(\phi^k_\mu)_{k \in \Np}$. Section~\ref{sec:density matrix perturbation theory} presents perturbation theory for density matrices, which is the setting enabling to treat degenerate and almost degenerate eigenmodes. Sections~\ref{sec:Proof of EC} and~\ref{sec:Proof deg vec} conclude the proofs of Section~\ref{sec:Application to RBM+PT} on RBM+PT.

\item We then conclude the document with a series of appendices. We start by providing a table of the main notations used in this document, in Appendix~\ref{sec:Table of notations}. In Appendix~\ref{sec:Other kinds}, we present other methods coupling RBM (also called the variational approximation) and perturbation theory in a different way, and existing in the literature. In Appendix~\ref{sec:interm_normalization}, we give precisions about the intermediate normalization technique used in the physics literature. In Appendix~\ref{sec:appendix_error_bounds_DM} we present some relationships between sets of vectors and the density matrices that they create. In Appendix~\ref{sec:Alternative bound} we give a bound between the approximate density matrix of eigenvectors obtained by RBM and the exact density matrix of eigenvectors, providing an alternative to~\eqref{eq:explicit_diff}.
\end{itemize}

\section{Definitions and assumptions\\for the general reduced basis method}%
\label{sec:definitions_general}

We choose a standard but general mathematical setting which can address common operators involved in quantum mechanics, including Dirac operators, many-body Schrödinger operators and Bloch transforms of periodic operators.

\subsection{Hilbert space and operators}%
\label{sub:First definitions}

Let $\cH$ be a separable Hilbert space, endowed with the complex scalar product $\ps{\cdot,\cdot}$ and the corresponding norm $\norm{\cdot}$. 
As a convention, we will denote elements $\vp \in \cH$ as vectors even if $\cH$ is infinite-dimensional and can be viewed as functions. 

The object of study in the article is a closed self-adjoint operator $H$ acting on $\cH$, and we want to approximate some of its eigenmodes by using a subspace or reduced basis method.

For linear operators $B:\cH\to \cH$, we will denote by
\begin{align*}
	\norm{B} := \mysup{\p \in \cH \backslash \{0\}} \f{\norm{B \p}}{\norm{\p}}
\end{align*}
the canonical operator norm.
Further, let us take a closed self-adjoint operator $A$ of $\cH$, possibly unbounded, which will serve to define the energy norm. 
We assume that $A$ has a dense domain and a dense form domain and we will always assume that 
\begin{align}\label{eq:cA_cH} 
c_A := \nor{A^{-1}}{}< +\infty, \qquad c_H := \nor{A^{-1} H A^{-1}}{}< +\infty.
\end{align}
For vectors $\vp \in \cH$, the energy norm is then defined by
\begin{align}\label{eq:energy_norm_vects} 
\nor{\vp}{e} := \nor{A \vp}{}.
\end{align}
This norm is stronger than the norm $\nor{\cdot}{} $. We will also make use of the notation $\nor{\cdot}{e,0} := \nor{\cdot}{} $ and $\nor{\cdot}{e,1} := \nor{\cdot}{e}$, so that for any $\vp \in \cH$ and $\delta \in \{0,1\}$, there holds $\nor{\vp}{e,\delta} = \nor{A^\delta \vp}{}$. 

For example, consider the case of the Schrödinger operator $H = -\Delta + v$ acting on $\cH = L^2(\R^3)$. It is then natural to choose $A = \sqrt{-\Delta}$ and $\nor{\cdot}{e}$ is equivalent to the Sobolev norm $H^1(\R^3)$.

Further, for any operators $B, D$ on $\cH$, the Hilbert-Schmidt scalar product is denoted by $\hs{B,D} := \tr B^* D$, its norm $\nor{B}{2} := \sqrt{\tr B^* B}$, and the corresponding normed space is the space of Hilbert-Schmidt operators, 
denoted by 
\begin{align}\label{eq:def_hls} 
	\hls := \{ B : \cH \rightarrow \cH, \nor{B}{2} < +\infty\}. 
\end{align}
 For $\delta \in \{0,1\}$ and any $B \in \hls$, we use the notation
\begin{align}\label{eq:energy_norm} 
\nor{B}{2,\delta} := \nor{A^\delta B }{2}.
\end{align}
The norm $\nor{\cdot}{2,1}$, called the energy norm, is the natural one for density matrices (i.e. operators), as $\nor{\cdot}{e}$ is the natural norm for vectors. 

\subsection{Density matrices}%
\label{sub:Density matrices}

For any $\vp \in \cH$, we denote by $P_\vp$ the orthogonal projector onto $\C \vp= \Span \{\vp\}$. For any orthogonal projection $P$, we will use the notation $P^\perp :=  1 - P$. 

The notion of the density matrices is a key-concept to formalize eigenvalue problems with degeneracies and cluster of eigenvalues. 
They are analogous objects as eigenvectors, but for problems considering several eigenvectors and consist of the spectral projector onto the eigenspace spanned by those eigenvectors. For any set $\bvp := (\vp_\mu)_{\mu=1}^\nu \in \cH^\nu$ of orthonormal vectors, we define the corresponding density matrix 
\begin{align}\label{eq:def_dm} 
\dm{\bvp} := \sum_{\alpha=1}^{\nu} \proj{\vp_\alpha}= \sum_{\alpha=1}^{\nu} P_{\vp_\alpha},
\end{align}
using the convenient bra-ket notation. By orthonormality, $\dm{\bvp}$ is an orthogonal projection on $\cH$, that is $\dm{\bvp}^2 = \dm{\bvp}^* = \dm{\bvp}$. We denote by 
\begin{align*}
\cU_\nu := \{U \in \C^{\nu \times \nu} \;|\; U^* U = 1\}
\end{align*}
the group of unitary matrices of dimension $\nu$ and for any $U \in \cU_\nu$ we define its action $U \bvp := ((U \bvp)_\alpha)_{\alpha=1}^\nu$ on $\cH^\nu$ where $(U \bvp)_\alpha := \sum_{\beta=1}^{\nu} U_{\alpha \beta} \vp_\beta$. Then, there holds $\dm{U \bvp} = \dm{\bvp}$ for all $U \in \cU_\nu$ and set $\bvp \in \cH^\nu$ of orthonormal vectors.
Therefore, $\dm{\bvp}$ is independent on $\bvp$ but only depends on the space that $\bvp$ spans.

\subsection{Space reduction by projection}%
\label{sub:Consider a reduced space}

Let us take an orthogonal projection $\cP$ on $\cH$, we assume that $\cP$ is neither the identity nor the null projection to avoid the trivial cases, and we recall that $\cP^\perp = 1 - \cP$. The reduced space is $\cP \cH$, it can, but does not need to, be infinite-dimensional and we will need to assume that
\begin{align}\label{eq:ccP} 
c_\cP := \nor{A \cP A^{-1}}{}< +\infty.
\end{align}
Our central object will be $\restr{\pa{\cP \cH \cP}}{\cP \cH \rightarrow \cP \cH} : \cP \cH \rightarrow \cP \cH$, which is the restriction of $\cP H \cP$ to $\cP \cH$, hence it is an operator of $\cP \cH$, while $\cP H \cP$ is an operator of $\cH$. If $d := \dim \cP \cH$ is finite, we can represent this operator by a $d \times d$ matrix. We consider $\restr{\pa{\cP \cH \cP}}{\cP \cH \rightarrow \cP \cH}$ as an approximation of $H$, in the sense that its eigenmodes will well approximate the ones of $H$ we are interested in. Remark that since $\cP \neq 1$, $\sigma\pa{\cP H \cP} = \sigma\bpa{\restr{\pa{\cP \cH \cP}}{\cP \cH \rightarrow \cP \cH}} \cup \{0\}$ because $\im \cP^\perp \subset \Ker \cP H \cP$.




\subsection{Target set of eigenmodes}%
\label{sub:Choose sets of eigenmodes}

For a given $\nu \in \Np$, we focus in this work on an arbitrary set of $\nu$ eigenvalues 
\begin{align*}
\{E_\mu\}_{1 \le \mu \le \nu} \subset \sigma(H)
\end{align*}
in the spectrum of $H$. They are counted with multiplicity and their eigenvectors are denoted by $\phi_\mu$, normed such that $\nor{\phi_\mu}{} = 1$, and grouped into $\bphi := \pa{\phi_\mu}_{\mu=1}^\nu$. We denote the associated density matrix by
\begin{align*}
\Gamma := \sum_{\mu=1}^{\nu} \proj{\phi_\mu} = \dm{\bphi}.
\end{align*}
The purpose of taking clusters of eigenvalues, i.e. $\nu \ge 2$, is to be able to treat the almost-degenerate and degenerate cases, i.e. when eigenvalues are close or even equal. If the eigenvalues are not close, one can take the non-degenerate case $\nu = 1$ since no singular quantity in the following analysis will appear. Note that the eigenvalues $E_\mu$ are not necessarily sorted in increasing order and not necessarily on the bottom of the spectrum.

For any operator $B$, we denote by $\sigma_{\textup{d}}(B)$ the discrete spectrum of $B$, and we assume that $\PHP$ has at least $\nu$ eigenvalues in its discrete spectrum.
We consider $\nu$ of them, we denote them by
\begin{align}\label{eq:cE_in_spectrum} 
	\{\cE_\mu\}_{1 \le \mu \le \nu} \subset \sigma_{\textup{d}}\bpa{\PHP},
\end{align}
and the corresponding normalized eigenvectors by $\psi_\mu$, grouped into $\bpsi := \pa{\psi_\mu}_{\mu=1}^\nu$. 
We define the associated density matrix by
\begin{align*}
\Lambda := \sum_{\mu=1}^{\nu} \proj{\psi_\mu} = \dm{\bpsi}.
\end{align*}
Note that $\Lambda \cP =\Lambda$ and thus $\Lambda\cP^\perp=0$. 

We will study the closeness between $\phi_\mu$ and $\psi_\mu$ for any $\mu \in \{1,\dots,\nu\}$, so each level $\mu$ of $H$ corresponds to a level $\mu$ of $\PHP$. But they are not sorted in increasing order, so for instance if we follow a variational approach, the label $\mu$ can denote the $3^{\text{rd}}$ level of $H$ and the $5^{\text{th}}$ level of $\PHP$, and $\phi_\mu - \psi_\mu$ can be small. For example Figure~\ref{fig:main_idea} illustrates this principle, where $H$ depends on a parameter.

\subsection{Definition of partial inverses}%
\label{sub:Pseudo-inverse}


Finally, we need to introduce a technical definition combined with a further assumption. 
For any self-adjoint operator $B$, if $\{e_\mu\}_{\mu=1}^\alpha \subset \sigma_{\textup{d}}(B)$, then there exists $\kappa_B > 0$ such that
\begin{align}\label{eq:spectrum_gap_general} 
\pa{\sigma(B) \backslash \{e_\mu\}_{\mu=1}^\alpha} \cap \bpa{\cup_{\mu=1}^\alpha \; ]e_\mu - \kappa_B , e_\mu + \kappa_B[ \;} = \varnothing.
\end{align}
In addition to~\eqref{eq:cE_in_spectrum} we will also assume that
\begin{align}\label{eq:dim_nu} 
\dim \left( \cP \cH \cap \bigoplus_{\mu = 1}^\nu \Ker \bpa{\cP H \cP - \cE_{\mu}}\right) = \nu,
\end{align}
to ensure that all the eigenvectors associated to the approximated eigenvalues $\{\cE_\mu\}_{\mu=1}^\nu$ are taken into account. For any 
\begin{align*}
z \in \{\cE_\mu\}_{\mu=1}^\nu \cup \pa{\C \backslash \sigma\bpa{\PHP}}
\end{align*}
 we define
\begin{align}\label{def:pseudoinv_z_PHP}
\pa{z - \cP H \cP}_\perp^{-1} := 
\left\{
\begin{array}{ll}
\pa{\pa{z - \cP H \cP}_{\mkern 1mu \vrule height 2ex\mkern2mu \Lambda^\perp \cP \cH \rightarrow \Lambda^\perp \cP \cH}}^{-1} & \mbox{on } \Lambda^\perp \cP \cH, \\
0 & \mbox{on } \Lambda \cH \oplus \cP^\perp \cH,
\end{array}
\right.
\end{align}
extended by linearity on $\cH$. Note that we used the space decomposition $\cH = \Lambda \cH \oplus \Lambda^\perp \cP \cH\oplus \cP^\perp\cH$.
We also define
\begin{align}\label{eq:def_R} 
R_\mu := \pa{\cE_\mu - \cP H \cP}_\perp^{-1}.
\end{align}
Then, by~\eqref{eq:cE_in_spectrum} and~\eqref{eq:spectrum_gap_general} there exists $\kappa_{\cP H \cP} > 0$ such that $\nor{R_\mu}{} \le \kappa_{\cP H \cP}^{-1}$ for any $\mu \in \{1,\dots,\nu\}$.

\section{Main results on the general reduced basis method}
\label{sec:main result on the RBM}

In this section we present our main result, which is a bound on the difference between exact and approximated eigenmodes. This quantifies how close the approximation is from the exact quantity. It is a generic estimate that does not yet consider the parametrized setting, which is left for Section~\ref{sec:Application to RBM+PT}. We use the notations as introduced in Section~\ref{sec:definitions_general}. The case of clusters of eigenvalues is treated in Theorem~\ref{thm:main_deg_thm} and the non-degenerate case is treated in Proposition~\ref{prop:non-deg_main_thm}. 

\subsection{Clusters of eigenmodes}%
\label{sub:Clusters of eigenmodes}

The following result is adapted to the degenerate or almost-degenerate case, that is when several eigenvalues that we take into account are equal or close. Our goal is to quantify the difference between the eigenvectors of $H$ and the ones of $\PHP$, we use the formalism of density matrices recalled in Section~\ref{sub:Density matrices}.


\begin{theorem}[Error between exact eigenmodes and reduced basis eigenmodes]\label{thm:main_deg_thm} 
	Take a Hilbert space $\cH$, and a closed self-adjoint operator $A$ which is chosen to form the energy norm defined in~\eqref{eq:energy_norm}. Take a closed self-adjoint operator $H$ which eigenmodes will be approximated. Consider an orthogonal projector $\cP$, assume that $H$ and $\PHP$ have at least $\nu$ eigenvalues (counted with multiplicity). We consider $\nu$ eigenmodes of respectively $H$ and $\PHP$, denoted by respectively $(E_\mu,\phi_\mu)$ and $(\cE_\mu,\psi_\mu)$, where $\mu \in \{1,\dots,\nu\}$, ${\nor{\phi_\mu}{} = \nor{\psi_\mu}{} = 1}$ and we assume~\eqref{eq:cE_in_spectrum} and~\eqref{eq:dim_nu} to hold. We define ${\bphi := (\phi_\mu)_{\mu=1}^\nu}$, ${\bpsi := (\psi_\mu)_{\mu=1}^\nu}$, and the corresponding density matrices ${\Gamma := \dm{\bphi}}$, $\Lambda := \dm{\bpsi}$. We assume that $c_\cP, c_A < +\infty$, where those quantities are defined in Section~\ref{sec:definitions_general}, and that all the quantities involved in the following are finite. For $\delta \in \{0,1\}$, we have
 \begin{align}\label{eq:explicit_diff} 
\Gamma - \Lambda = \sum_{\mu=1}^\nu  \pa{(1 + H R_\mu )\cP^\perp \Gamma P_{\psi_\mu}+ P_{\psi_\mu} \Gamma \cP^\perp (1 + R_\mu H)} + \Omega,
\end{align}
where
\begin{multline}\label{eq:bound_Omega} 
	\nor{\Omega}{2,\delta} \le c_A^\delta \nor{ \cP^\perp \Gamma }{2,\delta}^2 +\bpa{1 + \pa{c_A c_\cP^2}^\delta}  \pa{1 + c_A (1+c_A)\nor{A \Lambda}{} }^{2\delta}\nor{ \Gamma - \Lambda }{2,\delta}^2 \\
	+ 2 \pa{c_\cP^\delta \hspace{-0.1cm}+\hspace{-0.1cm} \nu c_A^{\delta}   \norb{\cP^\perp H \Lambda}{}\mymax{1 \le \mu \le \nu} \hspace{-0.1cm} \norb{A^\delta R_\mu}{}}  \pa{c_A \pa{1 + c_A \nor{A \Lambda}{}} }^\delta \hspace{-0.1cm}\nor{\cP^\perp \Gamma}{2,\delta}\hspace{-0.1cm} \nor{\Gamma - \Lambda}{2,\delta}.
\end{multline}
\end{theorem}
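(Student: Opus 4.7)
The plan is to establish~\eqref{eq:explicit_diff} by an orthogonal block decomposition of $\Gamma-\Lambda$ with respect to $\Lambda$, followed by a finer decomposition of the off-diagonal blocks using $\cP$. First, I would write
$$\Gamma-\Lambda=\Lambda^\perp\Gamma\Lambda+\Lambda\Gamma\Lambda^\perp+\Lambda^\perp(\Gamma-\Lambda)\Lambda^\perp+\Lambda(\Gamma-\Lambda)\Lambda.$$
Using $\Gamma^2=\Gamma$ and $\Lambda^2=\Lambda$, the two $\Lambda$-diagonal blocks obey the classical squared-difference identities
$$\Lambda(\Gamma-\Lambda)\Lambda=-\Lambda(\Gamma-\Lambda)^2\Lambda,\qquad \Lambda^\perp(\Gamma-\Lambda)\Lambda^\perp=\Lambda^\perp(\Gamma-\Lambda)^2\Lambda^\perp,$$
which are manifestly of order $\|\Gamma-\Lambda\|_{2,\delta}^2$ and therefore belong to $\Omega$.

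The crux is the off-diagonal block $\Lambda^\perp\Gamma\Lambda=\sum_\mu \Lambda^\perp\Gamma P_{\psi_\mu}$ (and its adjoint $\Lambda\Gamma\Lambda^\perp$). I split $\Lambda^\perp=\cP^\perp+\cP\Lambda^\perp$: the $\cP^\perp\Gamma P_{\psi_\mu}$ piece is the announced leading term, and one notes that $R_\mu\cP^\perp=0$ by definition~\eqref{def:pseudoinv_z_PHP}, so that $(1+HR_\mu)\cP^\perp\Gamma P_{\psi_\mu}=\cP^\perp\Gamma P_{\psi_\mu}$, consistent with the statement. For the remaining piece $\cP\Lambda^\perp\Gamma P_{\psi_\mu}$, which lies in $\Lambda^\perp\cP\cH$, the key identity is
$$\cP\Lambda^\perp\Gamma P_{\psi_\mu}=R_\mu(\cE_\mu-\cP H\cP)\Gamma P_{\psi_\mu},$$
since $R_\mu$ inverts $(\cE_\mu-\cP H\cP)$ on $\Lambda^\perp\cP\cH$. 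Expanding the right-hand side using $H\Gamma=\Gamma H$, $\cP\Gamma=\Gamma-\cP^\perp\Gamma$, and $H P_{\psi_\mu}=\cE_\mu P_{\psi_\mu}+\cP^\perp H P_{\psi_\mu}$ (the latter coming from $\cP H\psi_\mu=\cE_\mu\psi_\mu$), a direct computation produces
$$\cP\Lambda^\perp\Gamma P_{\psi_\mu}=R_\mu H\cP^\perp\Gamma P_{\psi_\mu}-R_\mu\Gamma\cP^\perp H P_{\psi_\mu}.$$
Summing over $\mu$, adding the leading $\cP^\perp\Gamma\Lambda$, adjoining the Hermitian conjugate, and collecting everything yields the announced identity for $\Gamma-\Lambda$.

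For the quantitative bound~\eqref{eq:bound_Omega}, I would handle each contribution to $\Omega$ separately, using Cauchy--Schwarz for Hilbert--Schmidt norms. The pure squared quantity $\|\Gamma-\Lambda\|_{2,\delta}^2$ comes from the diagonal projector identities, while $\|\cP^\perp\Gamma\|_{2,\delta}^2$ arises from factorizing $\cP^\perp\Gamma\cP^\perp$-type pieces as products of $\cP^\perp\Gamma$ with itself (using $\Gamma^2=\Gamma$ and $\cP^\perp\Lambda=0$). The cross term $\|\cP^\perp\Gamma\|_{2,\delta}\|\Gamma-\Lambda\|_{2,\delta}$ is obtained from the resolvent remainder $R_\mu\Gamma\cP^\perp H P_{\psi_\mu}$ after rewriting $R_\mu\Gamma=R_\mu(\Gamma-\Lambda)$ (valid because $R_\mu\Lambda=0$), which extracts the $\|\Gamma-\Lambda\|_{2,\delta}$ factor; the factor $\|\cP^\perp\Gamma\|_{2,\delta}$ then appears via coupling with the $\cP^\perp\Gamma P_{\psi_\mu}$ adjoint piece after summing over $\mu$, together with the bound $\|\cP^\perp H P_{\psi_\mu}\|\le\|\cP^\perp H\Lambda\|$ (since $P_{\psi_\mu}\le\Lambda$). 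The main obstacle---and the most technical part---is propagating these estimates through the energy weight $A^\delta$ when $\delta=1$: one must commute $A$ past $\cP$ (producing $c_\cP$), past $\Lambda$ (producing $\|A\Lambda\|$ together with extra powers of $c_A$), and past each resolvent (producing $\|A^\delta R_\mu\|$), while keeping the prefactors in the exact form of~\eqref{eq:bound_Omega}. This is a careful but essentially mechanical bookkeeping based on the assumptions $c_A,c_\cP,c_H<+\infty$ of Section~\ref{sec:definitions_general}, with the combinatorial factor $\nu$ coming from summation over $\mu\in\{1,\dots,\nu\}$.
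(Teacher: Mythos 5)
Your proposal is correct and reaches the same key algebraic identity as the paper, but by a mildly different route. The paper first splits $\Gamma-\Lambda$ along $\cP\oplus\cP^\perp$ and only afterwards decomposes the block $\cP\pa{\Gamma-\Lambda}\cP$ along $\Lambda\oplus\cP\Lambda^\perp$, whereas you split along $\Lambda\oplus\Lambda^\perp$ first and then refine $\Lambda^\perp=\cP^\perp+\cP\Lambda^\perp$; the resulting collection of blocks is the same, and your identities $\Lambda\pa{\Gamma-\Lambda}\Lambda=-\Lambda\pa{\Gamma-\Lambda}^2\Lambda$ and $\Lambda^\perp\pa{\Gamma-\Lambda}\Lambda^\perp=\Lambda^\perp\pa{\Gamma-\Lambda}^2\Lambda^\perp$ both check out. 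The more substantive difference is the treatment of $\cP\Lambda^\perp\Gamma\Lambda$: the paper introduces the super-operators $\cL B=[\cP H\cP,B]$ and $\cL^+B=-\sum_\mu R_\mu BP_{\psi_\mu}$, verifies that $\cL^+\cL$ is the orthogonal projection onto $\{B=\cP\Lambda^\perp B\Lambda\}$, and applies it to $\Gamma$ using $[H,\Gamma]=0$; you obtain the identical formula $\cP\Lambda^\perp\Gamma P_{\psi_\mu}=R_\mu H\cP^\perp\Gamma P_{\psi_\mu}-R_\mu\Gamma\cP^\perp HP_{\psi_\mu}$ directly from $R_\mu\pa{\cE_\mu-\cP H\cP}=\Lambda^\perp\cP$ together with $[H,\Gamma]=0$ and $\cP H\cP\psi_\mu=\cE_\mu\psi_\mu$, which is more elementary and avoids the Liouvillian formalism entirely (the paper needs that formalism anyway for density-matrix perturbation theory, which is presumably why it is set up here). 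Your remainder $\Omega$ is the same operator as the paper's, and your bounding strategy --- $R_\mu\Gamma=R_\mu\pa{\Gamma-\Lambda}$ for the cross term, $\Gamma=\Gamma^2$ and $\Lambda\cP^\perp=0$ to factor the squared pieces, $\nor{\cP^\perp HP_{\psi_\mu}}{}\le\nor{\cP^\perp H\Lambda}{}$, and commuting $A^\delta$ past $\cP$, $\Lambda$ and $R_\mu$ --- is exactly the paper's; to land on the literal constants of~\eqref{eq:bound_Omega} you must further split $\Lambda^\perp\pa{\Gamma-\Lambda}^2\Lambda^\perp$ into its $\cP^\perp$ and $\cP\Lambda^\perp$ sub-blocks so as to isolate $\cP^\perp\Gamma\cP^\perp$, which you indicate.

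One caution: your derivation (like the paper's own, cf.~\eqref{eq:association}) produces the leading term $\sum_\mu\bpa{(1+R_\mu H)\cP^\perp\Gamma P_{\psi_\mu}+P_{\psi_\mu}\Gamma\cP^\perp(1+HR_\mu)}$, i.e.\ with the opposite operator ordering to the printed~\eqref{eq:explicit_diff}. Your attempted reconciliation via $R_\mu\cP^\perp=0$ does not actually reconcile anything: that identity shows the printed form would collapse to $\sum_\mu\pa{\cP^\perp\Gamma P_{\psi_\mu}+P_{\psi_\mu}\Gamma\cP^\perp}$, dropping precisely the $R_\mu H\cP^\perp\Gamma P_{\psi_\mu}$ contribution that accounts for the block $\cP\Lambda^\perp\Gamma\Lambda$. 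The printed statement appears to have the orderings transposed; the formula your computation yields is the correct one, so this is a defect of the statement rather than of your argument.
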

A proof is given in Section~\ref{sec:Proof of Theorem mauin}. Since $\cP^\perp \Lambda = 0$, we have 
\begin{align*}
\nor{\cP^\perp \Gamma}{2,\delta} = \nor{\cP^\perp A^{-\delta} A^{\delta }\pa{\Gamma - \Lambda}}{2,\delta} \le (1 + c_\cP)^\delta \nor{\Gamma - \Lambda}{2,\delta},
\end{align*}
hence we see in~\eqref{eq:bound_Omega} that $\Omega$ is quadratic in $\nor{\Gamma - \Lambda}{2,\delta}$, and thus negligible in~\eqref{eq:explicit_diff} when $\nor{\Gamma - \Lambda}{2,\delta}$ is small. The leading term for $\Gamma - \Lambda$ is thus $\sum_{\mu=1}^\nu  \pa{(1 + H R_\mu )\cP^\perp \Gamma P_{\psi_\mu}+ \textup{adj.}}$, as emphazised in~\eqref{eq:explicit_diff}, where ``\textup{adj.}'' denotes the adjoint operator of the previous one. Moreover, $\cP$ is in general chosen such that $\Gamma$ approximately belongs to $\cP \cH$, hence $\cP^\perp \Gamma$ is small.

\subsection{One eigenmode}%
\label{sub:One eigenmode}

In the case where we treat only one eigenmode, one can obtain more precision error bounds, which is the object of the following result. Since we only consider one eigenmode, we drop the subscripts 1 labeling the different eigenvectors and write $\phi := \phi_1$, $\psi := \psi_1$, $E := E_1$, $\cE := \cE_1$, and $R := R_1$. We recall that $R = \pa{\cE - \cP H \cP}_\perp^{-1}$, see~\eqref{eq:def_R}.

\begin{proposition}[More detail in the non-degenerate case]\label{prop:non-deg_main_thm} 
	Consider the same assumptions as in Theorem~\ref{thm:main_deg_thm} for $\nu = 1$ ignoring the subscripts $1$ such that $(E,\phi)$ is an eigenmode of $H$ and $(\cE,\psi)$ an eigenmode of $\PHP$, so that ${H \phi = E \phi}$ and ${\cP H \cP \psi = \cE \psi}$. In a gauge where $\ps{\psi,\phi} \in \R$, there holds
\begin{align}
	\phi - \psi &= \pa{1 + R H} \cP^\perp \phi - \tfrac 12 \nor{\phi - \psi}{}^2 \psi + \pa{\cE - E} R \pa{\phi - \psi}, \label{eq:equality_diff} \\
	E - \cE &= \ps{\cP^\perp \phi, \pa{\cE - H} \pa{1 + R H} \cP^\perp \phi} + (E-\cE) \nor{\phi - \psi}{}^2  \label{eq:equality_E_diff} \\
		& \hspace{-0.5cm}- \nor{\phi - \psi}{}^2 \re \ps{\cP^\perp \phi, (H-E) (\phi - \psi)}+\pa{E-\cE}^2 \ps{\phi - \psi, R \pa{\phi - \psi}}.\nonumber
\end{align}
\end{proposition}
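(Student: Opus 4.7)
My plan is to prove~\eqref{eq:equality_diff} by a direct geometric decomposition and then to obtain~\eqref{eq:equality_E_diff} by substituting~\eqref{eq:equality_diff} into a scalar identity for $E-\cE$. Throughout I use the three-way orthogonal decomposition $\cH = \Lambda \cH \oplus \Lambda^\perp \cP \cH \oplus \cP^\perp \cH$ with $\Lambda = P_\psi$, and repeatedly invoke the structural properties $R = \cP R = R \cP$, $R\cP^\perp = 0$, $R\psi = 0$ and $R(\cE - \cP H \cP) = \Lambda^\perp \cP$ (as operators on $\cH$), all immediate from~\eqref{def:pseudoinv_z_PHP}.

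For~\eqref{eq:equality_diff}: write $\phi = a\psi + \Lambda^\perp\cP\phi + \cP^\perp\phi$ with $a := \ps{\psi,\phi}$. The gauge $a \in \R$ combined with $\nor{\phi}{} = \nor{\psi}{} = 1$ gives $a - 1 = -\tfrac 12 \nor{\phi-\psi}{}^2$. Projecting $H\phi = E\phi$ by $\cP$ yields $(\cP H \cP - \cE)\cP\phi = (E-\cE)\cP\phi - \cP H\cP^\perp\phi$; since $(\cP H\cP - \cE)\psi = 0$, the left-hand side reduces to $(\cP H \cP - \cE)\Lambda^\perp \cP \phi$. Applying $R$ and using $R(\phi-\psi) = R\phi = R\cP\phi$ isolates $\Lambda^\perp \cP \phi = RH\cP^\perp\phi + (\cE - E) R(\phi-\psi)$, and substituting into the decomposition of $\phi - \psi$ produces~\eqref{eq:equality_diff}.

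For~\eqref{eq:equality_E_diff}: I first establish the scalar identity $\ps{\phi-\psi,(E-H)(\phi-\psi)} = E - \cE$ by expanding and using $(H-E)\phi = 0$, $\ps{\psi, H\psi} = \cE$ and self-adjointness of $H - E$. Splitting $E - H = (E-\cE) + (\cE-H)$ rewrites this as $E - \cE = (E-\cE)\nor{\phi-\psi}{}^2 + \ps{\chi,(\cE-H)\chi}$ with $\chi := \phi-\psi$, which already produces the second term of~\eqref{eq:equality_E_diff}. I then substitute~\eqref{eq:equality_diff} into both arguments of $\ps{\chi,(\cE-H)\chi}$; with $u := (1+RH)\cP^\perp\phi$, $w := (\cE-E)R\chi$ and $\rho := \tfrac12\nor{\chi}{}^2$, the resulting nine-term expansion has three vanishing contributions (the $\psi$--$\psi$ term because $\ps{\psi,H\psi} = \cE$, and the two $\psi$--$w$ terms because $\cP^\perp w = 0$ and $(\cE-H)\psi = -\cP^\perp H\psi$), and the remaining six pair up by self-adjointness into the four real quantities $\ps{u,(\cE-H)u}$, $\ps{w,(\cE-H)w}$, $-2\rho\re\ps{u,(\cE-H)\psi}$ and $2\re\ps{u,(\cE-H)w}$.

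It remains to identify these four with three terms of~\eqref{eq:equality_E_diff} and to show that the fourth vanishes. The $w$--$w$ term equals $(E-\cE)^2\ps{\chi, R\chi}$ after using the identity $(\cE-H)R = \Lambda^\perp\cP - \cP^\perp HR$ and noting that $R\chi \in \Lambda^\perp\cP\cH$ annihilates the second piece. The $u$--$\psi$ cross term matches $-\nor{\chi}{}^2\re\ps{\cP^\perp\phi,(H-E)\chi}$ after computing $\cP^\perp u = \cP^\perp\phi$ (from $R\cP^\perp = 0$) together with $(H-E)\chi = -\cP^\perp H\psi + (E-\cE)\psi$. The main obstacle, and the heart of the proof, is verifying the remaining two identities: $\ps{u,(\cE-H)u} = \ps{\cP^\perp\phi,(\cE-H)(1+RH)\cP^\perp\phi}$ and $\ps{u,(\cE-H)w} = 0$. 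Both rely on showing that contributions of the form $\ps{H\cP^\perp\phi, RH\cP^\perp\phi}$ and the pair $\ps{RH\cP^\perp\phi, \chi}$ versus $\ps{\cP^\perp\phi, HR\chi}$ cancel, which follows from the reality of $\ps{v, Rv}$ for self-adjoint $R$ combined with $H^* = H$ and the algebraic identities above. Summing then yields~\eqref{eq:equality_E_diff}.
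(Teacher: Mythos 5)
Your proposal is correct and follows essentially the same route as the paper: identity \eqref{eq:equality_diff} is obtained by the same decomposition of $\phi-\psi$ along $\C\psi$, $\Lambda^\perp\cP\cH$ and $\cP^\perp\cH$ (projecting the eigenvalue equations and applying $R$), and \eqref{eq:equality_E_diff} by substituting \eqref{eq:equality_diff} into the quadratic-form identity $E-\cE=\ps{\phi-\psi,(E-H)(\phi-\psi)}$. The only difference is organizational — you expand symmetrically in both slots and pair terms by self-adjointness, whereas the paper substitutes into one slot and computes $(\cE-H)(\phi-\psi)$ separately — and I verified that your claimed cancellations (in particular $\ps{u,(\cE-H)w}=0$ and $\ps{w,(\cE-H)w}=(E-\cE)^2\ps{\phi-\psi,R(\phi-\psi)}$, which hinge on $\cP(\cE-H)R=\Lambda^\perp\cP$ rather than on any misidentification of $\Lambda^\perp\cP(\phi-\psi)$ with $R(\phi-\psi)$) do hold.
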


As will be made precise in Remark~\ref{rq:cs_vec}, the leading term of the error $\phi - \p$ is $\pa{1 + R H} \cP^\perp \phi$ while the remaining part is at least quadratic in $\nor{\phi - \p}{} $. 
We emphasize that for a good choice of $\cP$, $\phi$ is close to $\cP \cH$ which implies that $\cP^\perp \phi$ is small. The proof of Proposition~\eqref{prop:non-deg_main_thm} is presented in Section~\ref{sec:Proof of Theorem non deg}.

\subsection{Remarks}%
\label{sub:Remarks}
Let us now proceed with some remarks.

\begin{remark}[Invariance under unitary transforms]
	All the quantities involved in~\eqref{eq:explicit_diff}, \eqref{eq:bound_Omega} are invariant under the unitary transformations $\bphi \rightarrow U \bphi$ and $\bpsi \rightarrow V \bpsi$, for any $U,V \in \cU_\nu$, except for the term $P_{\psi_\mu}$ in~\eqref{eq:explicit_diff}. Nevertheless, this term is invariant under the transformation $\pa{\psi_\mu}_{\mu=1}^\nu \rightarrow \bpa{\vp_\mu}_{\mu=1}^\nu = V \pa{\psi_\mu}_{\mu=1}^\nu$ for any $V \in \cU_\nu$ as long as $\vp_\mu \in \Ker \pa{H - E_\mu}$ for all $\mu \in \{1,\dots,\nu\}$.
\end{remark}

\begin{remark}[Error in individual eigenvectors and in sum of eigenvalues of the cluster]
	From Theorem~\ref{thm:main_deg_thm}, for $\nor{\Gamma - \Lambda}{2,\delta}$ small enough and $\delta \in \{0,1\}$,
\begin{align}\label{eq:bbb} 
	\nor{\Gamma - \Lambda}{2,\delta} \le 4 \nu \pa{c_A\nor{A \Lambda }{}}^\delta  \nor{\cP^\perp \Gamma}{2,\delta} \mymax{1 \le \mu \le \nu} \nor{A^\delta (1+ R_\mu H) \cP^\perp A^{-\delta}}{},
\end{align}
see Section~\ref{sub:Inequalities} to have more details on how to obtain this inequality. Moreover, by Lemma~\ref{lem:compare_errors}, there exists a rotation $U \in \cU_\nu$ such that
\begin{align}\label{eq:bound_rot} 
	\sum_{\mu=1}^{\nu} \nor{A\bpa{\bphi_\mu - \pa{U \bpsi}_\mu}}{} \le c \nor{A \cP^\perp \Gamma}{2}
\end{align}
for some constant $c$, and again by Lemma~\ref{lem:compare_errors}, and yet for another constant $c$, the error in the sums of eigenvalues is quadratic, that is
\begin{align}\label{eq:sum_eigenvals_converge} 
	\ab{\sum_{\mu=1}^{\nu} \pa{E_\mu - \cE_\mu}} \le c \nor{A \cP^\perp \Gamma}{2}^2.
\end{align}
Hence the errors in individual eigenvectors (left-hand side of~\eqref{eq:bound_rot}) and in the sum of eigenvalues (left-hand side of~\eqref{eq:sum_eigenvals_converge}) are controlled by the key quantity $\nor{A \cP^\perp \Gamma}{}$.
\end{remark}

\begin{remark}[Main consequences of Proposition~\ref{prop:non-deg_main_thm}]\label{rq:cs_vec} 
If $\nor{\phi - \psi}{e}$ is small enough, then Proposition~\ref{prop:non-deg_main_thm} yields
\begin{align}
	\nor{\phi - \psi}{e,\delta} &\le 2 \pa{1 + c_H \nor{ARA}{}}^\delta \nor{A^\delta \cP^\perp \phi}{}, \label{eq:ineq_cons} \\
	\ab{E - \cE}& \le 4\pa{c_H + c_A^2 \ab{E}} \pa{1 + c_H \nor{ARA}{} }^2 \nor{A\cP^\perp \phi}{}^2, \label{eq:proof_E_cE_bounded}
\end{align}
see Section~\ref{sub:Inequalities} to have more details on the derivation of those inequalities. Thus those errors are controlled by the key quantity $\nor{A\cP^\perp \phi}{}$.
\end{remark}

\begin{remark}[Scaling of the different terms in~\eqref{eq:equality_diff} and~\eqref{eq:equality_E_diff}]
The term $\cP^\perp \phi = \cP^\perp \pa{\phi - \psi}$ is controlled by $\nor{\phi - \psi}{e}$ in norm. 
When $\nor{\phi - \psi}{e}$ is small, the leading term in~\eqref{eq:equality_diff} is $\pa{1 + R H} \cP^\perp \phi$. Then, the second leading term is of order 2 and $\pa{\cE - E} R \pa{\phi - \psi}$ is of order 3 in the eigenvector error. 
In~\eqref{eq:equality_E_diff}, the leading term is $\ps{\cP^\perp \phi, \pa{\cE - H} \pa{1 + R H} \cP^\perp \phi}$ (which is of order 2), the second term is of order 4, the third term is of order 4 and the last one of order 6, all measured in the eigenvector error $\nor{\phi - \psi}{e}$. Hence we can write
\begin{align*}
\phi - \psi = (1+RH) \cP^\perp \phi + O_{\nor{\phi - \psi}{e,\delta} \rightarrow 0}\pa{\nor{\phi - \psi}{e,\delta}^2}.
\end{align*}
\end{remark}

\begin{remark}[General approximability]
Consider the vector case corresponding to Proposition~\ref{prop:non-deg_main_thm}. We numerically see that making $\cP$ larger by adding more vectors to $\cP \cH$ decreases the error, in general. This can be expected from the form of the leading term $\pa{1 + R H} \cP^\perp \phi$, in which, for any vector $\vp \in \cH$, $\cP^\perp \vp$ decreases. However, as will be seen in Section~\ref{sec:illustration}, there are some exceptional cases where making $\cP$ larger increases the error.
\end{remark}

\begin{remark}[Variational formulation]
	One can choose a variational approach. For instance if $\mu \in \Nz$, and $E_\mu$ and $\cE_\mu$ are defined as the $\mu^{\text{th}}$ lowest eigenvalues (which we assume to exist) of resp. $H$ and $\PHP$, then
\begin{align}\label{eq:variational} 
	E_\mu = \myinf{B \subset \cH \\ \dim B = \mu +1} \mymax{\vp \in B \\ \nor{\vp}{}=1} \ps{\vp, H \vp} \le \cE_\mu = \myinf{B \subset \cP \cH \\ \dim B = \mu +1} \mymax{\vp \in B \\ \nor{\vp}{}=1} \ps{\vp, H \vp}.
\end{align}
Depending on $\cP$, it is not necessarily $\cE_\mu$ which is close to $E_\mu$, but it can be $\cE_\nu$ for $\nu < \mu$. The drawback of this approach is that it is not able to treat eigenmodes being above the continuous spectrum.

Moreover, in the physics literature, the reduced basis method is often called the variational approximation, as we will discuss in Appendix~\ref{eq:variational}. However, we do not used this last terminology because of the fact that the variational point of view~\eqref{eq:variational} can only treat the isolated eigenvalues below the essential spectrum.

Note that the variational formulation can also refer to the fact that $\ps{v,H\phi}=E \ps{v,\phi}$ for any $v \in \cH$ and $\ps{u,H\psi} = \cE \ps{u,\psi}$ for any $u \in \cP \cH$.
\end{remark}

\begin{remark}[Comparison to the Feshbach-Schur operator]
	Using the effective operator $\PHP$ can be put in perspective with the Feshbach-Schur technics, see~\cite{Schur18,BacFroSig98b,DusSigSta21} and~\cite[Section 3.1]{SjoZwo07}, in which the effective operator is $\restr{\pa{\cP (H + H r_\perp H) \cP}}{\cP \cH \rightarrow \cP \cH}$, where $r_\perp := \pa{\pa{E - H}_{\mkern 1mu \vrule height 2ex\mkern2mu \cP^\perp \cH \rightarrow \cP^\perp \cH}}^{-1}$.
\end{remark}

\section{Definitions and assumptions for RBM+PT}%
\label{sec:definitions_rbm_pt}

In this section we present the context of perturbation theory, to prepare to couple it with the reduced basis method. We start by introducing some definitions and making some assumptions, which will enable to apply Rellich's theorem and Theorems~\ref{thm:main_deg_thm} and~\ref{prop:non-deg_main_thm}.

\subsection{Analytic family of operators}%
\label{sub:Analytic family of operators}

We present here assumptions which will be sufficient to use the Rellich theorem on the existence of analytic eigenmodes.

First, let us take a closed self-ajoint operator $H^0$ acting on a separable Hilbert space $\cH$, such that $\sigma(H^0) \neq \R$, so there exists $r \in \R$ and $\ep > 0$ such that 
\begin{align}\label{eq:non_empty_resolvent} 
\sigma(H^0) \, \cap \, ]r-\ep,r+\ep[ \, = \varnothing.
\end{align}
As in Section~\ref{sub:First definitions}, we take a self-adjoint operator $A$ acting on $\cH$, implementing the energy norm. For instance, one could take $A$ to be $\ab{H^0 - r}^{\f 12}$ but we do not necessarily make this choice.

We then choose a simple case for the family of operators, that is, we consider $M \in \Nz$, a finite number of self-adjoint operators $H^n$ for $n \in \{0,\dots,M\}$ such that 
 where $D(\cdot)$ denotes the domain of an operator, and such that
\begin{align}\label{eq:cinfty0} 
\mymax{0 \le n \le M} \nor{H^n \pa{H^0  - r}^{-1}}{} < +\infty, \qquad \mymax{0 \le n \le M} \nor{A^{-1} H^n A^{-1}}{} < +\infty.
\end{align}
We also define $H^n := 0$ for any $n \ge M +1$. Finally, we define the analytic family of operators $H(\lambda)$ with
\begin{align*}
H(\lambda) := \sum_{n=0}^{+\infty} \lambda^n H^n.
\end{align*}

\subsection{Choose a set of eigenvalues of $H(\lambda)$}%
\label{sub:choose_set_eigvals}

Take $\nu \in \Np$. Let us assume that $H^0$ has at least $\nu$ eigenvalues 
 \begin{align}\label{eq:E_discr_spec} 
\{E_\mu^0\}_{\mu =1}^\nu \subset \sigma_{\text{d}}(H^0)
 \end{align}
 in the discrete spectrum, counted with multiplicity but not necessarily sorted in increasing order. By~\eqref{eq:spectrum_gap_general}, there exists $\kappa_{H^0} > 0$ such that
\begin{align}\label{eq:spectral_free} 
\pa{\sigma(H^0) \backslash \{ E_\mu^0 \}_{\mu=1}^\nu} \cap \bpa{\cup_{\mu=1}^\nu \; ]E_\mu^0 - \kappa_{H^0} , E_\mu^0 + \kappa_{H^0}[ \;} = \varnothing,
\end{align}
let us assume that
\begin{align}\label{eq:complete_eigen} 
\dim \; \bigoplus_{\mu = 1}^\nu \Ker \pa{H^0 - E_{\mu}^0} = \nu.
\end{align}
Hence the spectrum of $H^0$ in $\cup_{\mu=1}^\nu \; ]E_\mu^0 - \kappa_{H^0} , E_\mu^0 + \kappa_{H^0}[$ is formed by exactly $\nu$ eigenmodes. 

Rellich's theorem states that the isolated clusters of eigenmodes of parameter dependent operators are also analytic in $\lambda$, see \cite[Theorem XII.3 p4]{ReeSim4}, \cite[Problem XII.17, p71]{ReeSim4}, \cite[Theorem 1.4.4 p25]{Simon15} and~\cite[Theorem 1 p21]{Baumgartel85} for instance. The extension to infinite-dimensional space also holds under some technical assumptions, see~\cite{Kato}, ~\cite[Lemma p16]{ReeSim4}, ~\cite[Theorem XII.8 p15]{ReeSim4} and \cite[Theorem XII.13 p22]{ReeSim4}, that are satisfied since we assumed~\eqref{eq:cinfty0}, see more details about this on Section~\ref{sub:A note on analyticity}.

We denote by $\pa{E_\mu(\lambda),\phi_\mu(\lambda)}$ the eigenmodes of $H(\lambda)$, analytic in $\lambda$, respecting $E_\mu(\lambda) = E^0_\mu$ and $\ps{\phi_\mu(\lambda) , \phi_\alpha(\lambda)} = \delta_{\mu\alpha}$ for any $\mu,\alpha \in \{1,\dots,\nu\}$. The basis of the vectors is not fixed by those conditions, meaning that taking smooth maps $\theta_\mu : \R \rightarrow \R$, the eigenvectors $e^{i\theta_\mu(\lambda)} \phi_\mu(\lambda)$ also respect the previous conditions. 

For any $\lambda \in ]-\lambda_0,\lambda_0[$, we define $\Gamma(\lambda) := \dm{\bphi(\lambda)}$ (we recall that the definition of density matrices is in~\eqref{eq:def_dm}) and the partial inverse
\begin{align}\label{def:pseudoinv_lambda_K}
K_\mu(0) := 
\left\{
\begin{array}{ll}
	\pa{\pa{E_\mu(0) -  H(0) }_{\mkern 1mu \vrule height 2ex\mkern2mu  \pa{\Gamma(0)}^\perp \cH \rightarrow \pa{\Gamma(0)}^\perp \cH}}^{-1} & \mbox{on }  (\Gamma(0))^\perp \cH, \\
0 & \mbox{on } \Gamma(0) \cH,
\end{array}
\right.
\end{align}
extended by linearity on $\cH$. By~\eqref{eq:spectral_free} we have $\nor{K_\mu(0)}{}\le \kappa_{H^0}^{-1} $, and we assume that
\begin{align}\label{eq:AKA_bound} 
	\mymax{1 \le \mu \le \nu} \nor{A K_\mu(0) A}{} < +\infty.
\end{align}

\subsection{Starting point for $\PHlP$}%
\label{sub:Starting point for the reduced operator}

The starting point of the analysis of the reduced operator will be $\lambda = 0$, on which the eigenmodes under study of the exact and reduced operators are equal. So the first step consists in exploiting this fact.

Let us consider an orthogonal projection $\cP$, where $\cP \cH$ can be infinite-dimensional. The hypothesis of RBM+PT, which we will see later, implies that the exact eigenvector $\phi_\mu(0)$ belongs to $\cP \cH$, hence $\cP H(0) \cP \phi_\mu(0) = E_\mu(0) \phi_\mu(0)$, so $\phi_\mu(0)$ is also an eigenvector of $(\cP H(0) \cP)_{\cP \cH \rightarrow \cP \cH}$ with eigenvalue $E_\mu(0)$. We need to assume that 
 \begin{align}\label{hypo:stable_spectrum}
	 \{E_\mu(0)\}_{\mu =1}^\nu \subset \sigma_{\text{d}} \bpa{\restr{\pa{\cP H(0) \cP}}{\cP \cH \rightarrow \cP \cH} }
 \end{align}
and
\begin{align}\label{eq:stable_eigenspaces} 
\dim \; \cP \cH \cap \bigoplus_{\mu=1}^\nu \Ker ( \cP H(0) \cP - E_\mu(0)) = \nu.
\end{align}
Those last assumptions mean that the reduction from $\cH$ to $\cP \cH$ does not produce spectral pollution close to the $E_\mu(0)$'s for $H(0)$.

\subsection{Analytic branches for $\PHlP$}%
\label{sub:Analytic branches}

To be able to apply Rellich's theorem for $\PHlP$, we make several assumptions. Let us assume that $\sigma\pa{\PHlOP} \neq \R$, so there exists $r_\cP \in \R$ and $\ep_\cP > 0$ such that 
\begin{align}\label{eq:non_empty_resolvent_P} 
\sigma\pa{\PHlP} \, \cap \, ]r_\cP-\ep_\cP,r_\cP+\ep_\cP[ \, = \varnothing,
\end{align}
uniformly in $\lambda \in ]-\lambda_0, \lambda_0[$ for some $\lambda_0 > 0$. Assume that
\begin{align}\label{eq:cinfty0_P} 
	\mysup{n \in \Np} \nor{\cP H^n \cP \pa{\pa{\cP H^0 \cP  - r_\cP}_{\mkern 1mu \vrule height 2ex\mkern2mu \cP \cH \rightarrow \cP \cH}}^{-1}}{} < +\infty.
\end{align}
Rellich's theorem ensures the existence of $\nu$ eigenmodes $\pa{\cE_\mu(\lambda), \psi_\mu(\lambda)}_{\mu =1}^{ \nu}$ of $\PHlP$, analytic in $\lambda \in ]-\lambda_0,\lambda_0[$ where $\lambda_0 >0$, such that $\cE_\mu(0) = E_\mu(0)$, $\psi_\mu(0) = \phi_\mu(0)$ and $\ps{\psi_\mu(\lambda) , \psi_\alpha(\lambda)} = \delta_{\mu\alpha}$ for any $\mu,\alpha \in \{1,\dots,\nu\}$. We take $\lambda_0$ small enough so that for some $\kappa_{H} > 0$ (which does not depend on $\lambda$) and any $\lambda \in ]-\lambda_0 , \lambda_0[$,
\begin{multline}\label{eq:non_deg_lambda}
	\bpa{\sigma\bpa{\PHlP} \backslash \{\cE_\mu(\lambda)\}_{\mu=1}^\nu} \\
\cap \bpa{\cup_{\mu=1}^\nu \; ]\cE_\mu(\lambda) - \kappa_{H}  , \cE_\mu(\lambda) + \kappa_{H} [ \;} = \varnothing,
\end{multline}
meaning that the rest of the spectrum remains far from $\{\cE_\mu(\lambda)\}_{\mu=1}^\nu$, uniformly in $\lambda$. Together with~\eqref{eq:stable_eigenspaces}, this implies that for any $\lambda \in ]-\lambda_0 , \lambda_0[$,
\begin{align*}
\dim \; \cP \cH \cap \bigoplus_{\mu=1}^\nu \Ker ( \cP H(\lambda) \cP - E_\mu(\lambda)) = \nu.
\end{align*}
For any $\lambda \in ]-\lambda_0 , \lambda_0[$ we can hence define
\begin{multline}\label{def:pseudoinv_R_lambda}
R_\mu(\lambda) \\
:= 
\left\{
\begin{array}{ll}
\pa{\pa{\cE_\mu(\lambda) - \cP H(\lambda) \cP}_{\mkern 1mu \vrule height 2ex\mkern2mu \Lambda(\lambda)^\perp \cP \cH \rightarrow \Lambda(\lambda)^\perp \cP \cH}}^{-1} & \mbox{on } \Lambda(\lambda)^\perp \cP \cH, \\
0 & \mbox{on } \Lambda(\lambda) \cH \oplus \cP^\perp(\lambda) \cH,
\end{array}
\right.
\end{multline}
extended by linearity on $\cH$. From~\eqref{eq:non_deg_lambda} we have $\nor{R_{\mu}(\lambda)}{} \le \kappa^{-1}_{H(\lambda)}$, and we assume that there exists $M_{\cP,R} >0 $ such that
\begin{align}\label{eq:hypo_RP} 
\nor{A R_{\mu}(\lambda) A^{-1}}{} \le c_{\cP,R},
\end{align}
uniformly in $\lambda \in ]-\lambda_0,\lambda_0[$ and $\mu \in \{1,\dots,\nu\}$.

\section{Main results on RBM+PT}%
\label{sec:Application to RBM+PT}

We now apply the results of the previous section to a parameter dependent operator $H(\lambda)$, which is the setting of RBM+PT. We refer to Figure~\ref{fig:main_idea} to illustrate our reasoning. 

\subsection{Heuristics}%
\label{sub:Heuristics}

Let us first explain briefly the idea behind RBM+PT, at order $\ell \in \Nz$. Consider an eigenvector $\phi(\lambda) = \sum_{n=0}^{+\infty} \lambda^n \phi^n$ of $H(\lambda)$. If $\phi^0,\phi^1,\dots,\phi^\ell \in \cP\cH$, then $\cP \cH$ contains $\vp(\lambda) := \sum_{n=0}^{\ell} \lambda^n \phi^n$, i.e. $\cP \cH$ is able to produce this perturbative approximation vector, so we can expect RBM+PT to be at least as good as PT. By~\eqref{eq:equality_diff} (or~\eqref{eq:ineq_cons}) we have that $\phi(\lambda)-\psi(\lambda)$ is controlled by the key quantity
\begin{align*}
\cP^\perp \phi(\lambda) = \sum_{n=0}^{+\infty} \lambda^n \cP^\perp \phi^n \underset{\substack{\phi^n \in \cP\cH \\ \forall 0 \le n \le \ell}}{=} \;\sum_{n=\ell+1}^{+\infty} \lambda^n \cP^\perp \phi^n = \lambda^{\ell +1} \cP^\perp \phi^{\ell +1} + O(\lambda^{\ell+2}).
\end{align*}
So the leading order of the error in RBM+PT is controlled by $\lambda^{\ell +1} \cP^\perp \phi^{\ell +1}$, while the leading order of the error in PT is $\lambda^{\ell +1} \phi^{\ell +1}$. The acceleration factor of RBM+PT with respect to PT is thus approximately $\nor{\phi^{\ell +1}}{} / \nor{\cP^\perp \phi^{\ell +1}}{}$.

Locally around $\lambda = 0$ adding the derivatives of $\phi(\lambda)$ at $\lambda = 0$ is the optimal way of making $\cP \cH$ grow, in the goal of modeling $\phi(\lambda)$.

\subsection{Clusters of eigenmodes}%
\label{sub:Clusters of eigenmodes}

For any $\lambda \in ]-\lambda_0,\lambda_0[$, we recall that $\Gamma(\lambda) := \dm{\bphi(\lambda)}$. For any $n \in \Nz$ and any $\mu \in \{1,\dots,\nu\}$, we define
\begin{align}\label{eq:def_der} 
	\phi_\mu^n := \frac{1}{n!} \pa{\frac{\d^n}{\d \lambda^n}  \phi_\mu(\lambda)}_{\mkern 1mu \vrule height 2ex\mkern2mu \lambda = 0}, \qquad \qquad  \Gamma^{n}  := \frac{1}{n!}\pa{\frac{\d^{n}}{\d \lambda^{n}}  \Gamma(\lambda)}_{\mkern 1mu \vrule height 2ex\mkern2mu \lambda = 0}.
\end{align}
Section~\ref{sec:density matrix perturbation theory} is dedicated to the study of those coefficients $\Gamma^n$, see Proposition~\ref{prop:dmpt} to see how to obtain them with recursive formulas. Let us also define
\begin{align*}
\xi_{\textup{RBM+PT},\ell}^{\textup{deg}} &:= \nor{\sum_{\mu=1}^{\nu} \pa{1 + R_\mu(0)H(0)} \cP^\perp \Gamma^{\ell +1}P_{\phi_\mu(0)} + \text{adj.}  }{2,\delta}.
\end{align*}

The main theorem of this section is about the closeness of the density matrix $\Gamma(\lambda)$ associated to the exact operator $H(\lambda)$ with the one of the approximate operator $\PHlP$, when $\cP \cH$ contains the first $\ell + 1$ derivatives of $\Gamma(\lambda)$.

\begin{corollary}[RBM+PT in the perturbative regime, for clusters of eigenmodes]\label{cor:main_ec_dm}
As in Section~\ref{sub:Analytic family of operators}, consider a family $H(\lambda) := \sum_{n=0}^{M} \lambda^n H^n$, where $(H^n)_{0 \le n \le M}$ are closed self-adjoint operators. Assume~\eqref{eq:non_empty_resolvent}, \eqref{eq:cinfty0}, \eqref{eq:E_discr_spec}, \eqref{eq:complete_eigen}, \eqref{eq:AKA_bound} on the $H^n$'s, and consider an orthogonal projector $\cP$ such that~\eqref{hypo:stable_spectrum}, ~\eqref{eq:stable_eigenspaces}, \eqref{eq:non_empty_resolvent_P}, \eqref{eq:cinfty0_P} and~\eqref{eq:hypo_RP}. Consider $\nu$ analytic families of eigenmodes $\pa{E_\mu(\lambda), \phi_\mu(\lambda)}_{\mu=1}^\nu$ of $H(\lambda)$ with the phasis such that $\ps{\phi_\mu(\lambda),\phi_\alpha(\lambda)}{}=\delta_{\mu\alpha}$. Then for $\lambda$ small enough, $\PHlP$ has $\nu$ eigenmodes $\pa{\cE_\mu(\lambda), \psi_\mu(\lambda)}_{\mu=1}^\nu$, analytic in $\lambda$ such that $\cE_\mu(0) = E_\mu(0)$, $\psi_\mu(0) = \phi_\mu(0)$ and $\ps{\psi_\mu(\lambda), \psi_\alpha(\lambda)}{}= \delta_{\mu\alpha} $ for any $\mu,\alpha \in \{1,\dots,\nu\}$. We define $\bphi(\lambda) := \pa{\phi(\lambda)}_{\mu=1}^\nu$, $\bpsi(\lambda) := \pa{\psi(\lambda)}_{\mu=1}^\nu$, and the density matrices $\Gamma(\lambda) := \dm{\bphi(\lambda)}$ and $\Lambda(\lambda) := \dm{\bpsi(\lambda)}$ (see the definition of $\cD$ in~\eqref{eq:def_dm}). Given $\ell \in \Nz$, if
\begin{align}\label{eq:cond_der} 
\forall n \in \{0,\dots,\ell\}, \qquad \im \Gamma^n \subset \cP \cH,
\end{align}
then there exists $\lambda_0 > 0$ such that for any $\lambda \in ]-\lambda_0,\lambda_0[$ and $\delta \in \{0,1\}$,
\begin{align}\label{eq:eigendm_bound}
\ab{\nor{\Gamma(\lambda) - \Lambda(\lambda)}{2,\delta} - \ab{\lambda}^{\ell +1} \xi^{\textup{deg}}_{\textup{RBM+PT},\ell}} &\le c \pa{\ab{\lambda} b}^{\ell +2},
\end{align}
where $b$ and $c$ are independent of $\lambda$ and $\ell$. 
\end{corollary}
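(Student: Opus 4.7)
The plan is to invoke Theorem~\ref{thm:main_deg_thm} at each value of $\lambda$ in a neighborhood of $0$ and then expand the resulting identity in powers of $\lambda$ using the analyticity of $\Gamma(\lambda)$, $\Lambda(\lambda)$, $R_\mu(\lambda)$, and $\psi_\mu(\lambda)$ guaranteed by Rellich's theorem. The decisive input is that hypothesis~\eqref{eq:cond_der} forces $\cP^\perp \Gamma^n = 0$ for $n \in \{0,\dots,\ell\}$, so the Taylor expansion
\[
\cP^\perp \Gamma(\lambda) = \lambda^{\ell+1} \cP^\perp \Gamma^{\ell+1} + \sum_{n \ge \ell+2} \lambda^n \cP^\perp \Gamma^n
\]
shows that $\nor{\cP^\perp \Gamma(\lambda)}{2,\delta} = O(|\lambda|^{\ell+1})$ in the relevant norm.

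First, I would apply~\eqref{eq:explicit_diff} with all operators taken at parameter $\lambda$, the assumptions of Section~\ref{sub:Analytic branches} guaranteeing uniform validity on a small interval $]-\lambda_0,\lambda_0[$, and write
\[
\Gamma(\lambda) - \Lambda(\lambda) = T(\lambda) + \Omega(\lambda),
\]
where $T(\lambda) := \sum_{\mu=1}^\nu \bpa{(1 + H(\lambda) R_\mu(\lambda)) \cP^\perp \Gamma(\lambda) P_{\psi_\mu(\lambda)} + \textup{adj.}}$. Since each term appearing in the bound~\eqref{eq:bound_Omega} for $\Omega$ involves either $\nor{\cP^\perp \Gamma(\lambda)}{2,\delta}^2$, $\nor{\Gamma(\lambda) - \Lambda(\lambda)}{2,\delta}^2$, or their product, a bootstrap argument (using that both quantities are $O(|\lambda|^{\ell+1})$) yields $\nor{\Omega(\lambda)}{2,\delta} = O(|\lambda|^{2\ell+2})$, which is negligible compared to the target order $|\lambda|^{\ell+2}$ in~\eqref{eq:eigendm_bound}.

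Second, I would analytically expand $T(\lambda)$ around $\lambda = 0$. Writing $H(\lambda) R_\mu(\lambda) = H(0) R_\mu(0) + O(\lambda)$ in the appropriate operator norm and $P_{\psi_\mu(\lambda)} = P_{\phi_\mu(0)} + O(\lambda)$, together with the expansion of $\cP^\perp \Gamma(\lambda)$ above, gives
\[
T(\lambda) = \lambda^{\ell+1} T_{\ell+1} + O(|\lambda|^{\ell+2}),
\]
where $T_{\ell+1} := \sum_{\mu=1}^\nu \bpa{(1 + R_\mu(0)H(0)) \cP^\perp \Gamma^{\ell+1} P_{\phi_\mu(0)} + \textup{adj.}}$ has $\nor{\cdot}{2,\delta}$-norm equal to $\xi_{\textup{RBM+PT},\ell}^{\textup{deg}}$. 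Combining this with the first step through the reverse triangle inequality
\[
\bigl|\nor{\Gamma(\lambda) - \Lambda(\lambda)}{2,\delta} - |\lambda|^{\ell+1} \xi_{\textup{RBM+PT},\ell}^{\textup{deg}}\bigr| \le \nor{\Gamma(\lambda) - \Lambda(\lambda) - \lambda^{\ell+1} T_{\ell+1}}{2,\delta}
\]
would conclude the proof after bounding the right-hand side by $c(|\lambda|b)^{\ell+2}$.

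The main obstacle is making these Taylor-remainder estimates quantitative with a geometric factor $b$ independent of $\ell$, so that the bound in~\eqref{eq:eigendm_bound} holds uniformly in the order of truncation. For this, I would invoke the estimates of Section~\ref{sec:Bounds on the Rayleigh-Schrödinger series in perturbation theory} and Section~\ref{sec:density matrix perturbation theory} to produce bounds of the form $\nor{\Gamma^n}{2,\delta} \le C b^n$, together with analogous control on the Taylor coefficients of $\psi_\mu(\lambda)$ and $R_\mu(\lambda)$. These bounds follow from the recursive Rayleigh-Schrödinger formulas combined with the energy-norm boundedness of the partial inverses assumed in~\eqref{eq:AKA_bound} and~\eqref{eq:hypo_RP}, and the analyticity hypotheses~\eqref{eq:cinfty0} and~\eqref{eq:cinfty0_P}. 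The resulting common geometric rate $b$ then controls each of the remainders above and yields the stated estimate.
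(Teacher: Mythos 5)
Your proposal follows essentially the same route as the paper: apply Theorem~\ref{thm:main_deg_thm} at each $\lambda$, use~\eqref{eq:cond_der} to get $\cP^\perp\Gamma(\lambda)=O(|\lambda|^{\ell+1})$, absorb the quadratic remainder $\Omega$ by a bootstrap (first deriving $\nor{\Gamma(\lambda)-\Lambda(\lambda)}{2,\delta}\le C\nor{\cP^\perp\Gamma(\lambda)}{2,\delta}$ for $\lambda$ small, as in~\eqref{eq:bbb}, rather than assuming both quantities are already $O(|\lambda|^{\ell+1})$), Taylor-expand the leading term, and invoke Proposition~\ref{prop:dmpt_bound} for the $\ell$-uniform geometric rate $b$. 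This matches the paper's argument in Section~\ref{sec:Proof of EC}, which treats the vector case in detail and sketches exactly this reasoning for the density-matrix case.
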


We give a proof in Section~\ref{sec:Proof of EC}. The next result provides a practical way of building the reduced space used in~\eqref{eq:cond_der}, via an explicit and simple basis of vectors.
\begin{lemma}[Building the reduced space for density matrices]\label{lem:build_red_space_dm} 
Consider the context of Corollary~\ref{cor:main_ec_dm}. Take $(\vp_\mu)_{\mu=1}^\nu \in \cH^\nu$ to be a basis of the unperturbed space $\oplus_{\mu=1}^\nu \Ker \pa{H(0) - E_\mu(0)}$. Then
\begin{align*}
\bigoplus_{n=0}^\ell \im \Gamma^n &= \Span\pa{\pa{\phi^n_\mu}_{1 \le \mu \le \nu}^{0 \le n \le \ell}} =\Span\pa{\pa{\Gamma^n \vp_\mu}_{1 \le \mu \le \nu}^{0 \le n \le \ell}}.
\end{align*}
\end{lemma}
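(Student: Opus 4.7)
The plan is to establish both equalities by combining a Leibniz expansion of $\Gamma(\lambda) = \sum_{\mu=1}^\nu \ketbra{\phi_\mu(\lambda)}{\phi_\mu(\lambda)}$ with the normalization $\ps{\phi_\mu(\lambda),\phi_\alpha(\lambda)} = \delta_{\mu\alpha}$, and then a short descending induction.

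First I would differentiate $\Gamma(\lambda)$ term-by-term. The Leibniz rule (together with the convention~\eqref{eq:def_der} that absorbs the factor $1/n!$ into $\phi_\mu^n$ and $\Gamma^n$) yields
\begin{equation*}
\Gamma^n \;=\; \sum_{\mu=1}^\nu \sum_{k=0}^n \ketbra{\phi_\mu^k}{\phi_\mu^{n-k}}.
\end{equation*}
This immediately gives the inclusion $\im \Gamma^n \subset \Span\pa{\phi_\mu^k : 1\le\mu\le\nu,\; 0\le k\le n}$, and summing over $n$ from $0$ to $\ell$ yields one direction of the first equality.

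Next I would establish the reverse inclusion by induction on $n$. Evaluating the Leibniz expansion above at $\phi_\alpha^0 = \phi_\alpha(0)$, and using $\ps{\phi_\mu^0,\phi_\alpha^0} = \delta_{\mu\alpha}$ (which follows from $\ps{\phi_\mu(\lambda),\phi_\alpha(\lambda)} = \delta_{\mu\alpha}$ evaluated at $\lambda=0$), I isolate the $k=n$ contribution:
\begin{equation*}
\Gamma^n \phi_\alpha^0 \;=\; \phi_\alpha^n \;+\; \sum_{\mu=1}^\nu \sum_{k=0}^{n-1} \ps{\phi_\mu^{n-k},\phi_\alpha^0}\, \phi_\mu^k.
\end{equation*}
The base case $n=0$ gives $\phi_\alpha^0 = \Gamma^0 \phi_\alpha^0 \in \im \Gamma^0$. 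For the inductive step, the sum on the right lives in $\Span(\phi_\mu^k : k\le n-1) \subset \bigoplus_{j=0}^{n-1}\im \Gamma^j$ by the inductive hypothesis, while $\Gamma^n \phi_\alpha^0 \in \im \Gamma^n$. Hence $\phi_\alpha^n \in \bigoplus_{j=0}^n \im \Gamma^j$, which closes the first equality.

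For the second equality, I note that $(\phi_\mu^0)_{\mu=1}^\nu$ is an orthonormal basis of $\oplus_{\mu=1}^\nu \Ker(H(0)-E_\mu^0)$ by~\eqref{eq:complete_eigen}, so the change-of-basis matrix $C = (c_{\alpha\mu})$ satisfying $\phi_\alpha^0 = \sum_\mu c_{\alpha\mu} \vp_\mu$ is invertible. By linearity of $\Gamma^n$, the families $\pa{\Gamma^n \phi_\alpha^0}$ and $\pa{\Gamma^n \vp_\mu}$ span the same subspace for every fixed $n$, and hence the same holds after summing over $n \in \{0,\dots,\ell\}$. Combined with the identity $\Gamma^n \phi_\alpha^0 = \phi_\alpha^n + (\text{combination of }\phi_\mu^k,\ k<n)$ established above, an induction on $n$ identical in structure to the previous one shows $\Span(\phi_\mu^n)_{0\le n\le \ell}^{1\le\mu\le\nu} = \Span(\Gamma^n \vp_\mu)_{0\le n\le \ell}^{1\le\mu\le\nu}$, completing the proof. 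The only mildly subtle point is keeping the bookkeeping straight between the three natural generating families, but no analytic input beyond the Leibniz expansion and the orthonormality at $\lambda=0$ is needed.
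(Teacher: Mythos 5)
Your proposal is correct and follows essentially the same route as the paper's proof: the Cauchy-product (Leibniz) expansion $\Gamma^n=\sum_{\mu}\sum_{k=0}^{n}\ketbra{\phi_\mu^{k}}{\phi_\mu^{n-k}}$, the identity $\Gamma^n\xi=\sum_\alpha\phi_\alpha^n\ps{\phi_\alpha^0,\xi}+(\text{lower-order terms})$, and an induction on $n$ to get the reverse inclusions. The only cosmetic difference is that you run the induction with the test vectors $\phi_\alpha^0$ and then change basis to the $\vp_\mu$, while the paper performs the induction directly on $\Span\pa{\Gamma^k\vp_\alpha}$; the two bookkeepings are equivalent.
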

A proof is provided in Section~\ref{sec:Proof of EC}. 

\subsection{One eigenmode}%
\label{sub:One eigenmode}

We now state the corresponding result as Corollary~\ref{cor:main_ec_dm} but in the non-degenerate case and for vectors. As in Proposition~\ref{prop:non-deg_main_thm} we drop the subscripts 1, so $R(\lambda) := R_1(\lambda)$, $\phi(\lambda) := \phi_1(\lambda)$, $\psi(\lambda) := \psi_1(\lambda)$, $E(\lambda) := E_1(\lambda)$, $\cE(\lambda) := \cE_1(\lambda)$, $\phi^n := \phi^n_1$. We define
\begin{align*}
	\xi_{\textup{RBM+PT},\ell}^{\textup{non-deg}} := \nor{ \pa{1 + R(0) H(0)} \cP^\perp \phi^{\ell +1}}{e,\delta}.
\end{align*}
\begin{corollary}[RBM+PT in the perturbative regime, one eigenmode]\label{cor:main_ec_vec}
We make the same assumptions as in Corollary~\ref{cor:main_ec_dm}, we take $\nu =1$ and remove the subscripts 1, and we take $\ell \in \Nz$. We recall that $(E(\lambda), \phi(\lambda))$ is an eigenmode of $H(\lambda)$ and that $(\cE(\lambda), \psi(\lambda))$ is an eigenmode of $\PHlP$ such that $\cE(0) = E(0)$ and $\psi(0) = \phi(0)$. We choose the phasis of $\phi(\lambda)$ and $\psi(\lambda)$ such that $\ps{\phi^0 , \phi(\lambda)} \in \R_+$ and $\ps{\phi(\lambda) , \psi(\lambda)} \in \R$. If
\begin{align}\label{eq:cond_der_vec} 
\forall n \in \{0,\dots,\ell\}, \qquad \pa{\tfrac{\d^n}{\d \lambda^n}  \phi(\lambda)}_{\mkern 1mu \vrule height 2ex\mkern2mu \lambda = 0} \in \cP \cH,
\end{align}
then there exists $\lambda_0 > 0$ such that for any $\lambda \in ]-\lambda_0,\lambda_0[$ and $\delta \in \{0,1\}$,
\begin{align}\label{eq:eigenvectors_bound}
\ab{\nor{\phi(\lambda) - \psi(\lambda)}{e,\delta} - \ab{\lambda}^{\ell +1} \xi^{\textup{non-deg}}_{\textup{RBM+PT},\ell}} &\le c \pa{\ab{\lambda} b}^{\ell +2},
\end{align}
where $b$ and $c$ are independent of $\lambda$ and $\ell$. 
\end{corollary}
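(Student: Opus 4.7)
The plan is to specialize Proposition~\ref{prop:non-deg_main_thm} to the operators $H(\lambda)$ and $\PHlP$ at each $\lambda\in\,]-\lambda_0,\lambda_0[$ and then Taylor-expand the resulting leading term around $\lambda=0$. Applying~\eqref{eq:equality_diff} with the phase convention $\ps{\phi(\lambda),\psi(\lambda)}\in\R$ gives
\begin{align*}
\phi(\lambda)-\psi(\lambda) = \bpa{1 + R(\lambda)H(\lambda)}\cP^\perp\phi(\lambda) -\tfrac12\nor{\phi(\lambda)-\psi(\lambda)}{}^2\psi(\lambda)+\bpa{\cE(\lambda)-E(\lambda)}R(\lambda)\bpa{\phi(\lambda)-\psi(\lambda)},
\end{align*}
and~\eqref{eq:ineq_cons} combined with assumption~\eqref{eq:hypo_RP} yields the a priori bound $\nor{\phi(\lambda)-\psi(\lambda)}{e,\delta}\le C\,\nor{A^\delta\cP^\perp\phi(\lambda)}{}$ uniformly in $\lambda$ on a small enough interval.

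Next, I would exploit the analyticity of $\phi(\lambda)=\sum_{n\ge 0}\lambda^n\phi^n$ together with the hypothesis~\eqref{eq:cond_der_vec}, which guarantees $\cP^\perp\phi^n=0$ for every $n\le\ell$. This gives the Taylor tail
\begin{align*}
\cP^\perp\phi(\lambda)=\lambda^{\ell+1}\cP^\perp\phi^{\ell+1}+\sum_{n\ge\ell+2}\lambda^n\cP^\perp\phi^n.
\end{align*}
The estimates on the Rayleigh–Schrödinger coefficients $(\phi^n)_{n\in\Np}$ that are prepared in Section~\ref{sec:Bounds on the Rayleigh-Schrödinger series in perturbation theory} — which, under assumptions~\eqref{eq:cinfty0} and~\eqref{eq:AKA_bound}, provide bounds of the form $\nor{A^\delta\phi^n}{}\le C\,b^n$ with $C,b$ independent of $n$ — allow me to control the tail in energy norm by $C(\ab{\lambda}b)^{\ell+2}$.

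I would then expand the operator factor $1+R(\lambda)H(\lambda)$. By hypothesis~\eqref{eq:hypo_RP} combined with the analyticity of $\lambda\mapsto R(\lambda)$ (consequence of Rellich applied to $\PHlP$ under~\eqref{eq:non_empty_resolvent_P}, \eqref{eq:cinfty0_P} and~\eqref{eq:non_deg_lambda}), the map $\lambda\mapsto A^\delta(1+R(\lambda)H(\lambda))A^{-\delta}$ is analytic with value $A^\delta(1+R(0)H(0))A^{-\delta}$ at $\lambda=0$. Multiplying by the tail expansion above and collecting powers of $\lambda$ yields
\begin{align*}
\bpa{1+R(\lambda)H(\lambda)}\cP^\perp\phi(\lambda)=\lambda^{\ell+1}\bpa{1+R(0)H(0)}\cP^\perp\phi^{\ell+1}+O_{e,\delta}\bpa{(\ab{\lambda}b)^{\ell+2}},
\end{align*}
which is precisely the announced leading term $\lambda^{\ell+1}\xi^{\textup{non-deg}}_{\textup{RBM+PT},\ell}$ modulo the stated remainder. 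The two remaining terms in the identity for $\phi(\lambda)-\psi(\lambda)$ are at least quadratic in $\nor{\phi(\lambda)-\psi(\lambda)}{e,\delta}$, hence of size $O(\lambda^{2(\ell+1)})$, which is absorbed into $O((\ab{\lambda}b)^{\ell+2})$ for every $\ell\ge 0$. The triangle inequality then gives~\eqref{eq:eigenvectors_bound}.

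The main obstacle is keeping the constants $b$ and $c$ independent of $\ell$: this requires sharp, $\ell$-uniform bounds on $\nor{A^\delta\phi^n}{}$, which is exactly the role of Section~\ref{sec:Bounds on the Rayleigh-Schrödinger series in perturbation theory}. A secondary technical point is handling the gauge condition $\ps{\phi^0,\phi(\lambda)}\in\R_+$ together with $\ps{\phi(\lambda),\psi(\lambda)}\in\R$ consistently with Rellich's analytic basis, so that the identity~\eqref{eq:equality_diff} can be applied at every $\lambda$ in the neighborhood; this is standard once one notes that both phase choices are compatible with analyticity in a neighborhood of $0$ since $\ps{\phi^0,\phi^0}=1\neq 0$ and $\ps{\phi(0),\psi(0)}=1\neq 0$.
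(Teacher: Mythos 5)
Your proposal is correct and follows essentially the same route as the paper's proof: apply Proposition~\ref{prop:non-deg_main_thm} at each fixed $\lambda$, use~\eqref{eq:ineq_cons} for a first-pass a priori bound, exploit~\eqref{eq:cond_der_vec} to reduce $\cP^\perp\phi(\lambda)$ to its Taylor tail controlled by the $\ell$-uniform Rayleigh--Schr\"odinger bounds of Lemma~\ref{lem:bound_phi_n}, expand $1+R(\lambda)H(\lambda)$ around $\lambda=0$, and absorb the quadratic remainder terms (which are $O(\lambda^{2(\ell+1)})$) into the $(\ab{\lambda}b)^{\ell+2}$ error. The only cosmetic difference is that the paper passes through the intermediate normalization $\Phi^n$ before returning to $\phi^n$, which does not change the argument.
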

We provide a proof in Section~\ref{sec:Proof of EC}.

\subsection{Remarks}%
\label{sub:Remarks}
Now, several remarks seem in order.

\begin{remark}[Error on eigenvalues]
	We define
	\begin{align*}
	\xi_{\textup{RBM+PT},\ell}^{\textup{non-deg},E} &:= \ab{\ps{\cP^\perp \phi^{\ell +1}, (H(0) - E(0)) \pa{1 + R(0) H(0)} \cP^\perp \phi^{\ell +1}}}.
	\end{align*}
Using Lemma~\ref{lem:err_eigenvals_err_eigenvects} and ~\eqref{eq:eigenvectors_bound}, we have
\begin{align}
\ab{\ab{E(\lambda) - \cE(\lambda)} - \ab{\lambda}^{2(\ell +1)} \xi^{\textup{non-deg},E}_{\textup{RBM+PT},\ell}} &\le c \pa{\ab{\lambda} b}^{2\ell +3}. \label{eq:eigenvals_bound}
\end{align}
\end{remark}

\begin{remark}[Error with explicit constant]
Inequality~\eqref{eq:eigenvectors_bound} could be written as
\begin{align*}
	\nor{\phi(\lambda) - \psi(\lambda)}{e,\delta} = \ab{\lambda}^{\ell +1} \xi^{\textup{non-deg}}_{\textup{RBM+PT},\ell} + O_{\lambda \rightarrow 0}\pa{ \pa{\ab{\lambda} b}^{\ell +2}},
\end{align*}
where $O_{\lambda \rightarrow 0}(\cdot)$ denote a function bounded in $\lambda$ and $\ell$ in a neighborhood of $\lambda = 0$.
\end{remark}

\begin{remark}[Equality of perturbation terms]
A consequence of~\eqref{eq:eigenvectors_bound} and~\eqref{eq:eigenvals_bound} is that for all $n \in \{0,\dots,\ell\}, k \in \{0,\dots,2\ell +1\}$,
\begin{align*}
\pa{\tfrac{\d^n }{\d \lambda^n} \psi(\lambda)}_{\mkern 1mu \vrule height 2ex\mkern2mu \lambda = 0} = \pa{\tfrac{\d^n }{\d \lambda^n} \phi(\lambda)}_{\mkern 1mu \vrule height 2ex\mkern2mu \lambda = 0}, \qquad \pa{\tfrac{\d^k }{\d \lambda^k} \cE(\lambda)}_{\mkern 1mu \vrule height 2ex\mkern2mu \lambda = 0} = \pa{\tfrac{\d^k }{\d \lambda^k} E(\lambda)}_{\mkern 1mu \vrule height 2ex\mkern2mu \lambda = 0},
\end{align*}
meaning that the first terms of the perturbative series of the considered eigenmode of $\PHlP$ are the same as the first terms of the perturbative series of the considered eigenmode of $H(\lambda)$.
\end{remark}

\begin{remark}[Intermediate normalization]
	Intermediate normalization is reviewed in Appendix~\ref{sec:interm_normalization}. Instead of building the reduced space $\cP \cH$ from the $\phi^n$'s, one can form it by using the eigenvectors in intermediate normalization, denoted by $\Phi^n$. Using this last normalization is more convenient because it involves less computations. From~\eqref{eq:recover_phi} we have
\begin{align}\label{eq:equality_spans} 
\Span\pa{\Phi^k, \; 0 \le k \le \ell}  =  \Span\pa{\phi^k, \; 0 \le k \le \ell}.
\end{align}
Hence one can form the reduced space of RBM+PT by using either intermediate or unit normalization perturbation vectors, this is equivalent.
\end{remark}



\begin{remark}[Comparison to perturbation theory]
	We provide a comparison of RBM+PT with PT in Section~\ref{sec:illustration}. 
\end{remark}

\begin{remark}[Generalization to higher-dimensional parameter space]
Corollary~\ref{cor:main_ec_dm} is stated for a one-dimensional parameter space, parametrized by $\lambda$, but one can straightforwardly extend it to general parameter spaces.
\end{remark}

\subsection{Vectors in the degenerate case}%
\label{sub:Vectors in the degenerate case}

The bounds of Corollary~\ref{cor:main_ec_dm} do not enable to obtain bounds on individual eigenvectors and individual eigenvalues in the degenerate case. Nevertheless, following a different strategy of proof can lead to such bounds and this is the purpose of this section.

\subsubsection{Assumptions on derivatives}%
\label{ssub:Assumptions on derivatives}

Let us define
\begin{align*}
	\hh := \pa{\Gamma(0) H^1 \Gamma(0)}_{\mkern 1mu \vrule height 2ex\mkern2mu  \Gamma(0) \cH \rightarrow \Gamma(0) \cH},
\end{align*}
the restriction of $\Gamma(0) H^1 \Gamma(0)$ as an operator of $\Gamma(0) \cH$. Let us make the following hypothesis on the $E_\mu(\lambda)$'s, but we could make them on the $\cE_\mu(\lambda)$'s, this is equivalent since $\Gamma(0) H^1 \Gamma(0) = \Lambda(0) H^1 \Lambda(0)$. We assume that
\begin{align}\label{eq:assumption_eq_E} 
\forall \alpha, \beta \in \{1,\dots,\nu\}, \; E_\alpha(0) = E_\beta(0),
\end{align}
i.e. the system is exactly degenerate. For any $\alpha \in \{1,\dots,\nu\}$, we define
\begin{align*}
E_\alpha'(0) := \pa{\tfrac{\d}{\d \lambda} E_\alpha(\lambda)}_{\mkern 1mu \vrule height 2ex\mkern2mu \lambda = 0},
\end{align*}
and it is well-known from first-order perturbation theory (see~\cite{Kato} for instance) that the $E_\alpha'(0)$'s are the eigenvalues of $\hh$. We take $\mu \in \{1,\dots,\nu\}$ and we make the assumption that 
\begin{align}\label{eq:E1_non_deg} 
\text{the eigenvalue } E_\mu'(0) \text{ is non-degenerate for $\hh$},
\end{align}
implying that for any $\alpha \in \{1,\dots,\nu\} \backslash \{\nu\}$, $E_\alpha'(0) \neq E_\mu'(0)$. Thus there exists $\kappa_{\hh} > 0$ such that 
\begin{align*}
\pa{\sigma \pa{\hh} \backslash \{E_\mu'(0)\} }\cap\,  ]E_\mu'(0) -\kappa_{\hh},E_\mu'(0)  + \kappa_{\hh}[  \, = \varnothing,
\end{align*}
and we can define
\begin{align}\label{eq:def_rr} 
\rr_\mu(0) := 
\left\{
\begin{array}{ll}
	\pa{ \pa{E_\mu'(0) - \hh}_{\mkern 1mu \vrule height 2ex\mkern2mu P_{\phi_\mu(0)}^\perp \Gamma(0) \cH \rightarrow P_{\phi_\mu(0)}^\perp \Gamma(0) \cH} }^{-1} & \hspace{-0.3cm}	\mbox{on } P_{\phi_\mu(0)}^\perp \Gamma(0) \cH, \\
0 & \hspace{-0.3cm}\mbox{on } \Gamma(0)^\perp \cH \oplus \C \phi_\mu(0),
\end{array}
\right.
\end{align}
extended by linearity on all of $\cH$. More explicitely, we have
\begin{align*}
	\rr_\mu (0)= \sum_{\substack{1 \le \alpha \le \nu \\ \alpha \neq \mu}} \pa{E_\mu'(0) - E_\alpha'(0)}^{-1} P_{\phi_\alpha(0)}.
\end{align*}
We then define, for $\delta \in \{0,1\}$,
\begin{align*}
	\xi^{\textup{deg}}_{\textup{RBM+PT},\mu,\ell} := \nor{\pa{1 + \rr_\mu(0) H^1 }\pa{1 + R_\mu(0) H^0 } \cP^\perp   \phi_\mu^{\ell +1}(0)}{e,\delta}.
\end{align*}

\subsubsection{Statement of the result}%
\label{ssub:Statement of the result}

We are now ready to state our last result on RBM+PT.

\begin{theorem}[Degenerate case with vectors]\label{thm:deg_vecs}
	We make the same assumptions as in Corollary~\ref{cor:main_ec_dm} except~\eqref{eq:cond_der}, so we consider a cluster of $\nu$ eigenmodes $\pa{E_\mu(\lambda) , \phi_\mu(\lambda)}_{\mu=1}^\nu$. Moreover, let us assume~\eqref{eq:assumption_eq_E}, take some $\mu \in \{1,\dots,\nu\}$ and assume~\eqref{eq:E1_non_deg}. We choose the phasis of $\phi_\mu(\lambda)$ and $\psi_\mu(\lambda)$ such that $\ps{\phi_\mu^0 , \phi_\mu(\lambda)} \in \R_+$ and $\ps{\phi_\mu^0 , \psi_\mu(\lambda)} \in \R_+$. Take $\ell \in \Nz$ and $\delta \in \{0,1\}$. If
\begin{align}\label{eq:cond_der_vec_deg} 
	\forall n \in \{0,\dots,\ell\}, \qquad \pa{\tfrac{\d^n}{\d \lambda^n}  \phi_\mu(\lambda)}_{\mkern 1mu \vrule height 2ex\mkern2mu \lambda = 0} &\in \cP \cH, \\
\forall \alpha \in \{1,\dots,\nu\}, \qquad \phi_\alpha(0) &\in \cP \cH \nonumber,
\end{align}
then there exists $\lambda_0 > 0$ such that for any $\lambda \in ]-\lambda_0,\lambda_0[$,
\begin{align}\label{eq:eigenvectors_bound_deg}
\ab{\nor{\phi_\mu(\lambda) - \psi_\mu(\lambda)}{e,\delta} - \ab{\lambda}^{\ell +1} \xi^{\textup{deg}}_{\textup{RBM+PT},\mu,\ell}} &\le c \pa{\ab{\lambda} b}^{\ell +2},
\end{align}
where $b$ and $c$ are independent of $\lambda$ and $\ell$. 
\end{theorem}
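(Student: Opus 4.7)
The strategy differs from that behind Corollary~\ref{cor:main_ec_dm}: since the hypothesis only controls derivatives of $\phi_\mu$ (not of all $\phi_\alpha$'s), we cannot guarantee $\im \Gamma^n \subset \cP\cH$, so the density-matrix route is unavailable. Instead I would work directly with the Rayleigh–Schrödinger expansions of $\phi_\mu(\lambda)$ and $\psi_\mu(\lambda)$, showing that they coincide up to order $\ell$ and then computing the leading term of the difference at order $\ell+1$. The existence of both analytic branches is produced by Rellich's theorem applied to $H(\lambda)$ and to $\PHlP$ as in Section~\ref{sec:definitions_rbm_pt}; the crucial observation is that under the hypothesis $\phi_\alpha(0) \in \cP\cH$ for all $\alpha$, one has $\Gamma(0) = \Lambda(0)$ so the first-order splitting operators $\Gamma(0) H^1 \Gamma(0)$ and $\Lambda(0) H^1 \Lambda(0)$ coincide, giving $\cE_\mu'(0) = E_\mu'(0)$. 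Combined with \eqref{eq:E1_non_deg}, the $\mu$-th branch is uniquely determined up to phase, and the phase conventions $\ps{\phi_\mu^0,\phi_\mu(\lambda)},\ps{\phi_\mu^0,\psi_\mu(\lambda)} \in \R_+$ fix $\psi_\mu^0 = \phi_\mu^0$.

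The core step is then to prove by induction on $n \in \{0,\dots,\ell\}$ that $\phi_\mu^n = \psi_\mu^n$, and correspondingly that the low-order eigenvalue coefficients agree. In the degenerate setting, the Rayleigh–Schrödinger recursion expresses $\phi_\mu^{n+1}$ by a formula interleaving the ``off-cluster'' resolvent $K_\mu(0)$ on $\Gamma(0)^\perp \cH$ with the ``within-cluster'' resolvent $\rr_\mu(0)$ on $P_{\phi_\mu(0)}^\perp \Gamma(0)\cH$. For $\psi_\mu^{n+1}$ the same formula holds but with $H^j$ replaced by $\cP H^j \cP$ and $K_\mu(0)$ replaced by $R_\mu(0)$; the within-cluster resolvent $\rr_\mu(0)$ is the \emph{same} operator in both problems because $\Gamma(0)\cH \subset \cP\cH$. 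The inductive step then amounts to showing that when the right-hand side of the recursion lives in $\cP\cH$, the reduced resolvents produce the same output as the exact ones; this is exactly the content of the hypothesis $\phi_\mu^n \in \cP\cH$ for $n \le \ell$ together with invariance of $\rr_\mu(0)$. The bounds from Section~\ref{sec:Bounds on the Rayleigh-Schrödinger series in perturbation theory} will then provide the uniform control of the tails of both series, yielding $\phi_\mu(\lambda) - \psi_\mu(\lambda) = \lambda^{\ell+1}(\phi_\mu^{\ell+1} - \psi_\mu^{\ell+1}) + O\bigl((|\lambda|b)^{\ell+2}\bigr)$.

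To conclude, I would compute the leading coefficient $\phi_\mu^{\ell+1} - \psi_\mu^{\ell+1}$ by tracing the difference of the two RS formulas at order $\ell+1$. Since all lower-order terms match and the within-cluster resolvent is identical, the only discrepancy comes from the resolvents $K_\mu(0)$ vs.\ $R_\mu(0)$ acting on pieces that do not lie in $\cP\cH$. A direct computation of the nested resolvent structure should yield precisely
\[
\phi_\mu^{\ell+1} - \psi_\mu^{\ell+1} = \bigl(1 + \rr_\mu(0) H^1\bigr)\bigl(1 + R_\mu(0) H^0\bigr) \cP^\perp \phi_\mu^{\ell+1},
\]
where the inner factor captures the off-cluster correction and the outer factor captures the first-order degenerate reshuffling within the cluster. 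Taking the $\nor{\cdot}{e,\delta}$-norm produces $\xi^{\textup{deg}}_{\textup{RBM+PT},\mu,\ell}$, and the remainder is already $O\bigl((|\lambda|b)^{\ell+2}\bigr)$ by the previous step, establishing \eqref{eq:eigenvectors_bound_deg}.

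\textbf{Expected main obstacle.} The principal difficulty is the matching step in the degenerate setting. Unlike the non-degenerate recursion which has a single resolvent, the degenerate RS recursion couples $K_\mu(0)$ with $\rr_\mu(0)$ and involves projection components both inside and outside $\Gamma(0)\cH$. Verifying that the exact and reduced recursions agree term-by-term up to order $\ell$ requires careful bookkeeping of which pieces of each intermediate vector lie in $\cP\cH$ versus $\cP^\perp\cH$, and of how the two resolvents act on each piece. Getting the exact identification of the leading term as $(1+\rr_\mu(0)H^1)(1+R_\mu(0)H^0)\cP^\perp\phi_\mu^{\ell+1}$ — rather than some equivalent but less transparent expression — will likewise demand a careful algebraic manipulation of these nested resolvents.
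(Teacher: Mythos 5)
Your plan follows essentially the same route as the paper's proof in Section~\ref{sec:Proof deg vec}: a purely perturbative induction showing that the Rayleigh--Schrödinger coefficients of $\phi_\mu$ and $\psi_\mu$ agree up to order $\ell$, identification of the order-$(\ell+1)$ difference as $\pa{1+\rr_\mu(0)H^1}\pa{1+R_\mu(0)H^0}\cP^\perp\phi_\mu^{\ell+1}$, and tail control via the uniform bounds on the degenerate Rayleigh--Schrödinger series (Lemma~\ref{lem:bound_phi_n_deg}). Your observations that $\Gamma(0)=\Lambda(0)$ forces the within-cluster first-order operators to coincide, and that the density-matrix route is unavailable here, are exactly the paper's.

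One step of your plan is stated incorrectly, though the correct argument is within reach of your outline. You justify the inductive matching by asserting that when the right-hand side of the recursion lives in $\cP\cH$, the reduced resolvents produce the same output as the exact ones. That is false as a general principle: for $v \in \cP\cH \cap \Gamma(0)^\perp\cH$ one has $K_\mu(0)v \neq R_\mu(0)v$ in general, since $\pa{E_\mu(0)-H(0)}^{-1}v$ need not remain in $\cP\cH$. The paper never compares the two resolvents on a common input; instead it subtracts the order-$q$ coefficient identities of $h_\mu(\lambda)\Phi_\mu(\lambda)=0$ (projected by $\cP$) and of $\cP w_\mu(\lambda)\Psi_\mu(\lambda)=0$, obtaining an equation for the difference $\xi^{n+1}_\mu := \Phi^{n+1}_\mu-\Psi^{n+1}_\mu$ whose only source term is $\cP^\perp\Phi^{n+1}_\mu$ (which vanishes for $n+1\le\ell$); this equation is then solved component by component on the decomposition $\cP^\perp\cH\oplus\cP\pa{\Gamma^0}^\perp\cH\oplus\Gamma^0 P_{\phi_\mu^0}^\perp\cH\oplus\C\phi_\mu^0$, using only the reduced resolvents $R_\mu(0)$ (from the order-$(n+1)$ identity) and $\rr_\mu(0)$ (from the order-$(n+2)$ identity); see Lemma~\ref{lem:tech_compute_diff}. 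The other detail you omit is normalization: the induction is cleanest in intermediate normalization, where $\Phi^n_\mu\perp\Phi^0_\mu$, with Lemma~\ref{lem:tech_compute_diff_unit} converting the result back to the unit-normalized vectors appearing in~\eqref{eq:eigenvectors_bound_deg}.
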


As in~\eqref{eq:sum_eigenvals_converge}, Corollary~\ref{cor:main_ec_dm} only provides a convergence of the density matrices and of the sum of eigenvalues in a cluster, not a convergence of the individual eigenvectors and eigenvalues. Hence Theorem~\ref{thm:deg_vecs} provides more information. An error in individual eigenvalues can be deduced from an error in individual eigenvectors by Lemma~\ref{lem:err_eigenvals_err_eigenvects}. The proof of Theorem~\ref{thm:deg_vecs}, provided in Section~\ref{thm:deg_vecs}, is very different from the ones of the previous results, and uses a purely perturbative approach.

\section{Comparison between RBM+PT and other methods}%
\label{sec:illustration}

In this section, we present a numerical experiment investigating RBM+PT in the perturbative regime. We consider non-degenerate levels, and the vector case, as treated in Corollary~\ref{cor:main_ec_dm}.

\subsection{Operators $H^n$}%
\label{sub:Operators}

We will work with periodic one-dimensional Schrödinger operators. Take $\cH = L_{\text{per}}^2(\R)$ to be the space of $L^2$ functions with period $L >0$, take $V_j : \R \rightarrow \R$ for $j \in \{0,1,2\}$ three smooth functions, $H^0 = -\Delta + V_0$, $H^1 = V_1$, $H^2 = V_2$ and $H^n = 0$ for any $n \ge 3$. We represent the $V_j$'s in Figure~\ref{fig:pot_sol} together with their ground states, i.e. of $H^0+V_j$, denoted by $u_j$.

\begin{figure}[h!]
\begin{center} 
\includegraphics[height=5cm,trim={0cm 0cm 0cm 0cm},clip]{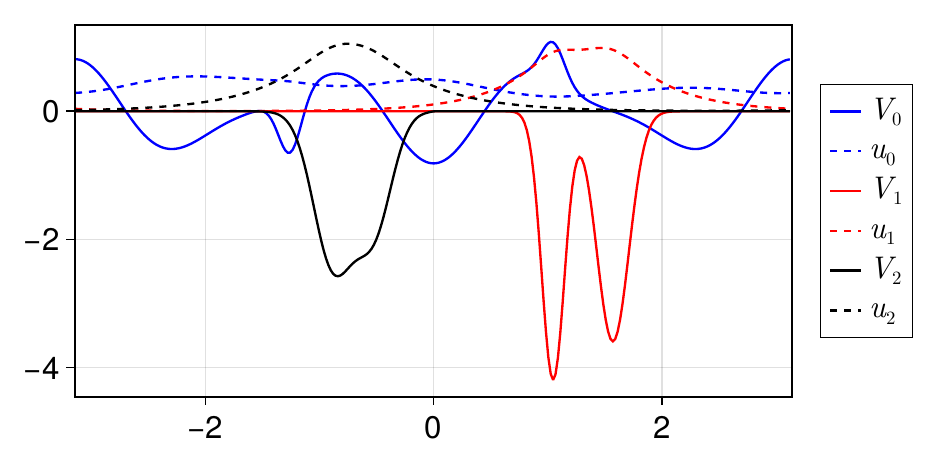}
\caption{Potentials $V_j$ for $j \in \{0,1,2\}$ and the ground states $u_j$ of $-\Delta + V_j$.}\label{fig:pot_sol}
\end{center}
\end{figure} 

\subsection{RBM+PT versus PT}%
\label{sub:RBM+PT versus PT}

\subsubsection{Perturbation theory}%
\label{ssub:Perturbation theory}

We define the approximation of $\phi_\mu(\lambda)$ given by PT and the corresponding eigenvalue approximation
\begin{align}\label{eq:pert_ap} 
\vp_\mu(\lambda) := \f{\sum_{n=0}^{\ell} \lambda^n \phi^n_\mu}{\nor{\sum_{n=0}^{\ell} \lambda^n \phi^n_\mu}{}},\qquad \qquad e_\mu(\lambda) := \ps{\vp_\mu(\lambda), H(\lambda) \vp_\mu(\lambda)}.
\end{align}
It is well-known that those quantities coming from perturbation theory respect the following bounds.
\begin{lemma}\label{lem:perturbation_theory_bound} 
	Let us make the definitions and assumptions of Sections~\ref{sub:Analytic family of operators} and~\ref{sub:choose_set_eigvals}. By defining $\xi^{\textup{non-deg}}_{\textup{PT},\ell} := \nor{\phi^{\ell+1}_\mu}{e,\delta}$ and $\xi^{\textup{non-deg},E}_{\textup{PT},\ell} := \ab{E^{2(\ell+1)}_\mu}$, for $\delta \in \{0,1\}$ we have
\begin{align*}
\ab{\nor{\phi_\mu(\lambda) - \vp_\mu(\lambda)}{e,\delta} - \ab{\lambda}^{\ell +1} \xi^{\textup{non-deg}}_{\textup{PT},\ell}} &\le c \pa{\ab{\lambda} b}^{\ell +2}, \\
\ab{\ab{E_\mu(\lambda) - e_\mu(\lambda)} - \ab{\lambda}^{2(\ell +1)} \xi^{\textup{non-deg},E}_{\textup{PT},\ell}} &\le c \pa{\ab{\lambda} b}^{2\ell +3},
\end{align*}
where $b$ and $c$ are independent of $\lambda$ and $\ell$. 
\end{lemma}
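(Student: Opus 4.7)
The plan is to Taylor expand $\phi_\mu(\lambda)$ around $\lambda = 0$ and compare it term-by-term to the normalized truncation $\vp_\mu(\lambda)$, then extract the eigenvalue error from the resulting vector error. By the analyticity setup of Section~\ref{sub:Analytic family of operators} together with the assumption~\eqref{eq:cinfty0} (which ensures the hypotheses of the Kato--Rellich framework in the energy norm), for $|\lambda|$ small the series $\phi_\mu(\lambda) = \sum_{n=0}^{\infty} \lambda^n \phi^n_\mu$ converges in $\nor{\cdot}{e,\delta}$, uniformly in $\ell$, with a Cauchy tail bound of the form $\nor{\phi_\mu(\lambda) - \sum_{n=0}^\ell \lambda^n \phi^n_\mu - \lambda^{\ell+1}\phi^{\ell+1}_\mu}{e,\delta} \le c(|\lambda|b)^{\ell+2}$ for some $b,c$ independent of $\ell$ and $\lambda$.

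Next I would control the normalizing denominator. Writing $S_\ell(\lambda) := \sum_{n=0}^{\ell} \lambda^n \phi^n_\mu$ and differentiating the identity $\nor{\phi_\mu(\lambda)}{}^2 = 1$ yields the Cauchy product relations $\sum_{k=0}^{N} \ps{\phi^k_\mu,\phi^{N-k}_\mu} = \delta_{N,0}$, which I would use to collapse all coefficients of $\lambda^N$ in $\nor{S_\ell(\lambda)}{}^2$ for $N \le \ell$, leaving $\nor{S_\ell(\lambda)}{}^2 = 1 + O(|\lambda|^{\ell+1})$ and hence $\nor{S_\ell(\lambda)}{}^{-1} = 1 + O(|\lambda|^{\ell+1})$. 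Combining with the truncation estimate, $\phi_\mu(\lambda) - \vp_\mu(\lambda) = \lambda^{\ell+1}\phi^{\ell+1}_\mu + O((|\lambda|b)^{\ell+2})$ in $\nor{\cdot}{e,\delta}$ (where a harmless projection correction along $\phi^0_\mu$ gets absorbed into the remainder thanks to the normalization identity at order $N = \ell+1$). A reverse triangle inequality then gives the first bound.

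For the eigenvalue estimate, I would use the identity
\begin{align*}
e_\mu(\lambda) - E_\mu(\lambda) = \ps{\vp_\mu(\lambda) - \phi_\mu(\lambda),\, (H(\lambda) - E_\mu(\lambda))(\vp_\mu(\lambda) - \phi_\mu(\lambda))},
\end{align*}
valid for any normalized $\vp_\mu(\lambda)$ since $\phi_\mu(\lambda)$ is a normalized eigenvector of $H(\lambda)$. Inserting the leading-order vector expansion from the previous step, the leading contribution becomes $\lambda^{2(\ell+1)}\ps{\phi^{\ell+1}_\mu,\,(H^0 - E^0_\mu)\phi^{\ell+1}_\mu}$ modulo $O((|\lambda|b)^{2\ell+3})$, and one invokes the Wigner $2n+1$ rule of Rayleigh--Schrödinger perturbation theory to identify this quadratic form with $|E^{2(\ell+1)}_\mu|$. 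The hardest step is precisely this final algebraic identification via the $2n+1$ rule: it requires iterating the order-by-order eigenvalue equation $\sum_{k=0}^n (H^k - E^k_\mu)\phi^{n-k}_\mu = 0$ up to order $2\ell+2$ and carefully tracking the normalization corrections of Step 2 to reduce a sum expressing $E^{2(\ell+1)}_\mu$ to the single quadratic form above; the remaining estimates are routine consequences of the analytic tail bound and the boundedness assumptions on $A^{-1}H^n A^{-1}$.
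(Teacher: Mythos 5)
Your proposal follows essentially the same route as the paper: split $\phi_\mu(\lambda)-\vp_\mu(\lambda)$ into the Taylor tail $\sum_{n\ge\ell+1}\lambda^n\phi^n_\mu$ plus the normalization correction $\bpa{1-\nor{\sum_{n\le\ell}\lambda^n\phi^n_\mu}{}^{-1}}\sum_{n\le\ell}\lambda^n\phi^n_\mu$, control everything by an $\ell$-uniform geometric bound on the Rayleigh--Schrödinger coefficients, and deduce the eigenvalue estimate from the quadratic identity of Lemma~\ref{lem:err_eigenvals_err_eigenvects}. Two caveats. First, the uniform-in-$\ell$ tail bound is not a soft consequence of ``the Kato--Rellich framework in the energy norm''; it is precisely Lemma~\ref{lem:bound_phi_n}, proved through the Cauchy-square recursion of Lemma~\ref{lem:cauchy_square}, and your proof should invoke it explicitly. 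Second, your parenthetical claim that the correction along $\phi^0_\mu$ ``gets absorbed into the remainder thanks to the normalization identity at order $N=\ell+1$'' is the one step that fails as stated: that identity reads $\sum_{k=1}^{\ell}\ps{\phi^k_\mu,\phi^{\ell+1-k}_\mu}=-2\re\ps{\phi^0_\mu,\phi^{\ell+1}_\mu}$, so the normalization correction contributes $-\lambda^{\ell+1}\re\ps{\phi^0_\mu,\phi^{\ell+1}_\mu}\,\phi^0_\mu$ at the \emph{same} order $\ell+1$ as the claimed leading term, and $\ps{\phi^0_\mu,\phi^{\ell+1}_\mu}=X^{\ell+1}$ (Lemma~\ref{lem:interm_to_unit_norm}) is generically nonzero once $\ell\ge1$. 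The paper's own proof sidesteps this by bounding the correction by the tail and never isolating the exact leading coefficient, so you are not worse off than the reference, but your argument as written asserts a cancellation that does not occur. Similarly, the closing identification of $\ps{\phi^{\ell+1}_\mu,(H^0-E^0_\mu)\phi^{\ell+1}_\mu}$ with $\ab{E^{2(\ell+1)}_\mu}$ via the Wigner $2n+1$ rule is extra work the paper does not attempt (it simply deduces the eigenvalue bound from the eigenvector bound), and already at $\ell=0$ that quadratic form equals $E^2_\mu-\ps{\phi^0_\mu,H^2\phi^0_\mu}$, so the identification would require $H^n=0$ for $n\ge2$ or additional correction terms.
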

A proof is provided in Section~\ref{sec:Proof of Lemma}. 

\subsubsection{Acceleration factor}%
\label{ssub:Acceleration factor}

The errors given by RBM+PT and PT have the same order in $\ab{\lambda}$ but have different constants. The relevant quantity enabling to compare RBM+PT and PT in the asymptotic regime is $\xi_0 := 1$ and for $\ell \ge 1$,
\begin{align*}
	\xi_\ell := \mylim{\lambda \rightarrow 0} \f{\nor{\phi_\mu(\lambda) - \vp_\mu(\lambda)}{}}{\nor{\phi_\mu(\lambda) - \psi_\mu(\lambda)}{}} = \f{\xi^{\textup{non-deg}}_{\textup{PT},\ell,0}}{\xi_{\textup{RBM+PT},\ell,0}^{\textup{non-deg}}} = \f{\nor{\phi^{\ell+1}_\mu}{} }{\nor{ \pa{1 + R_\mu(0) H(0)} \cP^\perp \phi^{\ell+1}_\mu}{}},
\end{align*}
but one could also use $\nor{ \pa{1 + R_\mu(0) H(0)} \cP^\perp \phi^{\ell+1}_\mu}{e}^{-1} \nor{\phi^{\ell+1}_\mu}{e}$, which is very close. This quantifies the acceleration that RBM+PT provides with respect to PT. The larger $\xi_\ell$ is, the most efficient $\cP$ is. We numerically found situations such that $\xi_\ell < 1$ so RBM+PT is not necessarily better than PT, but in general we observe $\xi_\ell > 1$.


\subsubsection{Varying $\ell$}%
\label{ssub:Varying ell}
In this section, we aim at making $\ell$ vary. We choose $\cP = \cP^\ell$ to be the orthogonal projection onto 
\begin{align*}
\Span\pa{ \pa{\tfrac{\d^n}{\d \lambda^n}  \phi(\lambda)}_{\mkern 1mu \vrule height 2ex\mkern2mu \lambda = 0}, 0 \le n \le \ell}
\end{align*}
where $\phi(\lambda)$ is the eigenvector corresponding to the lowest eigenvalue $E(\lambda)$ of $H(\lambda)$, and we denote by $\psi^\ell(\lambda)$ the eigenvector of lowest eigenvalue $\cE^\ell(\lambda)$ of $\cP^\ell H(\lambda) \cP^\ell$. As in~\eqref{eq:pert_ap}, we define the perturbative approximations 
\begin{align}\label{eq:pt_approx} 
\vp^\ell(\lambda) := \f{\sum_{n=0}^{\ell} \lambda^n \phi^n}{\nor{\sum_{n=0}^{\ell} \lambda^n \phi^n}{}}, \qquad \text{and} \qquad 
e^\ell(\lambda) := \ps{\vp^\ell(\lambda), H(\lambda) \vp^\ell(\lambda)}.
 \end{align}

In Figure~\ref{fig:ell_varies}, we plot the different errors against $\lambda$ for small $\lambda$ near $\lambda = 0$. The asymptotic slopes correspond to~\eqref{eq:eigenvectors_bound} and~\eqref{eq:eigenvals_bound}. We see that the perturbation regime (the value of $\lambda$ for which the asymptotic slopes of $\lambda \rightarrow 0$ are considered) for PT is precisely attained around $\lambda \simeq 0.5$. On the contrary, in the case of RBM+PT, the start of the asymptotic regime is less systematic and depends on~$\ell$.

In Figure~\ref{fig:xi_against_ell}, we display the acceleration constant $\xi_\ell$ with respect to $\ell$. We also define $\xi_\ell^{\text{simple}} :=  \nor{\phi^{\ell+1}_\mu}{}/\nor{\cP^\perp \phi^{\ell+1}_\mu}{} $ to show in Figure~\ref{fig:xi_against_ell} that this simpler quantity is close to $\xi_\ell$. We see that the asymptotic behavior when $\ell \rightarrow +\infty$ is $\xi^{\textup{non-deg}}_{\textup{PT},\ell} \simeq c_{\text{pert}} s_{\text{pert}}^\ell$ and $\xi^{\textup{non-deg}}_{\textup{RBM+PT},\ell} \simeq c_{\text{ec}} s_{\text{ec}}^\ell$ with $s_{\text{ec}} < s_{\text{pert}}$. Hence we can conjecture that 
\begin{align*}
\nor{\phi(\lambda) - \vp^\ell(\lambda)}{} \simeq r_{\text{pert}} \pa{\ab{\lambda} q_{\text{pert}}}^\ell, \qquad 
\nor{\phi(\lambda) - \psi^\ell(\lambda)}{} \simeq r_{\text{ec}} \pa{\ab{\lambda} q_{\text{ec}}}^\ell
 \end{align*}
 with $q_{\text{ec}} < q_{\text{pert}}$, as if RBM+PT had the same error behavior as PT but where the perturbative regime is attained sooner than for perturbative theory.

\begin{figure}[h!]
\begin{center}

\hspace{-2cm}
\begin{minipage}{0.5\textwidth}
\includegraphics[width=8cm,trim={0.4cm 0cm 0cm 0.2cm},clip]{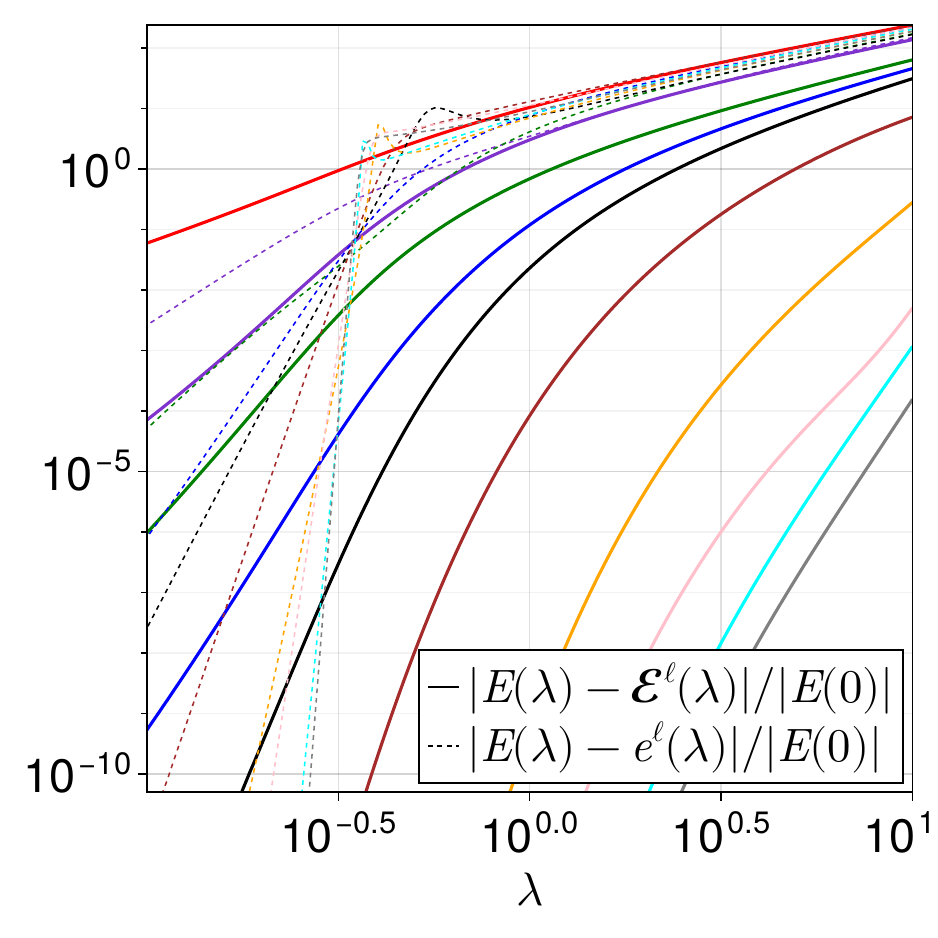}
\end{minipage}\hfill
\begin{minipage}{0.5\textwidth}
\includegraphics[width=8cm,trim={0.4cm 0cm 0cm 0.2cm},clip]{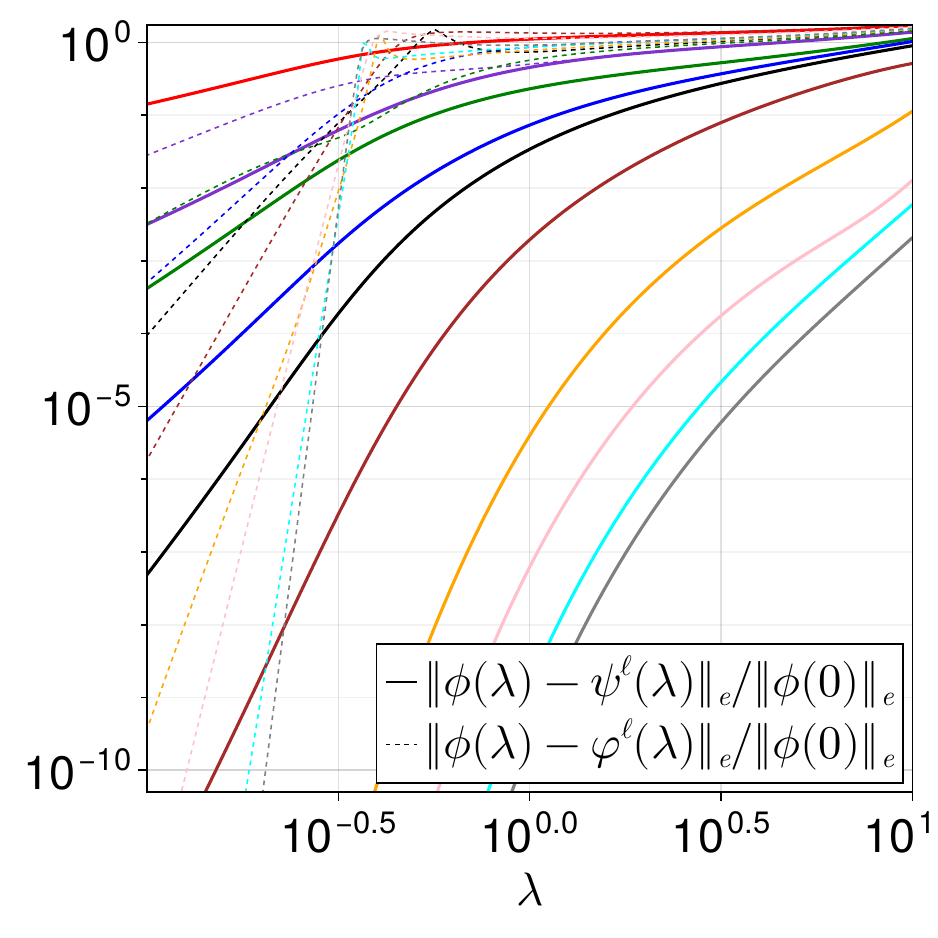}
\end{minipage}

\hspace{-0cm}\includegraphics[height=1cm,trim={0.2cm 0cm 37cm 0cm},clip]{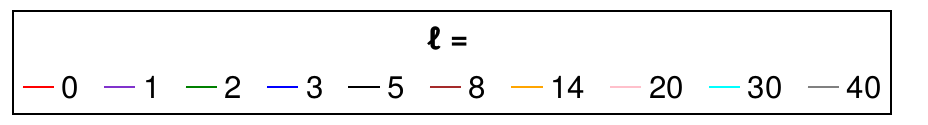}
\caption{Varying $\ell$, in the case of only one eigenmode. The asymptotic slopes near $0$ are $\ell +1$ on the right and $2\ell +2$ on the left, corresponding to~\eqref{eq:eigenvectors_bound} and~\eqref{eq:eigenvals_bound}. The error with respect to the PT approximation is in dashed line while the error with respect to the RBM+PT approximation is in plain line.}\label{fig:ell_varies}
\end{center}
\end{figure}

\begin{figure}[h!]
\begin{center}
\hspace{-7cm}
\begin{minipage}{0.9\textwidth}
\includegraphics[height=5cm,trim={0.2cm 0cm 0cm 0cm},clip]{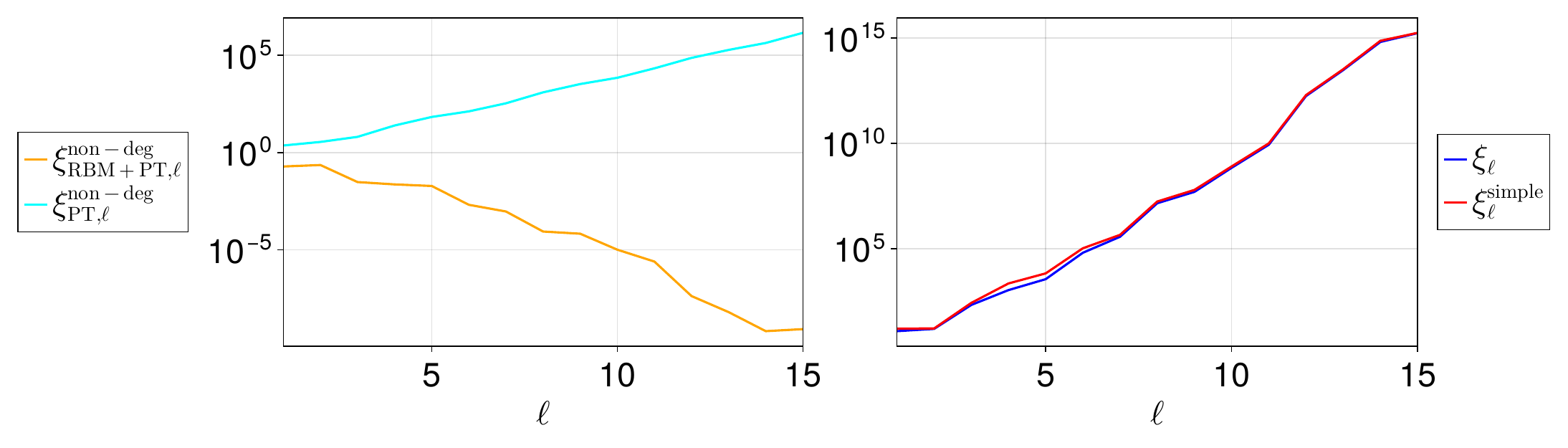}
\end{minipage}\hfill
\caption{Left: values of $\xi^{\textup{non-deg}}_{\textup{PT},\ell}$ and $\xi^{\textup{non-deg}}_{\textup{RBM+PT},\ell}$ against $\ell$, for $\delta = 0$. Right: acceleration factor $\xi_\ell$ against $\ell$. We approximately observe a behavior $\ln_{10} \xi_\ell \simeq 1.2 \ell$.}\label{fig:xi_against_ell}
\end{center}
\end{figure}

\subsubsection{Conclusions}%
\label{ssub:Conclusions}

From Figure~\ref{fig:xi_against_ell}, we see that 
\begin{align*}
	\xi_\ell^{\text{simple}} = \f{\nor{\phi^{\ell+1}_\mu}{}}{\nor{\cP^\perp \phi^{\ell+1}_\mu}{}}
\end{align*}
 is a good quantification of the asymptotic (when $\lambda \rightarrow 0$) acceleration rate of RBM+PT with respect to PT. We observed that it has a behavior close to $\xi_\ell^{\text{simple}} \simeq c \times 16^\ell$ as $\ell$ is large.

Moreover, as Figure~\ref{fig:ell_varies} shows, PT starts to provide a good approximation only in the perturbative regime, that is when $\lambda$ becomes small enough, while RBM+PT provides a good approximation for much larger values of $\lambda$.

Thus, in both perturbative and non-perturbative regimes, RBM+PT is a better approximation than PT in this example.

\subsection{RBM+PT vs RBM with excited states}%
\label{sub:RBM+PT vs RBM with excited states}
Let us now compare RBM+PT with another choice for the reduced basis. We define $\psi_\beta$ as the ground state of the operator $\PHlP$ when
\begin{align*}
	\cP \cH = \Span \pa{ \phi_\mu(0) \;|\; 0 \le \mu \le \beta}.
\end{align*}
This is a more traditional way of building reduced spaces, and we call it RBM+ES. In Figure~\ref{fig:rbm+es}, we represent the errors between the exact and the approximated eigenvectors, where $\psi^\beta$ is the ground state of the operator $\PHlP$ when
\begin{align*}
	\cP \cH = \Span \bpa{ \phi^k_0 \;|\; 0 \le k \le \beta},
\end{align*}
i.e. corresponding to RBM+PT. We already know that in the asymptotic regime, RBM+PT is more efficient, but we see that in this case, at $\lambda$ not small RBM+PT is again more competitive than RBM+ES. For instance, RBM+PT with 6 states has a similar precision than RBM+ES with 21 states.

\begin{figure}[h]
\begin{center}
\begin{minipage}{0.8\textwidth}
\includegraphics[width=8.2cm,trim={0.4cm 0cm 0cm 0.2cm},clip]{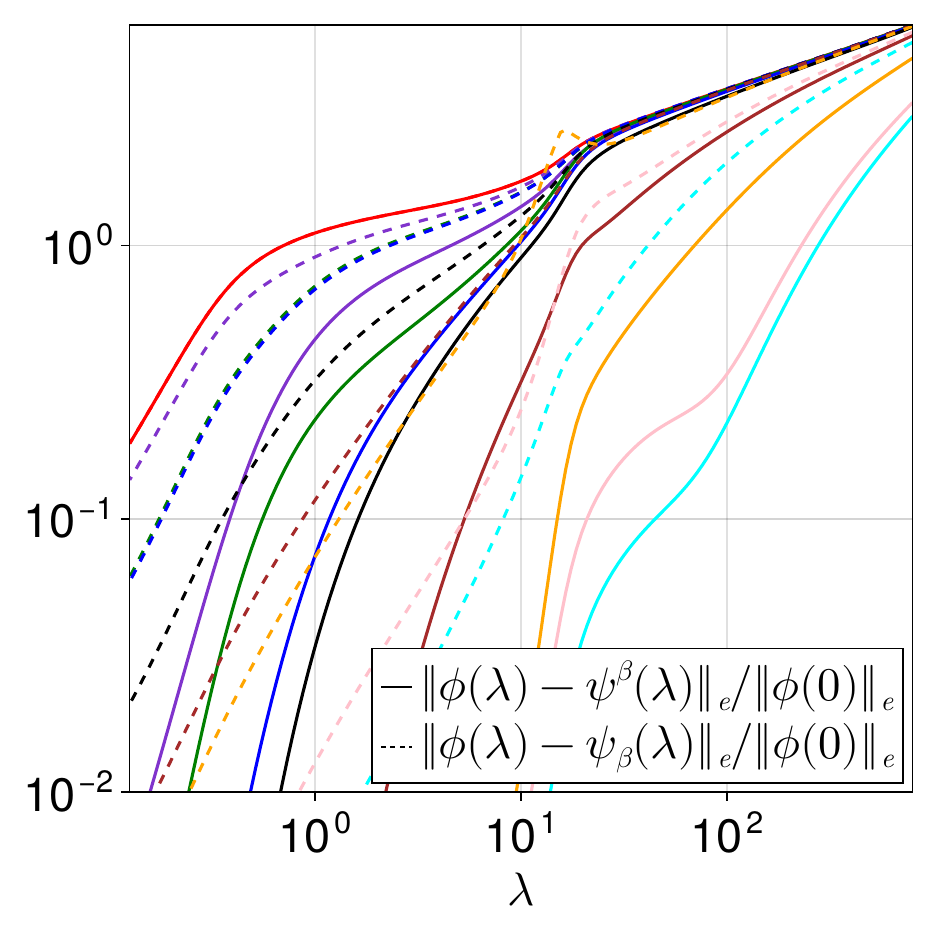}
\end{minipage}\hfill
\begin{minipage}{0.2\textwidth}
	\hspace{-1cm}
\includegraphics[height=6cm,trim={0cm 22cm 0cm 22.5cm},clip]{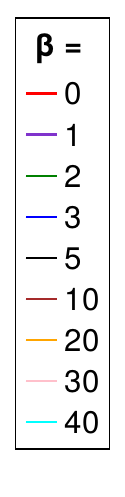}
\end{minipage}
\caption{Comparison of the errors between exact and approximated eigenstates. Here, $\psi^\beta$ corresponds to the error with RBM+PT, with a reduced space containing the $\beta +1$ first derivatives; $\psi_\beta$ corresponds to the error when the reduced space is built from the $\beta +1$ first eigenstates at $\lambda = 0$.}\label{fig:rbm+es}
\end{center}
\end{figure}


\section{Proof of Theorem~\ref{thm:main_deg_thm}}
\label{sec:Proof of Theorem mauin}

We recall that for any self-adjoint operators $B, C$ of $\cH$,
\begin{align}\label{eq:bound_norm} 
\nor{BC}{2} \le \nor{B}{} \nor{C}{2}.
\end{align}
Moreover, for any $u,v \in \cH$, we have
\begin{align}\label{eq:norm_form} 
\nor{\left| u \right> \left< v \right| }{2} = \nor{\left| u \right> \left< v \right| }{} = \nor{u}{} \nor{v}{}.
\end{align}

\subsection{Plan of the proof}%

We decompose the error $\Gamma - \Lambda$ into several terms, which will be possible to handle individually. It is based on the space decomposition $\cH = \cP \cH \oplus \cP^\perp \cH$. First, we have $\phi_\alpha \in \cP \cH$ for all $\alpha \in \{1,\dots,\nu\}$, so 
\begin{align*}
\cP \Lambda = \sum_{\mu=1}^{\nu} \left| \cP \phi_\alpha \right> \left< \phi_\alpha \right|  = \sum_{\mu=1}^{\nu} \left|  \phi_\alpha \right> \left< \phi_\alpha \right| = \Lambda,
\end{align*}
hence
\begin{align}\label{eq:commu} 
\cP \Lambda = \Lambda \cP = \Lambda, \qquad \qquad \cP^\perp \Lambda = \Lambda \cP^\perp = 0.
\end{align}
Then we can decompose the error $\Gamma - \Lambda$ in the following way
\begin{align}\label{eq:init_dec} 
	\Gamma - \Lambda & \hspace{-0.15cm}\hspace{-0.2cm}\hspace{-0.15cm}\underset{\substack{1 = \cP + \cP^\perp}}{=} \hspace{-0.2cm} \cP^\perp \pa{\Gamma - \Lambda} \cP^\perp + \cP \pa{\Gamma - \Lambda} \cP^\perp +\cP^\perp \pa{\Gamma - \Lambda} \cP +\cP \pa{\Gamma - \Lambda} \cP \nonumber \\
& \underset{\substack{\eqref{eq:commu}}}{=} \;\cP^\perp \Gamma \cP^\perp + \cP \Gamma \cP^\perp +\cP^\perp \Gamma \cP +\cP \pa{\Gamma - \Lambda} \cP.
\end{align}
We will follow those steps :
\begin{itemize}
\item in Section~\ref{sub:Treating first terms} we show how to treat the first terms $\cP^\perp \Gamma \cP^\perp$, $\cP \Gamma \cP^\perp$, and $\cP^\perp \Gamma \cP$,
\item in Section~\ref{sub:A first treatment of} we present a first way of treating the term $\cP \pa{\Gamma - \Lambda} \cP$, by following the decomposition $\cP \cH = \Lambda \cH \oplus \Lambda^\perp \cP \cH$. This decomposes $\cP \pa{\Gamma - \Lambda} \cP$ into three types of terms. We treat the different obtained subterms in Sections~\ref{sub:Treating lambda}, \ref{sub:Treating lambdaperp}, \ref{sub:Definition of partial inverses} and \ref{sub:Treating lambda_lambda_perp}. We gather all the computations in Section~\ref{sub:First form for},
\item in Appendix~\ref{sec:Alternative bound} we present a second way of treating $\cP \pa{\Gamma - \Lambda} \cP$, base on a resolvent computation, leading to a different kind of inequalities.
\end{itemize}

\subsection{Treating $\cP^\perp \Gamma \cP^\perp$, $\cP \Gamma \cP^\perp$ and $\cP^\perp \Gamma \cP$}
\label{sub:Treating first terms}

We start by treating the first terms of~\eqref{eq:init_dec}. We have
\begin{multline*}
	\nor{\cP^\perp \Gamma \cP^\perp }{2,\delta} \underset{\substack{\Gamma = \Gamma^2}}{=} \; \nor{A^\delta \cP^\perp \Gamma^2 \cP^\perp}{2} \underset{\substack{\eqref{eq:bound_norm}}}{\le} \; \nor{ \cP^\perp \Gamma }{2,\delta} \nor{ \cP^\perp \Gamma }{} \\
	= \nor{ \cP^\perp \Gamma }{2,\delta} \nor{A^{-\delta} A^\delta \cP^\perp \Gamma }{} \le c_A^\delta \nor{ \cP^\perp \Gamma }{2,\delta}^2.
\end{multline*}
Then, 
\begin{align*}
& \nor{A^\delta \cP^\perp \Gamma \cP}{2} + \nor{A^\delta \cP \Gamma \cP^\perp}{2} \le c_A^\delta \pa{\nor{A^\delta \cP^\perp \Gamma \cP A^\delta}{2} + \nor{A^\delta \cP \Gamma \cP^\perp A^\delta}{2}} \\
&\qquad = 2 c_A^\delta \nor{A^\delta \cP^\perp \Gamma \cP A^\delta}{2} = 2 c_A^\delta \nor{A^\delta \cP^\perp \Gamma^2 A^{\delta} A^{-\delta} \cP A^\delta}{2} \\
&\qquad \le 2 c_A^\delta \nor{A^\delta \cP^\perp \Gamma}{2} \nor{\Gamma A^\delta}{} \nor{A^{-\delta} \cP A^\delta}{}  = 2 \pa{c_A c_\cP \nor{A \Gamma}{} }^\delta\nor{\cP^\perp \Gamma}{2,\delta}.
\end{align*}

\subsection{Decomposition of $\cP \pa{\Gamma - \Lambda} \cP$}%
\label{sub:A first treatment of}

We present a first treatment of the operator $\cP \pa{\Gamma - \Lambda} \cP$, based on the decomposition $\cP = \cP (\Lambda + \Lambda^\perp) = \cP \Lambda^\perp + \Lambda$. More precisely,
\begin{align}\label{eq:dec_PdiffP} 
	\cP \pa{\Gamma - \Lambda} \cP &= \cP \Lambda^\perp \pa{\Gamma - \Lambda} \Lambda + \Lambda\pa{\Gamma - \Lambda} \cP \Lambda^\perp \nonumber \\
&\qquad \qquad \qquad + \Lambda \pa{\Gamma - \Lambda} \Lambda + \cP \Lambda^\perp \pa{\Gamma - \Lambda} \Lambda^\perp \cP\nonumber \\
&\underset{\substack{\eqref{eq:commu}}}{=} \;\cP \Lambda^\perp  \Gamma \Lambda + \Lambda \Gamma \cP \Lambda^\perp + \Lambda \pa{\Gamma - \Lambda} \Lambda + \cP \Lambda^\perp \Gamma \Lambda^\perp \cP.
\end{align}
In the following sections, we provide inequalities for each of those terms.

\subsection{Treating $\Lambda \pa{\Gamma - \Lambda} \Lambda$}%
\label{sub:Treating lambda}

On the first hand,
\begin{align*}
	\pa{\Lambda \pa{\Gamma - \Lambda} \Lambda}^2 = \pa{\Lambda \Gamma \Lambda - \Lambda}^2 = \Lambda \Gamma \Lambda \Gamma \Lambda - 2\Lambda \Gamma \Lambda+ \Lambda 
\end{align*}
hence
\begin{align*}
\nor{\Lambda \pa{\Gamma - \Lambda} \Lambda}{2}^2 =\tr \pa{\Lambda \pa{\Gamma - \Lambda} \Lambda}^2 = \tr \Gamma \Lambda \Gamma \Lambda - 2 \tr \Gamma \Lambda + \nu.
\end{align*}
On the other hand,
\begin{align*}
	\pa{\Gamma - \Lambda}^4 &= \pa{\Gamma + \Lambda - \Lambda \Gamma - \Gamma \Lambda}^2 \\
&=\Gamma + \Lambda - \Lambda \Gamma \Lambda - \Gamma \Lambda \Gamma - \Gamma \Lambda - \Lambda \Gamma + \Gamma \Lambda \Gamma \Lambda + \Lambda \Gamma \Lambda \Gamma,
\end{align*}
so
\begin{align*}
\nor{\pa{\Gamma - \Lambda}^2}{2}^2 = 2 \pa{\tr \Gamma \Lambda \Gamma \Lambda - 2 \tr \Gamma \Lambda + \nu}. 
\end{align*}
Then, we see that
\begin{align}\label{eq:ineq_lambda_gamma_lambda} 
	\nor{\Lambda \pa{\Gamma - \Lambda} \Lambda}{2} = \f{1}{\sqrt{2}} \nor{\pa{\Gamma - \Lambda}^2}{2} \le \f{1}{\sqrt{2}} \nor{\Gamma - \Lambda}{2}^2.
\end{align}
Finally,
\begin{align}\label{eq:first_ine} 
\nor{\Lambda \pa{\Gamma - \Lambda} \Lambda}{2,\delta} &= \nor{A^\delta \Lambda \pa{\Gamma - \Lambda} \Lambda }{2} = \nor{A^\delta \Lambda^2 \pa{\Gamma - \Lambda} \Lambda }{2}\nonumber \\
&\le \nor{A^\delta \Lambda}{} \nor{\Lambda \pa{\Gamma - \Lambda} \Lambda}{2} = \nor{A \Lambda}{}^{\delta} \nor{\Lambda \pa{\Gamma - \Lambda} \Lambda}{2}\nonumber \\
& \underset{\substack{\eqref{eq:ineq_lambda_gamma_lambda}}}{\le} \; 2^{-\f 12} \nor{A \Lambda}{}^{\delta} \nor{\Gamma - \Lambda}{2}^2 \le 2^{-\f 12} \pa{c_A^2 \nor{A \Lambda}{}}^{\delta} \nor{\Gamma - \Lambda}{2,\delta}^2 \nonumber \\
& \le  \pa{1 + c_A (1+c_A)\nor{A \Lambda}{}}^{2\delta} \nor{\Gamma - \Lambda}{2,\delta}^2.
\end{align}

\subsection{Treating $\cP \Lambda^\perp \Gamma\Lambda^\perp \cP$}%
\label{sub:Treating lambdaperp}

We have
\begin{align}\label{eq:bound_APLambdapAiv} 
\nor{A \cP \Lambda^\perp A^{-1}}{}  = \nor{A \cP A^{-1} \pa{1 - A \Lambda A^{-1}}}{} \le c_\cP \pa{1 +  c_A \nor{A \Lambda}{}}.
\end{align}
Now, we develop
\begin{align}\label{eq:double_p_perp} 
&\nor{\cP \Lambda^\perp \Gamma \Lambda^\perp \cP}{2,\delta} \le c_A^{\delta} \nor{A^\delta \cP \Lambda^\perp \Gamma^2 \Lambda^\perp \cP A^\delta}{2} \le c_A^\delta \nor{A^\delta \cP \Lambda^\perp \Gamma}{2}^2 \nonumber \\
&\qquad = c_A^\delta \nor{A^\delta \cP \Lambda^\perp \pa{\Gamma - \Lambda}}{2}^2 = c_A^\delta \nor{A^\delta \cP \Lambda^\perp A^{-\delta} A^\delta \pa{\Gamma - \Lambda} }{2}^2 \nonumber\\
&\qquad\le c_A^\delta \nor{A^\delta \cP \Lambda^\perp A^{-\delta}}{}^2 \nor{\Gamma - \Lambda }{2,\delta}^2   \le c_A^\delta \nor{A\cP \Lambda^\perp A^{-1}}{}^{2\delta} \nor{\Gamma - \Lambda }{2,\delta}^2 \nonumber\\
&\qquad \underset{\substack{\eqref{eq:bound_APLambdapAiv}}}{\le} \; \pa{c_A c_\cP^2}^\delta  \pa{1 +  c_A (1+c_A) \nor{A \Lambda}{}}^{2\delta}\nor{\Gamma - \Lambda }{2,\delta}^2.
\end{align}

\subsection{Definition and properties of partial inverses}%
\label{sub:Definition of partial inverses}

To prepare the next section, we need Liouvillian operators, which are standard tools to partially invert Hamiltonians acting on density matrices, see for instance \cite{Kato,Teufel03,BacDerWoj18,MonTeu19} and~\cite[Section 5.1]{CanKamLev21}. We show several basic equations that will be used.

We recall that $R_\mu = \pa{\cE_\mu - \cP H \cP}^{-1}_\perp$. We define the super-operators $\cL$ and $\cL^+$ acting on $\hls$ by
\begin{align}\label{eq:def_cLs} 
B \mapsto \cL B := [\cP H \cP, B], \qquad B \mapsto \cL^+ B := - \sum_{\mu=1}^\nu R_\mu B P_{\psi_\mu}
\end{align}
and the subspaces
\begin{align*}
\cO_1 := \{ B \in \hls , B = \cP \Lambda^\perp B \cP^\perp\}, \qquad \cO_2 := \{ B \in \hls , B = \cP^\perp B \cP \Lambda^\perp\}.
\end{align*}
By definition of $R_\mu$ we have 
\begin{align}\label{eq:KjHE} 
\cP\pa{\cE_\mu  - H}  R_\mu = \Lambda^\perp \cP = R_\mu\pa{\cE_\mu  - H} \cP .
\end{align}
We compute, for any $B \in \hls$,
\begin{align*}
	\cL^+ \cL B &= - \sum_{\mu=1}^\nu R_\mu [\cP H \cP, B] P_{\psi_\mu} \underset{\substack{\cP H \cP P_{\psi_\mu} \\= \cE_\mu P_{\psi_\mu}}}{=} \;  \sum_{\mu=1}^\nu \pa{ \cE_\mu R_\mu B P_{\psi_\mu} - R_\mu \cP H \cP B P_{\psi_\mu}} \\
		    & =\sum_{\mu=1}^\nu R_\mu \pa{\cE_\mu - H} \cP B P_{\psi_\mu}  \underset{\substack{\eqref{eq:KjHE}}}{=} \; \cP \Lambda^\perp B \sum_{\mu=1}^\nu P_{\psi_\mu} = \cP \Lambda^\perp B \Lambda.
\end{align*}
We can show that $\cL^+ \cL$ is the orthogonal projection onto $\cO_1$. We provide the details here as well for the sake of completeness. For any $B \in \hls$, we have
\begin{align*}
	\cL\cL^+ B &= \seg{\cP H \cP, - \sum_{\mu=1}^\nu R_\mu B P_{\psi_\mu} } = \sum_{\mu=1}^\nu \pa{\cE_\mu R_\mu B P_{\psi_\mu} - \cP H \cP R_\mu B P_{\psi_\mu}} \\
		   &= \sum_{\mu=1}^\nu \cP \pa{\cE_\mu - H} R_\mu B P_{\psi_\mu} = \Lambda^\perp \cP B \Lambda,
\end{align*}
hence $\cL \cL^+ = \cL^+ \cL$. Moreover, for any $F, B \in \hls$,
\begin{align*}
	\hs{F, \cL^+ \cL B} &= \tr F^* \Lambda^\perp \cP B \Lambda = \tr \Lambda F^* \Lambda^\perp \cP B = \tr \bpa{\cP \Lambda^\perp F \Lambda}^* B \\
	&= \hs{ \cL^+ \cL F, B}
\end{align*}
thus $(\cL^+ \cL)^* = \cL^+ \cL$. Finally, 
\begin{align*}
	\pa{\cL^+ \cL}^2 B = \pa{\cL^+ \cL} \bpa{\Lambda^\perp \cP B \Lambda} = \Lambda^\perp \cP\bpa{\Lambda^\perp \cP B \Lambda}\Lambda = \Lambda^\perp \cP B \Lambda = \cL^+ \cL B,
\end{align*}
hence $(\cL^+ \cL)^2 = \cL^+ \cL$, and we can conclude that $\cL^+ \cL$ is the orthogonal projection onto $\cO_1$.

From~\eqref{eq:commu}, we have that $\Lambda$ and $\Lambda^\perp$ commute with $\cP$ and $H$, hence for $Q, G \in \{\Lambda,\cP \Lambda^\perp\}$ and for any operator $B \in \hls$,
\begin{align}\label{eq:com_B} 
\cL \pa{Q B G} = Q \pa{\cL B} G.
\end{align}

\subsection{Treating $\cP \Lambda^\perp \Gamma \Lambda$ and $\Lambda \Gamma \cP \Lambda^\perp $}%
\label{sub:Treating lambda_lambda_perp}

We now use the Liouvillian operator to treat $\cP \Lambda^\perp \Gamma \Lambda$. The Euler-Lagrange equation for $\Gamma$ is $[H,\Gamma] = 0$ and can be verified by developing $\Gamma$ into projectors. There holds
\begin{align}\label{eq:this_one} 
	\cL \Gamma  & = [\cP H \cP , \Gamma] = \cP [H,\Gamma] \cP - \pa{\cP H [\cP^\perp, \Gamma] + [\cP^\perp, \Gamma] H \cP} \nonumber \\
		    & \underset{\substack{[H,\Gamma] = 0}}{=} \; -\cP H [\cP^\perp, \Gamma] - [\cP^\perp, \Gamma] H \cP.
\end{align}
Next,
\begin{align}\label{eq:lqlql} 
	\cP \Lambda^\perp \Gamma \Lambda &= \cL^+ \cL \Gamma \underset{\substack{\eqref{eq:def_cLs}\\\eqref{eq:this_one}}}{=} \; \sum_{\mu=1}^\nu R_\mu \pa{  \cP H [\cP^\perp, \Gamma] + [\cP^\perp, \Gamma] H \cP} P_{\psi_\mu} \nonumber \\
	& \underset{\substack{\cP^\perp P_{\psi_\mu} = 0 \\ R_\mu \cP^\perp = 0}}{=} \;\sum_{\mu=1}^\nu R_\mu \bpa{  H \cP^\perp\Gamma -  \Gamma \cP^\perp H} P_{\psi_\mu}.
\end{align}
Moreover,
\begin{align*}
	\cP^\perp \Gamma \cP = \cP^\perp \Gamma \Lambda + \cP^\perp \Gamma \Lambda^\perp \cP = \cP^\perp \Gamma \sum_{\mu=1}^{\nu} P_{\psi_\mu} + \cP^\perp \pa{\Gamma - \Lambda}^2 \Lambda^\perp \cP,
\end{align*}
where we see that the last term is quadratic in $\Gamma - \Lambda$ and hence will be negligible. Thus we remark the natural association
\begin{multline}\label{eq:association} 
	\cP \Lambda^\perp \Gamma \Lambda + \cP^\perp \Gamma \cP = \sum_{\mu=1}^\nu \pa{1 +R_\mu H}  \cP^\perp\Gamma P_{\psi_\mu} \\
	- \sum_{\mu=1}^{\nu}  R_\mu \Gamma \cP^\perp H P_{\psi_\mu} + \cP^\perp \pa{\Gamma - \Lambda}^2 \Lambda^\perp \cP.
\end{multline}
Taking the adjoint operator of~\eqref{eq:lqlql} yields
\begin{align*}
	\Lambda \Gamma \cP \Lambda^\perp  = \sum_{\mu=1}^\nu  P_{\psi_\mu} \bpa{\Gamma \cP^\perp H - H \cP^\perp \Gamma } R_\mu.
\end{align*}
As for the bounds, we have
\begin{align}\label{eq:minus_one_last} 
&\nor{(1+ R_\mu H) \cP^\perp \Gamma  P_{\psi_\mu} }{2,\delta} + \nor{ P_{\psi_\mu} \Gamma \cP^\perp (1+ H R_\mu) }{2,\delta} \nonumber \\
&\quad= \nor{A^\delta (1+ R_\mu H) \cP^\perp \Gamma  P_{\psi_\mu} \Lambda A^\delta A^{-\delta}}{2} + \nor{A^\delta   \Lambda P_{\psi_\mu}\Gamma \cP^\perp (1+ H R_\mu) A^\delta A^{-\delta}}{2}\nonumber \\
&\quad\le 2 c_A^\delta \nor{A^\delta (1+ R_\mu H) \cP^\perp \Gamma  P_{\psi_\mu} \Lambda A^\delta }{2}\nonumber\\
&\quad= 2 c_A^\delta \nor{A^\delta (1+ R_\mu H) \cP^\perp A^{-\delta} A^\delta \cP^\perp \Gamma  P_{\psi_\mu} \Lambda A^\delta }{2} \nonumber \\
&\quad\le 2 \pa{c_A\nor{A \Lambda }{}}^\delta \nor{A^\delta (1+ R_\mu H) \cP^\perp A^{-\delta}}{} \nor{\cP^\perp \Gamma}{2,\delta},
\end{align}
which is of order $1$ in $\nor{\Gamma-\Lambda}{2,\delta} $. Similarly,
\begin{align}\label{eq:last_ine} 
&\nor{R_\mu \Gamma \cP^\perp H P_{\psi_\mu}}{e,2} + \nor{P_{\psi_\mu} H \cP^\perp  \Gamma R_\mu }{e,2} \le 2 c_A^\delta \nor{A^\delta R_\mu \Gamma \cP^\perp H P_{\psi_\mu} A^\delta}{2} \nonumber \\
	&\qquad \underset{\substack{R_\mu \Lambda = 0}}{=} \; 2 c_A^\delta \nor{A^\delta R_\mu  \pa{\Gamma - \Lambda} \Gamma \bpa{\cP^\perp}^2 H \Lambda P_{\psi_\mu} \Lambda A^\delta}{2} \nonumber \\
	&\qquad \le 2 c_A^\delta  \nor{A^\delta R_\mu }{} \nor{ \Gamma-\Lambda}{} \nor{\Gamma \cP^\perp}{2} \nor{\cP^\perp H \Lambda}{} \nor{ \Lambda A}{}^\delta \nonumber \\
	&\qquad \le 2 \pa{c_A^2 \nor{ A\Lambda}{}}^\delta  \nor{A^\delta R_\mu}{} \nor{ \Gamma-\Lambda}{} \nor{\cP^\perp H \Lambda}{}  \nor{\cP^\perp\Gamma }{2,\delta} \nonumber\\
	&\qquad \le 2 \pa{c_A^2 \pa{1 + c_A \nor{ A\Lambda}{}}}^\delta  \nor{A^\delta R_\mu}{} \nor{ \Gamma-\Lambda}{2,\delta} \nor{\cP^\perp H \Lambda}{}  \nor{\cP^\perp\Gamma }{2,\delta}.
\end{align}
Finally, 
\begin{align*}
	&\nor{\cP^\perp \Gamma \Lambda^\perp \cP}{2,\delta} + \nor{\cP  \Lambda^\perp \Gamma\cP^\perp}{2,\delta}  \le 2 c_A^\delta \nor{A^\delta \cP^\perp \Gamma \pa{\Gamma - \Lambda} \Lambda^\perp \cP A^\delta}{2} \\
	&\qquad = 2 c_A^\delta \nor{A^\delta \cP^\perp \Gamma \pa{\Gamma - \Lambda} A^\delta A^{-\delta} \Lambda^\perp A^\delta A^{-\delta} \cP A^\delta}{2} \\
	&\qquad \le 2 c_A^\delta c_\cP^\delta \nor{\cP^\perp \Gamma}{2,\delta}\nor{\Gamma - \Lambda}{2,\delta} \nor{A^{-\delta} \Lambda^\perp A^\delta }{} \\
	&\qquad \le 2 \pa{c_A c_\cP \pa{1 + c_A\nor{A \Lambda}{}} }^\delta \nor{\cP^\perp \Gamma}{2,\delta} \nor{\Gamma - \Lambda}{2,\delta}.
\end{align*}

\subsection{First form}%
\label{sub:First form for}

Remark that in this form, gathering all the terms, we have
\begin{multline*}
\Gamma - \Lambda =\cP^\perp \Gamma \cP^\perp + \cP \Gamma \cP^\perp +\cP^\perp \Gamma \cP  + \Lambda \pa{\Gamma - \Lambda} \Lambda + \cP \Lambda^\perp \Gamma \Lambda^\perp \cP \\
 +\sum_{\mu=1}^\nu  \pa{P_{\psi_\mu} \bpa{\Gamma \cP^\perp H - H \cP^\perp \Gamma } R_\mu  -  R_\mu \bpa{\Gamma \cP^\perp H - H \cP^\perp \Gamma }  P_{\psi_\mu}}.
 \end{multline*}
 Now using~\eqref{eq:association} to associate $\cP^\perp \Gamma \cP$ with $\cP \Lambda \Gamma \Lambda$, we obtain~\eqref{eq:explicit_diff} where
\begin{multline}\label{eq:omega_def} 
\Omega = \cP^\perp \pa{\Gamma -\Lambda}^2 \cP^\perp + \Lambda \pa{\Gamma - \Lambda} \Lambda + \pa{\cP^\perp \pa{\Gamma - \Lambda}^2 \Lambda^\perp \cP  + \text{adj.}} \\
+ \cP \Lambda^\perp \pa{\Gamma - \Lambda}^2 \Lambda^\perp \cP - \sum_{\mu=1}^\nu  \pa{R_\mu \pa{\Gamma -\Lambda}^2 \cP^\perp H P_{\psi_\mu}  + \textup{adj.}}.
\end{multline}
From~\eqref{eq:ineq_lambda_gamma_lambda} we know that $\Lambda \pa{\Gamma - \Lambda} \Lambda$ is quadratic in $\Gamma-\Lambda$. Hence, we immediately see with the form~\eqref{eq:explicit_diff} that when $\Gamma - \Lambda$ is small, the leading term is $\sum_{\mu=1}^\nu  \pa{(1 + H R_\mu )\cP^\perp \Gamma P_{\psi_\mu}+ \text{adj.}}$, and $\Omega$ is quadratic in $\Gamma - \Lambda$, and thus much smaller.

We obtain~\eqref{eq:bound_Omega} from the developed inequalities.

\section{Proof of Proposition~\ref{prop:non-deg_main_thm}}
\label{sec:Proof of Theorem non deg}
We now treat the vector case and aim at showing~\eqref{eq:equality_diff} and~\eqref{eq:equality_E_diff}.

\subsection{Equality on eigenvectors}%
\label{sub:Equality}
Let us keep $\nu \in \Np$ general first, we will assume $\nu = 1$ later.

\begin{lemma}
	Given the setting of Proposition~\ref{prop:non-deg_main_thm}, for any $\mu \in \{1,\dots,\nu\}$, assuming that $\ps{\phi_\mu,\psi_\mu} \in \R$, there holds
\begin{multline}\label{eq:lema_comp} 
	\bpa{\Lambda^\perp + P_{\psi_\mu}} \pa{\phi_\mu - \psi_\mu} = \pa{1 + R_\mu H} \cP^\perp \phi_\mu \\
	- \f 12 \nor{\phi_\mu - \psi_\mu}{}^2 \psi_\mu + \pa{\cE_\mu - E_\mu} R_\mu \pa{\phi_\mu - \psi_\mu}.
\end{multline}
\end{lemma}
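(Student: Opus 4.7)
The plan is to apply the partial inverse $R_\mu$ to the residual equation satisfied by $\phi_\mu-\psi_\mu$ and then exploit the gauge condition $\ps{\phi_\mu,\psi_\mu}\in\R$ to convert the resulting identity into the stated form. First I would record the basic algebraic facts that follow from the definition~\eqref{eq:def_R} of $R_\mu$: namely $R_\mu\cP^\perp=0$, $R_\mu\psi_\mu=0$, and, as an identity of operators on $\cH$,
\[
R_\mu(\cE_\mu-\cP H\cP) \;=\; \Lambda^\perp \cP,
\]
which one verifies by checking separately on the three summands of the decomposition $\cH=\Lambda\cH\oplus\Lambda^\perp\cP\cH\oplus\cP^\perp\cH$. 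The first two identities also imply the convenient simplification $R_\mu \cP H=R_\mu H$.

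Next, I would apply $\cE_\mu-\cP H\cP$ to $\phi_\mu-\psi_\mu$. Since $\psi_\mu$ is an eigenvector, only $\phi_\mu$ contributes, and writing $\cP H\cP\phi_\mu=\cP H\phi_\mu-\cP H\cP^\perp\phi_\mu=E_\mu\cP\phi_\mu-\cP H\cP^\perp\phi_\mu$ gives
\[
(\cE_\mu-\cP H\cP)(\phi_\mu-\psi_\mu)
=(\cE_\mu-E_\mu)\cP\phi_\mu+\cE_\mu\cP^\perp\phi_\mu+\cP H\cP^\perp\phi_\mu.
\]
Applying $R_\mu$, the term $R_\mu\cE_\mu\cP^\perp\phi_\mu$ vanishes; using $R_\mu\cP H=R_\mu H$ and $R_\mu\psi_\mu=0$ (to replace $R_\mu\phi_\mu$ by $R_\mu(\phi_\mu-\psi_\mu)$), the right-hand side becomes $(\cE_\mu-E_\mu)R_\mu(\phi_\mu-\psi_\mu)+R_\mu H\cP^\perp\phi_\mu$. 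The left-hand side, thanks to $R_\mu(\cE_\mu-\cP H\cP)=\Lambda^\perp\cP$ and $\Lambda^\perp\psi_\mu=0$, simplifies to $\Lambda^\perp\cP(\phi_\mu-\psi_\mu)=\Lambda^\perp\phi_\mu-\cP^\perp\phi_\mu$ (using $\Lambda^\perp\cP^\perp=\cP^\perp$). Equating the two yields
\[
\Lambda^\perp\phi_\mu \;=\; (1+R_\mu H)\cP^\perp\phi_\mu + (\cE_\mu-E_\mu)R_\mu(\phi_\mu-\psi_\mu).
\]

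To conclude, I would expand the left-hand side of~\eqref{eq:lema_comp}: since $\Lambda^\perp\psi_\mu=0$ and $P_{\psi_\mu}\phi_\mu=\ps{\psi_\mu,\phi_\mu}\psi_\mu$,
\[
(\Lambda^\perp+P_{\psi_\mu})(\phi_\mu-\psi_\mu)=\Lambda^\perp\phi_\mu+(\ps{\psi_\mu,\phi_\mu}-1)\psi_\mu.
\]
The gauge assumption $\ps{\phi_\mu,\psi_\mu}\in\R$ together with the polarization identity $\nor{\phi_\mu-\psi_\mu}{}^2=2-2\re\ps{\psi_\mu,\phi_\mu}$ gives $\ps{\psi_\mu,\phi_\mu}-1=-\tfrac12\nor{\phi_\mu-\psi_\mu}{}^2$, and substituting the earlier expression for $\Lambda^\perp\phi_\mu$ produces the claimed identity.

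The computation is essentially linear algebra, so there is no deep obstacle; the only delicate step is the bookkeeping around the partial inverse $R_\mu$, specifically that one must correctly identify its action as $\Lambda^\perp\cP$ after composition with $\cE_\mu-\cP H\cP$ and carefully track which copies of $\cP$ or $\cP^\perp$ can be inserted for free. Once this algebraic fact is in hand, the remainder is a two-line expansion combined with the real-gauge identity.
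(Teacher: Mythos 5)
Your proposal is correct and follows essentially the same route as the paper: apply the partial inverse $R_\mu$ to the residual equation for $\phi_\mu-\psi_\mu$ via the identity $R_\mu\pa{\cE_\mu-\cP H\cP}=\Lambda^\perp\cP$ (the paper's~\eqref{eq:KjHE}), and handle the $P_{\psi_\mu}$-component with the real-gauge computation $\ps{\psi_\mu,\phi_\mu}-1=-\tfrac12\nor{\phi_\mu-\psi_\mu}{}^2$. The only cosmetic difference is that you absorb the $\cP^\perp$ piece through $\Lambda^\perp\cP=\Lambda^\perp-\cP^\perp$ instead of splitting $\Lambda^\perp+P_{\psi_\mu}=\cP^\perp+\cP\Lambda^\perp+P_{\psi_\mu}$ and adding $\cP^\perp\pa{\phi_\mu-\psi_\mu}=\cP^\perp\phi_\mu$ at the end, as the paper does.
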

The remaining component of $\phi_\mu - \psi_\mu$ which is not taken into account in this lemma is $\Lambda P_{\psi_\mu}^\perp (\phi_\mu - \psi_\mu)$.
\begin{proof}
We have $\pa{H - E_\mu} \phi_\mu = 0$ and $\cP \pa{H - \cE_\mu} \psi_\mu = 0$, thus
\begin{align}\label{eq:funda} 
\cP \pa{H - \cE_\mu} \pa{\phi_\mu - \psi_\mu} = \pa{E_\mu - \cE_\mu} \cP \phi_\mu.
\end{align}
We first use $[\cP H \cP, \Lambda] = 0$, hence $\cP H \cP \Lambda^\perp = \Lambda^\perp \cP H \cP$ and applying $\Lambda^\perp$ on the left we obtain $\Lambda^\perp \cP H \cP \Lambda^\perp = \Lambda^\perp \cP H \cP$, so
\begin{align*}
&\cP \Lambda^\perp \pa{\cE_\mu - H} \Lambda^\perp \cP \pa{\phi_\mu - \psi_\mu}  =   \cP \Lambda^\perp\pa{\cE_\mu - H}  \cP \pa{\phi_\mu - \psi_\mu} \\
&\qquad = \Lambda^\perp \cP \pa{\cE_\mu - H} \pa{\phi_\mu - \psi_\mu}- \Lambda^\perp \cP \pa{\cE_\mu - H} \cP^\perp \pa{\phi_\mu - \psi_\mu} \\
&\qquad \underset{\substack{\eqref{eq:funda}}}{=} \; \pa{\cE_\mu - E_\mu} \Lambda^\perp\cP \phi_\mu +  \Lambda^\perp \cP H \cP^\perp \phi_\mu.
\end{align*}
We apply $R_\mu$ and use~\eqref{eq:KjHE}, $R_\mu \Lambda^\perp \cP = R_\mu$ and $R_\mu \psi_\mu = 0$ to obtain
\begin{align}\label{eq:phi_m_psi} 
	\cP \Lambda^\perp \pa{\phi_\mu - \psi_\mu} = R_\mu H \cP^\perp \phi_\mu +  \pa{\cE_\mu - E_\mu} R_\mu \pa{\phi_\mu  - \psi_\mu}.
\end{align}
Moreover, in a gauge where $\ps{\psi_\mu, \phi_\mu} \in \R$,
\begin{align*}
\ps{\psi_\mu, \phi_\mu} = 1 - \f 12 \nor{\phi_\mu - \psi_\mu}{}^2
\end{align*}
hence
\begin{align}\label{eq:proj_psi} 
	P_{\psi_\mu} \pa{\phi_\mu - \psi_\mu} = \bpa{\ps{\psi_\mu, \phi_\mu} - 1} \psi_\mu  = - \f 12 \nor{\phi_\mu - \psi_\mu}{}^2 \psi_\mu.
\end{align}
Finally, $\Lambda^\perp + P_{\psi_\mu} = \cP^\perp + \cP \Lambda^\perp + P_{\psi_\mu}$ and we obtain~\eqref{eq:lema_comp} by summing~\eqref{eq:phi_m_psi} and~\eqref{eq:proj_psi} with $\cP^\perp \pa{\phi_\mu - \psi_\mu} = \cP^\perp \phi_\mu$.
\end{proof}

We obtain~\eqref{eq:equality_diff} by applying this lemma to $\nu = 1$, in which case $\Lambda^\perp + P_{\psi_\mu} = 1$. For $\nu \ge 2$, this methods with vectors does not enable to obtain a bound on the remaining component $\Lambda  P_{\psi_\mu}^\perp$, that is why the previous density matrix approach is useful.

\subsection{Equality on eigenvalues}%
\label{sub:Equality on eigenvalues}

Let us first present a well-known and basic estimate showing that the errors between eigenvalues can be expressed as the square of the error between eigenvectors. We give a proof for the sake of completeness.

\begin{lemma}[Eigenvalue error is quadratic in eigenvector error]\label{lem:err_eigenvals_err_eigenvects} 
Take two self-adjoint operators $A$ and $H$, assume that $ \nor{A^{-1}}{} <+\infty$ and $c_H := \nor{A^{-1} H A^{-1}}{} <+\infty$. Take $\p$ in the form domain of $H$ and $\phi$ in the domain of $H$, such that $H \phi = E \phi$, $\nor{\psi}{} = \nor{\phi}{} = 1$, and define $\cE := \ps{\psi, H \psi}$. Then
\begin{align}\label{eq:diff_errs}
E - \cE &= \ps{\phi -\psi , (E - H) \pa{ \phi- \psi } }, \\
\ab{E - \cE} &\le  \nor{A^{-1} (H-E) A^{-1}}{}\mymin{\theta \in [0,2\pi[} \nor{A(\phi - e^{i\theta} \psi)}{}^2.\label{eq:diff_errs_abs}
\end{align}
\end{lemma}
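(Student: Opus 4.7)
The plan is to first establish the algebraic identity~\eqref{eq:diff_errs} by a direct expansion, and then deduce the norm inequality~\eqref{eq:diff_errs_abs} by inserting the factor $A^{-1}A=\mathrm{id}$ and optimizing over a global phase.

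For~\eqref{eq:diff_errs}, I would expand the quadratic form:
\begin{align*}
\langle \phi - \psi, (E - H)(\phi - \psi)\rangle &= \langle \phi, (E-H)\phi\rangle + \langle\psi,(E-H)\psi\rangle \\
&\quad - \langle\phi,(E-H)\psi\rangle - \langle\psi,(E-H)\phi\rangle.
\end{align*}
The first term is $0$ since $H\phi=E\phi$, the second equals $E-\cE$ by definition, and the two cross terms cancel: using $H\phi = E\phi$ together with the self-adjointness of $H$ (which in particular forces $E\in\R$), one finds $\langle\psi,(E-H)\phi\rangle = 0$ and $\langle\phi,(E-H)\psi\rangle = E\langle\phi,\psi\rangle - \overline{\langle\psi,H\phi\rangle} = 0$. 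A small point worth checking is that all the pairings are well defined, which follows from $\psi$ being in the form domain of $H$ (equivalently of $H-E$) and from $\phi\in D(H)$.

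For~\eqref{eq:diff_errs_abs}, I would start from~\eqref{eq:diff_errs} and insert $A^{-1}A$ on both sides of $E - H$ to obtain
\begin{align*}
E - \cE = \langle A(\phi-\psi),\, A^{-1}(E-H)A^{-1}\, A(\phi-\psi)\rangle,
\end{align*}
which by Cauchy--Schwarz and the operator-norm bound yields
\begin{align*}
\ab{E - \cE} \le \nor{A^{-1}(H-E)A^{-1}}{}\, \nor{A(\phi-\psi)}{}^2.
\end{align*}
The only thing left is to observe that the scalar $\cE = \langle\psi,H\psi\rangle$ is invariant under the rescaling $\psi\mapsto e^{i\theta}\psi$ for any $\theta\in[0,2\pi[$, and so is $E$ itself; consequently the identity~\eqref{eq:diff_errs} and the above bound hold verbatim with $\psi$ replaced by $e^{i\theta}\psi$. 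Taking the minimum over $\theta$ on the right-hand side produces~\eqref{eq:diff_errs_abs}.

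I do not anticipate a real obstacle here: both statements are classical Rayleigh--Ritz-type computations. The only mild care to take is ensuring the form $\langle u,(H-E)v\rangle$ is interpreted via the factorization through $A^{-1}(H-E)A^{-1}$, which is a bounded operator on $\cH$ under the hypothesis $c_H<+\infty$ together with $\|A^{-1}\|<+\infty$ (so that $\|A^{-1} E A^{-1}\|\le |E|\,\|A^{-1}\|^2<\infty$); this justifies all the manipulations for $\phi,\psi$ merely in the form domain.
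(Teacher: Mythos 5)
Your proposal is correct and follows essentially the same route as the paper: the identity~\eqref{eq:diff_errs} is obtained from $(E-H)\phi=0$, self-adjointness and $\nor{\psi}{}=1$ (the paper does this by a short chain of manipulations rather than the four-term expansion, but the content is identical), and~\eqref{eq:diff_errs_abs} follows by inserting $A^{-1}A$, applying Cauchy--Schwarz, and optimizing over the phase $\psi\mapsto e^{i\theta}\psi$. Your added remark on interpreting the pairings through the bounded operator $A^{-1}(H-E)A^{-1}$ is a reasonable justification that the paper leaves implicit.
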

Usually the bound~\eqref{eq:diff_errs_abs} is used as 
\begin{align*}
\ab{E - \cE} \le \pa{\nor{A^{-1}}{} ^2 \ab{E} + \nor{A^{-1} H A^{-1}}{}}\mymin{\theta \in [0,2\pi[} \nor{A(\phi - e^{i\theta} \psi)}{}^2.
\end{align*}
\begin{proof}
By using $(E-H) \phi = 0$ and $\nor{\psi}{}=1$, we have
 \begin{multline*}
	 \ps{\phi -\psi , (E-H) \pa{ \phi- \psi } } = -\ps{\phi -\psi , (E-H) \psi  }\\
	 = -\ps{(E-H) \pa{\phi -\psi} ,  \psi  } = \ps{(E-H) \psi ,  \psi  } = E -\ps{\psi , H \psi } = E - \cE.
\end{multline*}
Then
\begin{align*}
	\ab{E - \cE} &= \ab{\ps{A(\phi -\psi) , A^{-1} (E - H) A^{-1} A \pa{ \phi- \psi } }} \\
&\le \nor{A^{-1} (E-H) A^{-1}}{} \nor{A(\phi - \psi)}{}^2.
\end{align*}
To conclude, we change $\p \rightarrow e^{i\theta}\p$.
\end{proof}

We now have $\nu = 1$ and remove the subscript $1$ everywhere. Let us now show~\eqref{eq:equality_E_diff}. First,
\begin{align}\label{eq:interm_eq} 
	R \pa{H-\cE}\pa{1 + R H} \cP^\perp &= R (H-\cE) \cP^\perp + R \pa{H- \cE} R H \cP^\perp \nonumber\\
					   &\underset{\substack{R \cP^\perp = 0}}{=} \; R H \cP^\perp + R \pa{H- \cE} \cP R H \cP^\perp \nonumber\\
					   &\underset{\substack{R (\cE - H) \cP = \cP P_{\psi}^\perp}}{=} \; R H \cP^\perp - \cP P_{\psi}^\perp R H \cP^\perp \underset{\substack{\cP P_{\psi}^\perp R = R}}{=} \; 0.
\end{align}
Moreover, using~\eqref{eq:equality_diff} we have
\begin{align}\label{eq:intermios} 
&\pa{\cE - H} \pa{\phi - \psi} \underset{\substack{(\cP H-\cE) \psi = 0}}{=} \; \pa{\cE - H} \pa{\pa{1 + R H} \cP^\perp \phi + \pa{\cE - E} R \phi} \nonumber \\
&\qquad \qquad \qquad \qquad \qquad \qquad\qquad \qquad \qquad  + \tfrac 12 \nor{\phi - \psi}{}^2 \cP^\perp H \psi \nonumber \\
&= \pa{\cE - H} \pa{1 + R H} \cP^\perp \phi + \pa{\cE - E} \cP \pa{\cE - H}  R \phi \nonumber \\
& \qquad \qquad \qquad \qquad \qquad  + \pa{\cE - E}\cP^\perp \pa{\cE - H}  R \phi + \tfrac 12 \nor{\phi - \psi}{}^2 \cP^\perp H \psi \nonumber \\
&= \pa{\cE - H} \pa{1 + R H} \cP^\perp \phi + \pa{\cE -E} \cP P_{\psi}^\perp \phi + \pa{E - \cE}\cP^\perp H R \phi \nonumber \\
&\qquad \qquad \qquad \qquad \qquad \qquad\qquad \qquad \qquad  + \tfrac 12 \nor{\phi - \psi}{}^2 \cP^\perp H \psi.
\end{align}
Similarly as in~\eqref{eq:diff_errs}, using $(E-H) \phi = 0$ and $\nor{\psi}{}=1 $, we have
\begin{align}\label{eq:two_parts} 
	&\qquad E - \cE = \ps{\phi - \psi, (E - H) \pa{\phi - \psi}} \nonumber \\
	&= \ps{\phi - \psi, (\cE - H) \pa{\phi - \psi}} + (E-\cE) \nor{\phi - \psi}{}^2  \nonumber \\
	&\underset{\substack{\eqref{eq:equality_diff}}}{=} \; \ps{ \pa{1 + R H} \cP^\perp \phi, \pa{\cE - H} \pa{\phi - \psi}} - \tfrac 12 \nor{\phi - \psi}{}^2 \ps{ \psi , \pa{\cE - H} \pa{\phi - \psi}} \nonumber \\
	&\qquad + \pa{\cE - E} \ps{ R \phi , \pa{\cE - H} \pa{\phi - \psi}}+ (E-\cE) \nor{\phi - \psi}{}^2.
\end{align}
We now compute each of those terms. First, by~\eqref{eq:intermios} we have
\begin{multline*}
\ps{\pa{1 + R H} \cP^\perp \phi , (\cE - H) (\phi - \psi)} \\
=\ps{\pa{1 + R H} \cP^\perp \phi , \pa{\cE - H} \pa{1 + R H} \cP^\perp \phi} \\
+ (\cE - E) \ps{\pa{1 + R H} \cP^\perp \phi , \cP P_{\psi}^\perp \phi} + (E - \cE) \ps{\pa{1 + R H} \cP^\perp \phi, \cP^\perp H R \phi} \\
+\tfrac 12 \nor{\phi - \psi}{}^2 \ps{\pa{1 + R H} \cP^\perp \phi , \cP^\perp H \psi}.
\end{multline*}
Then using $R \cP^\perp = 0$ and~\eqref{eq:interm_eq}, we get 
\begin{multline*}
\ps{\pa{1 + R H} \cP^\perp \phi , (\cE - H) (\phi - \psi)} =\ps{ \cP^\perp \phi , \pa{\cE - H} \pa{1 + R H} \cP^\perp \phi} \\
+ (\cE - E) \ps{R H \cP^\perp \phi , \cP P_{\psi}^\perp \phi} + (E - \cE) \ps{ \cP^\perp \phi, \cP^\perp H R \phi} \\
+\tfrac 12 \nor{\phi - \psi}{}^2 \ps{\cP^\perp \phi , \cP^\perp H \psi} \\
\underset{\substack{P_{\psi}^\perp \cP R = R}}{=} \;\ps{ \cP^\perp \phi , \pa{\cE - H} \pa{1 + R H} \cP^\perp \phi} +\tfrac 12 \nor{\phi - \psi}{}^2 \ps{\cP^\perp \phi , H \psi},
\end{multline*}
giving the first term of~\eqref{eq:two_parts}. Using~\eqref{eq:intermios}, the second term of~\eqref{eq:two_parts} comes from 
\begin{multline*}
	\ps{\psi, (\cE - H) (\phi - \psi)} =   \ps{\cP\psi, \pa{\cE - H} \pa{1 +RH} \cP^\perp \phi} \\
= \ps{\cP\psi , (\cE - H) \cP^\perp \phi} + \ps{\cP\psi, (\cE - H) R H \cP^\perp \phi} \underset{\substack{\cP (\cE - H) R \\= P_{\psi}^\perp \cP}}{=} \;-\ps{H \psi, \cP^\perp \phi}.
\end{multline*}
 The third term of~\eqref{eq:two_parts} comes from
\begin{multline*}
	\ps{R \phi , \pa{\cE- H} \pa{\phi - \psi}} \underset{\substack{\eqref{eq:interm_eq},\eqref{eq:intermios} \\ R \cP^\perp = 0}}{=} \; \ps{R \phi , \pa{\cE - E} \cP P_{\psi}^\perp \phi} \\
						  \underset{\substack{R \cP P_{\psi}^\perp = R}}{=} \; \pa{\cE - E} \ps{\phi, R \phi} \underset{\substack{R \psi = 0}}{=} \; \pa{\cE - E} \ps{\phi -\psi, R \pa{\phi - \psi}}.
\end{multline*}
Summing all the terms of~\eqref{eq:two_parts} yields
\begin{multline*}
	E - \cE = \ps{\cP^\perp \phi, \pa{\cE - H} \pa{1 + R H} \cP^\perp \phi} + \pa{E-\cE}^2 \ps{\phi - \psi, R \pa{\phi - \psi}}  \nonumber\\
		+ (E-\cE) \nor{\phi - \psi}{}^2 + \nor{\phi - \psi}{}^2 \re \ps{\cP^\perp \phi, H \psi}.
\end{multline*}
Moreover, $\bpa{(\cE - H)(1+ RH)}^* = (1+HR)(\cE-H) = (\cE - H)(1+ RH)$ is self-adjoint so $\ps{\cP^\perp \phi, \pa{\cE - H} \pa{1 + R H} \cP^\perp \phi} \in \R$, $\ps{\phi - \psi, R \pa{\phi - \psi}} \in \R$. To conclude, we use that $\ps{\cP^\perp \phi, H \psi} = -\ps{\cP^\perp \phi, (H-E) (\phi - \psi)}$.

\subsection{Inequalities~\eqref{eq:ineq_cons} and~\eqref{eq:proof_E_cE_bounded}}%
\label{sub:Inequalities}

From~\eqref{eq:equality_diff} we have
\begin{align*}
	&\nor{\phi - \psi}{e} \le \nor{A \pa{1 + R H}\cP^\perp \phi}{} + \f 12 \nor{\phi - \psi}{}^2 \nor{\psi}{e} + \ab{\cE - E} \nor{AR \pa{\phi - \psi}}{} \\
	& \le \nor{\pa{1 + A R A A^{-1} H A^{-1} } A \cP^\perp \phi}{} + \f 12 c_A\nor{\phi - \psi}{}  \nor{\phi - \psi}{e} \nor{\psi}{e} \\
	&\qquad \qquad \qquad \qquad + \nor{A R}{} \ab{\cE - E}  \nor{\phi - \psi}{} \\
	& \le \pa{1 + \nor{ARA}{} c_H}\nor{\cP^\perp \phi}{e} + c_A \pa{\f 12 \nor{\phi - \psi}{} \nor{\psi}{e} + \nor{A R}{} \ab{\cE - E} }\nor{\phi - \psi}{e},
\end{align*}
and we obtain~\eqref{eq:ineq_cons} when
\begin{align*}
c_A \pa{\f 12 \nor{\phi - \psi}{} \nor{\psi}{e} + \nor{A R}{} \ab{\cE - E} }\le \f 12.
\end{align*}
Proving~\eqref{eq:proof_E_cE_bounded} uses~\eqref{eq:diff_errs}.

We can obtain~\eqref{eq:bbb} with the same method, which also needs to use~\eqref{eq:minus_one_last}.


\section{Bounds on the Rayleigh-Schrödinger series\\in perturbation theory}%
\label{sec:Bounds on the Rayleigh-Schrödinger series in perturbation theory}

In the proofs of the theorems of Section~\ref{sec:Application to RBM+PT}, we will need some general results about perturbation theory, which we show here. The main results are Lemma~\ref{lem:bound_phi_n} and Lemma~\ref{lem:bound_phi_n_deg} on the boundedness of the Rayleigh-Schrödinger series $E^n_\mu$ and $\phi^n_\mu$. 

We take the context of an analytic and self-adjoint operator family $H(\lambda)$, presented in Section~\ref{sec:definitions_rbm_pt}. In particular, we consider a series of operators
\begin{align*}
H(\lambda) = \sum_{n=0}^{+\infty} \lambda^n H^n,
\end{align*}
and a cluster of eigenmodes $\pa{E_\mu(\lambda), \phi_\mu(\lambda)}_{\mu=1}^\nu$ of $H(\lambda)$, where all those maps are analytic in $\lambda \in ]-\lambda_0,\lambda_0[$. We define respectively an energy norm and a parameter norm, for any operator $B$ of $\cH$, by respectively
\begin{align*}
\nor{B}{2,ee} := \nor{A B A}{2}, \qquad \qquad \nor{B}{p} := \nor{A^{-1} B A^{-1}}{}.
\end{align*}
Moreover, for any $B$ we have $\nor{B}{2,1} \le \nor{A^{-1}}{} \nor{B}{2,ee}$ where the energy norm $\nor{\cdot}{2,1} $ was defined in~\eqref{eq:energy_norm}. We use intermediate normalization, which is reviewed in Section~\ref{sec:interm_normalization}. We set
\begin{multline*} 
\Phi_\mu(\lambda) := \f{\phi_\mu(\lambda)}{\ps{\phi_\mu^0,\phi_\mu(\lambda)}}, \qquad \phi_\mu^n := \f{1}{n!}  \pa{\f{\d^n }{\d \lambda^n} \phi_\mu(\lambda) }_{\mkern 1mu \vrule height 2ex\mkern2mu \lambda = 0}, \\
\Phi_\mu^n := \f{1}{n!}  \pa{\f{\d^n }{\d \lambda^n} \Phi_\mu(\lambda) }_{\mkern 1mu \vrule height 2ex\mkern2mu \lambda = 0}, \qquad E_\mu^n := \f{1}{n!}  \pa{\f{\d^n }{\d \lambda^n} E_\mu(\lambda) }_{\mkern 1mu \vrule height 2ex\mkern2mu \lambda = 0}.
\end{multline*}

\subsection{A note on analyticity}%
\label{sub:A note on analyticity}

Let us make a comment on analyticity in Section~\ref{sub:Analytic family of operators}. Since $H^n \pa{H^0  - r}^{-1}$ is bounded, then $D(H^0) \subset D(H^n)$ and $H^n$ is $H^0$-bounded, so by the same proof as in~\cite[Lemma p16]{ReeSim4}, $H(\lambda)$ is an analytic family of type (A) (in the sense of definition~\cite[Definition p16]{ReeSim4}). Also using a similar proof as in~\cite[Theorem XII.9 p17]{ReeSim4}, $H(\lambda)$ is an analytic family in the sense of Kato. It is thus possible to apply~\cite[Theorem XII.13]{ReeSim4}.

\subsection{A preliminary bound on ``Cauchy squares''}%
\label{sub:A preliminary bound on the Rayleigh-Schrödinger series}

First we will need the following result, which is a bound on a series that we can call the ``Cauchy square'' series.

\begin{lemma}[Upper bound on the Cauchy square series]\label{lem:cauchy_square} 
Take $\alpha, \beta >0$ and let us define $x_1 := \alpha$ and for any $n \in \Np$, $n\ge 2$,
\begin{align*}
x_n := \beta \sum_{s=1}^{n-1} x_{n-s} x_s.
\end{align*}
Then for any $n \in \Np$,
\begin{align}\label{eq:bound_x_n} 
x_n \le \alpha  \f{\pa{2 \zeta\pa{\tfrac{3}{2}} \alpha \beta}^{n-1}}{n^{\f 32}}.
\end{align}
where $\zeta$ is Riemann's zeta function so $2 \zeta\pa{\tfrac{3}{2}} \simeq 5.2248...$
\end{lemma}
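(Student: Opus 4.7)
The plan is to proceed by strong induction on $n \in \Np$. The base case $n=1$ is immediate since $x_1 = \alpha$ equals the right-hand side of~\eqref{eq:bound_x_n} at $n=1$, so the inequality holds with equality.

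For the inductive step, I would assume $x_k \le \alpha\pa{2\zeta(3/2)\alpha\beta}^{k-1}/k^{3/2}$ for all $k\in\{1,\dots,n-1\}$. Plugging this into the recurrence $x_n = \beta\sum_{s=1}^{n-1}x_sx_{n-s}$ and collecting the powers of $\alpha$, $\beta$ and $2\zeta(3/2)$ gives
\begin{align*}
	x_n \le \alpha\pa{2\zeta\pa{\tfrac 32}\alpha\beta}^{n-1}\cdot\frac{1}{2\zeta(3/2)}\sum_{s=1}^{n-1}\frac{1}{s^{3/2}(n-s)^{3/2}}\,,
\end{align*}
so the induction closes provided one establishes the purely numerical sum estimate
\begin{align*}
S_n := \sum_{s=1}^{n-1}\frac{1}{s^{3/2}(n-s)^{3/2}} \le \frac{2\zeta(3/2)}{n^{3/2}}\,, \qquad n\ge 2\,.
\end{align*}

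The first move I would try on this sum inequality is the partial-fraction identity $n/(s(n-s)) = 1/s + 1/(n-s)$, which rewrites the target as $\sum_{s=1}^{n-1}\bpa{\tfrac{1}{s}+\tfrac{1}{n-s}}^{3/2}\le 2\zeta(3/2)$. Splitting the sum symmetrically at $s=n/2$, on the half $s\le n/2$ one has $1/(n-s)^{3/2}\le (2/n)^{3/2}$, so this half is bounded by $\tfrac{2^{3/2}}{n^{3/2}}\sum_{s\ge 1}1/s^{3/2}=2^{3/2}\zeta(3/2)/n^{3/2}$; doubling by symmetry then gives $S_n \le 2^{5/2}\zeta(3/2)/n^{3/2}$, which is only a factor $2^{3/2}$ away from the claim.

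The main obstacle is hence to sharpen the constant from $2^{5/2}\zeta(3/2)$ down to the claimed $2\zeta(3/2)$. Singularity analysis of the generating function $F(z)=\mathrm{Li}_{3/2}(z)$, whose square is the generating function of $(S_n)_{n\ge 2}$, shows $n^{3/2}S_n \to 2\zeta(3/2)$, so the constant is exactly the asymptotic value and cannot be improved; numerically the map $n\mapsto n^{3/2}S_n$ appears to be monotone increasing, in which case the uniform bound would reduce to the limit. Alternatively, I would identify $x_n$ explicitly through the Catalan recurrence, giving $x_n = C_{n-1}\alpha^n\beta^{n-1}$, and then invoke the sharp Stirling-type bound $\binom{2m}{m}\le 4^m/\sqrt{\pi m}$ to obtain $C_{n-1}\le \sqrt{2/\pi}\,4^{n-1}/n^{3/2}$ for $n\ge 2$; since $2\zeta(3/2)/4 \simeq 1.306 > 1$, the residual factor $(2\zeta(3/2)/4)^{n-1}$ dominates $\sqrt{2/\pi}\simeq 0.798$ for every $n\ge 2$, with the case $n=1$ handled by direct inspection.
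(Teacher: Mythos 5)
Your proposal is correct, and its decisive branch takes a genuinely different route from the paper. The paper proves the lemma by induction exactly as in your first sketch: it reduces to the sum estimate $S_n=\sum_{s=1}^{n-1}\pa{s(n-s)}^{-3/2}\le 2\zeta(3/2)\,n^{-3/2}$, evaluates $S_n$ via the Abel--Plana formula to identify $\lim_n n^{3/2}S_n=2\zeta(3/2)$, and then asserts (without further detail) that $\bpa{n^{3/2}S_n}_{n\ge 2}$ is increasing so that the limit is also a uniform bound. Your first route stalls at exactly the same point --- you honestly flag the monotonicity of $n^{3/2}S_n$ as only a numerical observation, which is no worse than what the paper does, but it does not close the argument. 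Your second route, however, is complete and more elementary: since the recurrence is an exact equality, $x_n=C_{n-1}\alpha^n\beta^{n-1}$ with $C_m$ the Catalan numbers, and the standard bound $\binom{2m}{m}\le 4^m/\sqrt{\pi m}$ gives $C_{n-1}=\tfrac 1n\binom{2n-2}{n-1}\le \sqrt{2/\pi}\,4^{n-1}n^{-3/2}$ for $n\ge 2$ (the step $\sqrt{n/(n-1)}\le\sqrt 2$ is where $n\ge 2$ is used); the claimed bound then follows because $\pa{2\zeta(3/2)/4}^{n-1}\ge 2\zeta(3/2)/4>1>\sqrt{2/\pi}$. This bypasses the sum estimate and the unproved monotonicity entirely, at the cost of exploiting the exact Catalan structure; if one ever wants the lemma for sequences satisfying the recurrence only as an inequality, one should add the (immediate) remark that such sequences are dominated termwise by the equality solution. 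The paper's approach, by contrast, generalizes more readily to perturbed versions of the convolution where no closed form is available, which is presumably why the authors chose it.
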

\begin{remark}
We made a numerical study giving evidence that 
\begin{align*}
	\f{x_n}{\alpha \pi^{-\f 12} (4 \alpha \beta)^{n-1} n^{-\f 32}} \underset{\substack{n \rightarrow +\infty}}{\longrightarrow} \; 1.
\end{align*}
\end{remark}

\begin{proof}
First, we show that proving the result with $\beta = 1$ enables to show it for any $\beta >0$. Take a general $\beta > 0$. By using $y_n := \beta x_n$, we have $y_1 = \alpha \beta$ and $y_n = \sum_{s=1}^{n-1} y_{n-s} y_s$. We use the result for $\beta = 1$ on $y_n$, which yields the claimed result for $x_n$.

Hence without loss of generality we can take $\beta = 1$. Let us prove~\eqref{eq:bound_x_n} by induction. We define $\xi := 2 \zeta\pa{\tfrac{3}{2}}$, for any $x \in ]0,n[$ we define $g(x) := \pa{x(n-x)}^{-\f 32}$, we extend it on $\C \backslash \{0,n\}$, and we define $S_n := \sum_{s=1}^{n-1} g(s)$ for any $n \ge 2$.

For $n=1$ the right hand side of~\eqref{eq:bound_x_n} is $\alpha$ so the initial step is valid. Take $n \in \Np$, $n \ge 2$, such that for any $s \in \{1,\dots,n-1\}$, $x_s \le \alpha (\xi\alpha)^{s-1} s^{-\f 32}$. Then
\begin{align}\label{eq:bound_xn_interm} 
	x_n \le  \alpha^2 \pa{\xi\alpha}^{n-2} \sum_{s=1}^{n-1} \f{1}{\pa{s(n-s)}^{\f 32}} = \alpha^2 \pa{\xi\alpha}^{n-2} S_n.
\end{align}
Defining $G(x) := \f{n-2x}{\sqrt{x(n-x)}}$ we have $G'(x) = -\f{n^2}{2} g(x)$ so
\begin{align*}
	\int_1^{n-1} g = \f{2}{n^2} \pa{G(1) - G(n-1)} = \f{4 (n-2)}{n^2\sqrt{n-1}}.
\end{align*}
Moreover, $g(\overline{z}) = \overline{g(z)}$ and by the Abel-Plana formula,
\begin{align*}
	S_n&= \int_1^{n-1}g(s) \d s+\tfrac{1}{2}g(1)+\tfrac{1}{2}g(n-1) \\
	   &\qquad \qquad - 2 \im \int_0^\infty \f{g(1+i y)-g(n-1+i y)}{e^{2 \pi  y}-1} \d y \\
	   &=\f{5n^2-12n+8}{(n-1)^{\f 32} n^2}-4 \im \int_0^\infty \f{g(1+i y)}{e^{2 \pi  y}-1} \d y
\end{align*}
hence
\begin{align*}
\mylim{n \rightarrow +\infty} n^{\f 32} S_n = 5 -4 \im \int_0^\infty \f{(1+iy)^{\f 32}}{e^{2 \pi  y}-1} \d y = 2 \zeta\pa{\tfrac{3}{2}} = \xi,
\end{align*}
where we used the Abel-Plana formula again. Moreover, $(n^{3/2} S_n)_{n \ge 2}$ is an increasing sequence, thus $S_n \le \xi/n^{3/2}$. Then~\eqref{eq:bound_xn_interm} enables to conclude.
\end{proof}

\subsection{Bound on the Rayleigh-Schrödinger series : the non-degenerate case}%
\label{ssub:The non-degenerate case}

We are now ready to obtain a bound on $E^n_\mu$ and $\Phi^n_\mu$. In particular, this provides a bound on the convergence radius of the perturbation series. We take the non-degenerate case, that is $\nu = 1$, so $\mu = 1$.

We recall that $K_\mu(0)$ is defined in~\eqref{def:pseudoinv_lambda_K}. For any $m \in \Nz$ and any $n \in \Np$, we define
\begin{align}\label{eq:rec_Q} 
h^m_\mu := H^m-E^m_\mu, \qquad \qquad Q^{n}_\mu := h^n_\mu + \sum_{s=1}^{n-1} h^{n-s}_\mu K_\mu(0) Q^s_\mu.
\end{align}
Then, by a classical result which can be found in~\cite{Kato,Hir69} for instance, we have
\begin{align}\label{eq:Phi_n} 
\Phi_\mu^n = K_\mu(0) Q^n_\mu \Phi_\mu^0, \qquad \forall n \ge 1.
\end{align}
Moreover, we can compute 
\begin{align}\label{eq:expr_E_n} 
E^n_\mu = \ps{\Phi_\mu^0, \pa{Q^n_\mu + E^n_\mu} \Phi_\mu^0}.
\end{align}

\begin{lemma}[Bound on $E^n$, $\phi^n$ and $\Phi^n$, non-degenerate case]\label{lem:bound_phi_n} 
Let us consider the Hamiltonian family $H(\lambda) = \sum_{n=0}^{+\infty} \lambda^n H^n$ under the assumptions of Sections~\ref{sub:Analytic family of operators} and~\ref{sub:choose_set_eigvals}, with $\nu = 1$. The non-degenerate eigenmode is denoted by $(E_\mu(\lambda),\phi_\mu(\lambda))$, we fix the phasis of $\phi_\mu(\lambda)$ such that $\ps{\phi_\mu^0,\phi_\mu(\lambda)} \in \R_+$, the intermediate normalization eigenvector is $\Phi_\mu(\lambda) := \f{\phi_\mu(\lambda)}{\ps{\phi_\mu^0,\phi_\mu(\lambda)}}$ and the Taylor series are written
\begin{align*}
E_\mu^n := \tfrac{1}{n!}  \pa{\tfrac{\d^n }{\d \lambda^n} E_\mu(\lambda) }_{\mkern 1mu \vrule height 2ex\mkern2mu \lambda = 0},
\phi_\mu^n := \tfrac{1}{n!}  \pa{\tfrac{\d^n }{\d \lambda^n} \phi_\mu(\lambda) }_{\mkern 1mu \vrule height 2ex\mkern2mu \lambda = 0},
\Phi_\mu^n := \tfrac{1}{n!}  \pa{\tfrac{\d^n }{\d \lambda^n} \Phi_\mu(\lambda) }_{\mkern 1mu \vrule height 2ex\mkern2mu \lambda = 0}.
 \end{align*}
Then for any $n \in \Nz$,
\begin{align}\label{eq:bound_Q}
\ab{E^n_\mu} + \nor{\phi_\mu^n}{e} + \nor{\Phi_\mu^n}{e} &\le a b^{n},
\end{align}
where $a,b \in \R_+$ are independent of $n$.
\end{lemma}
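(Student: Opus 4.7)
The strategy is to bound $|E^n_\mu|$ and $\nor{Q^n_\mu}{p}$ simultaneously by induction on $n$ from the recursion~\eqref{eq:rec_Q}, and to close the induction via Lemma~\ref{lem:cauchy_square}. The natural norm here is the parameter norm $\nor{B}{p}:=\nor{A^{-1}BA^{-1}}{}$, because hypothesis~\eqref{eq:AKA_bound} makes $K_\mu(0)$ bounded after conjugation by $A$, while~\eqref{eq:cinfty0} bounds all the $H^n$'s uniformly in that same norm (only finitely many are nonzero, so $\sup_m\nor{H^m}{p}\le h_\infty<\infty$).

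First I would apply $A^{-1}\,\cdot\,A^{-1}$ to the recursion defining $Q^n_\mu$, inserting $A^{-1}A$ around $K_\mu(0)$ to exploit~\eqref{eq:AKA_bound}, which yields
\begin{equation*}
\nor{Q^n_\mu}{p}\le \nor{h^n_\mu}{p}+\kappa\sum_{s=1}^{n-1}\nor{h^{n-s}_\mu}{p}\,\nor{Q^s_\mu}{p},
\end{equation*}
with $\kappa:=\nor{AK_\mu(0)A}{}$ and $\nor{h^m_\mu}{p}\le h_\infty+c_A^2|E^m_\mu|$. To eliminate the dependence on the unknown $E^n_\mu$ hidden inside $h^n_\mu$, I would use the standard fact $\ps{\Phi^0_\mu,Q^n_\mu\Phi^0_\mu}=0$ (which one reads off~\eqref{eq:expr_E_n}) to rewrite
\begin{equation*}
E^n_\mu=\ps{\Phi^0_\mu,H^n\Phi^0_\mu}+\sum_{s=1}^{n-1}\ps{\Phi^0_\mu,h^{n-s}_\mu K_\mu(0)Q^s_\mu\Phi^0_\mu},
\end{equation*}
expressing $E^n_\mu$ in terms of strictly lower-order quantities $(|E^s_\mu|,\nor{Q^s_\mu}{p})_{s<n}$ only.

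Setting $y_n:=\max\bpa{1,|E^n_\mu|,\nor{Q^n_\mu}{p}}$ and using $\nor{h^m_\mu}{p}\le C(1+y_m)$, the two inequalities above collapse into a single scalar recursion of the form
\begin{equation*}
y_n\le \alpha+\beta\sum_{s=1}^{n-1}y_{n-s}\,y_s,
\end{equation*}
with constants $\alpha,\beta$ depending only on $\kappa,c_A,h_\infty$ and $\nor{\Phi^0_\mu}{e}$. A short generating-function argument (the series $G(z)=\sum y_n z^n$ satisfies $G\le \alpha z/(1-z)+\beta G^2$, hence is analytic on a disc of radius at least $1/(1+4\alpha\beta)$), or equivalently a reduction to the pure Cauchy square of Lemma~\ref{lem:cauchy_square} after absorbing the constant source term, then yields $y_n\le ab^n$ for some $a,b$ independent of $n$. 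From this the bound on $\nor{\Phi^n_\mu}{e}$ is immediate via $\Phi^n_\mu=K_\mu(0)Q^n_\mu\Phi^0_\mu$ combined with~\eqref{eq:AKA_bound}, and the bound on $\nor{\phi^n_\mu}{e}$ follows by writing $\phi_\mu(\lambda)=\Phi_\mu(\lambda)/\nor{\Phi_\mu(\lambda)}{}$ and noting that the scalar function $\lambda\mapsto\nor{\Phi_\mu(\lambda)}{}^{-1}$ is analytic at $0$ with value $1$, so its Taylor coefficients inherit geometric growth from those of $\Phi_\mu(\lambda)$ via a Cauchy product.

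The main obstacle I anticipate is the circular dependence in the recursion: $Q^n_\mu$ depends on $E^n_\mu$ through $h^n_\mu$, while $E^n_\mu$ is itself determined by $Q^n_\mu$ via~\eqref{eq:expr_E_n}. Untangling this requires the zero-diagonal identity $\ps{\Phi^0_\mu,Q^n_\mu\Phi^0_\mu}=0$ and the packaging of everything into the single majorant $y_n$, so that the combined recursion falls within the scope of (a minor generalisation of) the Cauchy-square lemma; it is precisely to this end that Lemma~\ref{lem:cauchy_square} was established.
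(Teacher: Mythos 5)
Your proposal is correct and follows essentially the same route as the paper: the same parameter norm $\nor{\cdot}{p}$, the same untangling of the $E^n_\mu$/$Q^n_\mu$ circularity by expressing $E^n_\mu$ through the lower-order part of the recursion (the paper packages this as $q^n := H^n + \sum_{s=1}^{n-1} h^{n-s} K Q^s$ with $E^n = \ps{\Phi^0, q^n \Phi^0}$ and $Q^n = q^n - E^n$), the same collapse into a single scalar majorant fed to Lemma~\ref{lem:cauchy_square}, and the same passage from $\Phi^n_\mu$ to $\phi^n_\mu$ via the Taylor coefficients of $\nor{\Phi_\mu(\lambda)}{}^{-1}$ (which the paper carries out explicitly through the $X^n, Y^n$ recursions of Lemma~\ref{lem:interm_to_unit_norm}, folded into the same Cauchy-square majorant).
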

\begin{proof}
For clarity, we drop the subscripts 1, so $E := E_1$, $\phi := \phi_1$, $\Phi := \Phi_1$, $K := K_1(0)$, $Q^n := Q_1^n$, $h^n := h_1^n$ for any $n \in \Nz$. We define
\begin{align}\label{eq:def_cH_cK} 
c_{H,\infty} :=\mysup{n \in \Nz} \nor{A^{-1} H^n A^{-1}}{}, \qquad c_K := \nor{A K A}{}, \qquad c_A := \nor{A^{-1}}{} .
\end{align}
	For any $n \in \Np$ let us define
\begin{align}\label{def:q_n} 
q^n := H^n + \sum_{s=1}^{n-1} h^{n-s}K Q^s.
\end{align}
From~\eqref{eq:rec_Q} we have $Q^n = q^n - E^n$, from~\eqref{eq:expr_E_n} we have $E^n = \ps{\Phi^0, q^n \Phi^0}$, and we recall that $h^n = H^n - E^n$ hence 
\begin{align}\label{eq:_bounds_En_Qn_with_qn} 
	\ab{E^n} &= \ab{\ps{A \Phi^0, A^{-1} q^n A^{-1} A \Phi^0}} \le \nor{q^n}{p} \nor{\Phi^0}{e}^2,\nonumber \\
	\nor{Q^n}{p} &\le \nor{q^n}{p} + c_A^2 \ab{E^n} \le  \nor{q^n}{p}\pa{1 + c_A^2 \nor{\Phi^0}{e}^2}, \nonumber\\
	\nor{h^{n}}{p}& \le c_{H,\infty} +  c_A^2 \nor{q^n}{p} \nor{\Phi^0}{e}^2.
\end{align}
Thus from~\eqref{def:q_n} we have
\begin{align*}
\nor{q^n}{p} &\le c_{H,\infty} + c_K \sum_{s=1}^{n-1} \nor{h^{n-s}}{p} \nor{Q^s}{p} \\
& \le c_{H,\infty} + c_K \pa{1 + c_A^2 \nor{\Phi^0}{e}^2} \sum_{s=1}^{n-1} \pa{c_{H,\infty} +  \nor{q^{n-s}}{p} c_A^2 \nor{\Phi^0}{e}^2} \nor{q^s}{p} \\
& \le c_{H,\infty} +  \beta \sum_{s=1}^{n-1} \pa{1 +  \nor{q^{n-s}}{p} } \nor{q^s}{p} 
\end{align*}
where
\begin{align*}
\beta := c_K \pa{1 + c_A^2 \nor{\Phi^0}{e}^2} \max\pa{c_{H,\infty} ,c_A^2 \nor{\Phi^0}{e}^2}.
\end{align*}
Defining $y_n := \nor{q^n}{p} +1$, we have
\begin{align}\label{eq:bound_yn} 
	y_n & \le 1 + c_{H,\infty} + \beta \sum_{s=1}^{n-1}  y_{n-s} (y_s -1) \le (1+c_{H,\infty} + \beta) \sum_{s=1}^{n-1} y_{n-s} y_s.
\end{align}
By~\eqref{eq:Phi_n}, we have 
\begin{align}\label{eq:bphI} 
	\nor{\Phi^n}{e} \le c_K \nor{Q^n}{p} \nor{\Phi^0}{e} \le c_K \pa{1 + c_A^2 \nor{\Phi^0}{e}^2} \nor{\Phi^0}{e}y_n.
\end{align}

We now show the bound for $\nor{\phi^n}{e}$. For any $n \in \Nz$ we define  $Y^n$ and $X^n$ as in~\eqref{eq:Yn}, and
\begin{align*}
u_n := \max\pa{ \ab{Y^n} , \ab{X^{n-1}} , y_n }.
\end{align*}
We have 
\begin{align*}
	\ab{Y^n} &\le \f 12 \sum_{k=1}^{n-1} \pa{ \ab{\ps{\Phi^{n-k}, \Phi^k}} + \ab{ Y^{n-k}} \ab{ Y^k}} \\
	&\le \f 12 \sum_{k=1}^{n-1}  \pa{c_A^2 \nor{ \Phi^{n-k}}{e} \nor{\Phi^k}{e}  + u_{n-k} u_k}\\
	&\underset{\substack{\eqref{eq:bphI}}}{\le} \; g \sum_{k=1}^{n-1} u_{n-k} u_k,
\end{align*}
where
\begin{align*}
	g := \max \pa{1, c_A c_K \pa{1 + c_A^2 \nor{\Phi^0}{e}^2} \nor{\Phi^0}{e}}^2.
\end{align*}
Moreover,
\begin{align*}
X^{n-1}=  -\sum_{s=0}^{n-3} X^s Y^{n-1-s}  = - \sum_{s=1}^{n-2} X^{s-1} Y^{n-s} \quad \text{so} \quad \ab{X^{n-1}} \le \sum_{s=1}^{n-2} u_{n-s} u_s.
 \end{align*}
 We deduce that 
 \begin{align*}
u_n \le (1+c_{H,\infty} + \beta + g) \sum_{s=1}^{n-1} u_{n-s} u_s.
 \end{align*}
Using Lemma~\ref{lem:cauchy_square}, we deduce that there are $a,b > 0$ such that $u_n \le a b^n$ for any $n \in \Nz$. We can propagate this result for $\nor{q^n}{p}$, $\ab{E^n} $, $\nor{Q^n}{p}$ using~\eqref{eq:_bounds_En_Qn_with_qn}, for $\nor{\Phi^n}{e}$ by using~\eqref{eq:bphI}, and for $\nor{\phi^n}{e}$ by using~\eqref{eq:recover_phi}.
\end{proof}

\begin{remark}[On the radius of convergence of the perturbative approximation]
Defining the perturbation approximation in the intermediate normalization $\vp(\lambda) := \sum_{n=0}^{\ell} \lambda^n \phi_\mu^n$, \eqref{eq:bound_Q} yields
\begin{align*}
\nor{\vp(\lambda)}{e} \le a \sum_{n=0}^{\ell} \pa{\ab{\lambda}b}^n = a \f{1 - \pa{\ab{\lambda}b}^{\ell +1}}{1 - \ab{\lambda}b} \le  \f{a}{1 - \ab{\lambda}b},
\end{align*}
and the radius of convergence of the right-hand side is $b^{-1}$. Moreover, 
\begin{align}\label{eq:diff_pert} 
	\nor{\phi_\mu(\lambda) - \vp(\lambda) }{e} \le \f{a}{1 - \ab{\lambda}b} \pa{\ab{\lambda} b}^{\ell +1}.
\end{align}
\end{remark}

\subsection{Bound on the Rayleigh-Schrödinger series : the degenerate case, when degeneracy is lifted at first order}%

To show Theorem~\ref{thm:deg_vecs}, we will need a similar lemma as Lemma~\ref{lem:bound_phi_n} but for the degenerate case. We consider a degenerate case, so for all $\alpha,\beta \in \{1,\dots,\nu\}$, $E^0_\alpha = E^0_\beta$. As before, $\Gamma(0)$ is the orthogonal projector onto $\oplus_{\mu=1}^\nu \Ker \pa{H^0 - E^0_\mu}$. It is well-known that the $\nu$ eigenvalues of $\pa{\Gamma(0) H^0 \Gamma(0)}_{\mkern 1mu \vrule height 2ex\mkern2mu \Gamma(0) \cH \rightarrow \Gamma(0) \cH}$, as an operator of $\Gamma(0) \cH$, are
\begin{align*}
E_\mu^1 =  \pa{\tfrac{\d }{\d \lambda} E_\mu(\lambda) }_{\mkern 1mu \vrule height 2ex\mkern2mu \lambda = 0} = \ps{\phi^0_\mu , H^1 \phi^0_\mu} .
\end{align*}
Here we assume that degeneracy is lifted at first order for some $\mu \in \{1,\dots,\nu\}$, meaning that for any $\alpha \in \{1,\dots,\nu\} \backslash \{\mu\}$, $E_\alpha^1 \neq E_\mu^1$.

\subsubsection{Pseudo-inverses}%
\label{ssub:Pseudo-inverses}

In the degenerate case, we need to introduce two kinds of partial inverse operators.  The ``zeroth order'' partial inverse $K_\mu(0)$ was defined in~\eqref{def:pseudoinv_lambda_K}, and we set 
\begin{align*}
K^0_\mu := K_\mu(0).
\end{align*}
 We take some $\mu \in \{1,\dots,\nu\}$ and assume that degeneracy is lifted at first order.  We also set 
\begin{align*}
K^1_\mu := \rr_\mu (0),
\end{align*}
where $\rr_\mu (0)$ was defined in~\eqref{eq:def_rr}.

\subsubsection{Series}%
\label{ssub:Series}

We present degenerate perturbation theory as in the work of Hirschfelder~\cite[Section 4.A]{Hir69}, in the case where all degeneracies are lifted at first order. We define $h^{n}_\mu := H^{n}-E^{n}_\mu$, the operators
\begin{align}\label{eq:q_s_degenerate} 
	q^n_{0,\mu} := H^n + \sum_{s=1}^{n-1} h^s_\mu K^0_\mu Q^{n-s}_{0,\mu}, \qquad q^n_{1,\mu} := q^n_{0,\mu} + \sum_{s=2}^{n-1} Q^s_{0,\mu} K^1_\mu Q^{n-s+1}_{1,\mu}
\end{align}
and $Q^n_{i,\mu} = q^n_{i,\mu} - E^n_\mu$ for $i \in \{0,1\}$. Then for any $m \in \Nz$ and any $n \in \Np$, we have
\begin{align}\label{eq:es_phis_degenerate} 
E^m_\mu = \ps{\phi^0_\mu, q^m_{1,\mu} \phi^0_\mu}, \qquad \Phi^n_\mu = \pa{K^0_\mu Q^n_{1,\mu} + K^1_\mu Q^{n+1}_{1,\mu}} \phi^0_\mu.
\end{align}
More precisely, in~\cite[Section 4.A]{Hir69}, the first term of $\Phi^n_\mu$ is not the same as in~\eqref{eq:es_phis_degenerate} but it can be showed that there is equality between them.

\begin{lemma}[Bound on $E^n_\mu$, $\phi^n_\mu$ and $\Phi^n_\mu$, degenerate case]\label{lem:bound_phi_n_deg} 
	Let us consider the Hamiltonian family $H(\lambda) = \sum_{n=0}^{+\infty} \lambda^n H^n$ under the assumptions of Sections~\ref{sub:Analytic family of operators} and~\ref{sub:choose_set_eigvals}. We take $\mu \in \{1,\dots,\nu\}$ and an eigenmode denoted by $(E_\mu(\lambda),\phi_\mu(\lambda))$. We consider the degenerate case, where the degeneracy is lifted at first order, as described in Section~\ref{sub:Vectors in the degenerate case}, i.e. $E^0_\mu = E^0_\alpha$ for all $\alpha \in \{1,\dots,\nu\} $ and $E^1_\mu \neq E^1_\alpha$ for all $\alpha \in \{1,\dots,\nu\} \backslash \{\mu\}$. We fix the phasis of $\phi_\mu(\lambda)$ such that $\ps{\phi_\mu^0,\phi_\mu(\lambda)} \in \R_+$, the intermediate normalization eigenvector is $\Phi_\mu(\lambda) := \f{\phi_\mu(\lambda)}{\ps{\phi_\mu^0,\phi_\mu(\lambda)}}$ and the Taylor series are written
\begin{align*}
E_\mu^n := \tfrac{1}{n!}  \pa{\tfrac{\d^n }{\d \lambda^n} E_\mu(\lambda) }_{\mkern 1mu \vrule height 2ex\mkern2mu \lambda = 0},
\phi_\mu^n := \tfrac{1}{n!}  \pa{\tfrac{\d^n }{\d \lambda^n} \phi_\mu(\lambda) }_{\mkern 1mu \vrule height 2ex\mkern2mu \lambda = 0},
\Phi_\mu^n := \tfrac{1}{n!}  \pa{\tfrac{\d^n }{\d \lambda^n} \Phi_\mu(\lambda) }_{\mkern 1mu \vrule height 2ex\mkern2mu \lambda = 0}.
 \end{align*}
Then for any $n \in \Nz$,
\begin{align}\label{eq:bound_Q_deg}
\ab{E^n_\mu} + \nor{\phi^n_\mu}{e} + \nor{\Phi^n_\mu}{e} \le a b^n,
\end{align}
where $a,b >0$ are independent of $n$ and $\mu$, and depend polynomially on $\nor{A K^0_\mu A}{}$.
\end{lemma}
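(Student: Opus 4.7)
The plan is to mirror the proof of Lemma~\ref{lem:bound_phi_n} but with two nested layers corresponding to the two partial inverses $K^0_\mu$ and $K^1_\mu$. I first would bound $\|q^n_{0,\mu}\|_p$ using the first recursion in~\eqref{eq:q_s_degenerate}: its structure is identical to the recursion on $q^n$ in the proof of Lemma~\ref{lem:bound_phi_n}, so setting $y_n := 1 + \|q^n_{0,\mu}\|_p$ and using~\eqref{eq:_bounds_En_Qn_with_qn} (which remains valid here with $q^n$ replaced by $q^n_{0,\mu}$ because $E^n_\mu$ is still a scalar bounded by $\|q^n_{1,\mu}\|_p \|\phi^0_\mu\|_e^2$, and one can separately use $\|Q^n_{0,\mu}\|_p \le \|q^n_{0,\mu}\|_p + c_A^2 |E^n_\mu|$ controlled through the second recursion only after the fact), I arrive at a Cauchy-square inequality $y_n \le \beta_0 \sum_{s=1}^{n-1} y_s y_{n-s}$ with $\beta_0$ polynomial in $c_{K^0_\mu} := \|A K^0_\mu A\|$. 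Lemma~\ref{lem:cauchy_square} then gives $y_n \le a_0 b_0^n$.

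The second step is to bound $\|q^n_{1,\mu}\|_p$ using the recursion $q^n_{1,\mu} = q^n_{0,\mu} + \sum_{s=2}^{n-1} Q^s_{0,\mu} K^1_\mu Q^{n-s+1}_{1,\mu}$. The key observation here is that $K^1_\mu$ acts on the finite-dimensional space $P_{\phi_\mu(0)}^\perp \Gamma(0) \cH$, and since the basis vectors $\phi_\alpha(0)$ lie in $D(A)$ (they are eigenvectors of $H^0$ and $A$ is form-bounded relative to $H^0$ by~\eqref{eq:cinfty0}), the explicit formula $\rr_\mu(0) = \sum_{\alpha\neq\mu} (E^1_\mu - E^1_\alpha)^{-1} P_{\phi_\alpha(0)}$ combined with the non-degeneracy assumption~\eqref{eq:E1_non_deg} yields $c_{K^1_\mu} := \|A K^1_\mu A\| < \infty$. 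Then, setting $z_n := 1 + \|q^n_{1,\mu}\|_p$ and using $\|Q^k_{i,\mu}\|_p \le (1 + c_A^2 \|\phi^0_\mu\|_e^2) \|q^k_{i,\mu}\|_p + c_A^2 |E^k_\mu|$ with the first-step bound on $\|q^k_{0,\mu}\|_p$, one obtains an inequality of the form $z_n \le a_0 b_0^n + C \sum_{s=2}^{n-1} b_0^s z_{n-s+1}$. I would then prove $z_n \le a b^n$ by induction: for any $b$ large enough relative to $b_0$ (and to $C, c_{K^1_\mu}, c_{K^0_\mu}$), the inductive step gives $z_n \le a_0(b_0/b)^n b^n + C a b \tfrac{(b_0/b)^2}{1-b_0/b} b^n$, and the prefactor in front of $b^n$ can be made $\le a$ by choosing $b$ sufficiently large.

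The third step is to propagate the bound to the other quantities. From $E^n_\mu = \ps{\phi^0_\mu, q^n_{1,\mu} \phi^0_\mu}$ we get $|E^n_\mu| \le \|q^n_{1,\mu}\|_p \|\phi^0_\mu\|_e^2$, giving the geometric bound on $|E^n_\mu|$. From $\Phi^n_\mu = (K^0_\mu Q^n_{1,\mu} + K^1_\mu Q^{n+1}_{1,\mu}) \phi^0_\mu$ and the bounds $\|Q^k_{1,\mu}\|_p \le (1 + c_A^2 \|\phi^0_\mu\|_e^2)\|q^k_{1,\mu}\|_p$, together with $\|A K^0_\mu A\|$ and $\|A K^1_\mu A\|$ finite, we obtain $\|\Phi^n_\mu\|_e \le a'b^{n+1}$ up to constants. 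Finally, the bound on $\|\phi^n_\mu\|_e$ follows from the intermediate normalization identity~\eqref{eq:recover_phi} exactly as at the end of the proof of Lemma~\ref{lem:bound_phi_n}, introducing the sequences $Y^n, X^n$ and running one more Cauchy-square step to convert the $\Phi^n_\mu$ bound into a $\phi^n_\mu$ bound.

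The main obstacle I expect is the combinatorial bookkeeping in step two: the second recursion involves the shifted index $n-s+1$ rather than $n-s$, so it does not fit directly into the Cauchy-square template of Lemma~\ref{lem:cauchy_square}, and the presence of $q^n_{0,\mu}$ as an inhomogeneity (already geometric from step one) must be absorbed. A direct induction with a sufficiently large geometric ansatz is the cleanest route, and one must carefully verify that all the constants entering $a$ and $b$ depend only polynomially on $c_{K^0_\mu}$, $c_{K^1_\mu}$, $\|\phi^0_\mu\|_e$ and $c_{H,\infty}$, as claimed.
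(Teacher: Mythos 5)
There is a genuine gap in the first step, and it propagates. You propose to bound $\nor{q^n_{0,\mu}}{p}$ on its own by a Cauchy-square inequality in $y_n := 1 + \nor{q^n_{0,\mu}}{p}$, deferring the control of $E^n_\mu$ ``to the second recursion only after the fact''. But the recursion $q^n_{0,\mu} = H^n + \sum_{s=1}^{n-1} h^s_\mu K^0_\mu Q^{n-s}_{0,\mu}$ involves $h^s_\mu = H^s - E^s_\mu$ and $Q^{n-s}_{0,\mu} = q^{n-s}_{0,\mu} - E^{n-s}_\mu$, and the only available control on $\ab{E^s_\mu}$ is $\ab{E^s_\mu} \le \nor{q^s_{1,\mu}}{p}\nor{\Phi^0_\mu}{e}^2$, i.e.\ through the \emph{second} sequence. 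A closed inequality $y_n \le \beta_0 \sum_{s=1}^{n-1} y_s y_{n-s}$ in the first sequence alone is therefore not derivable, and your two sequential passes are circular: pass one needs $\nor{q^s_{1,\mu}}{p}$ for all $s < n$, which pass two is supposed to supply afterwards. The repair is to bound the two sequences \emph{jointly} — this is what the paper does, setting $x_n := \nor{q^n_{0,\mu}}{p} + \nor{q^n_{1,\mu}}{p} + 1$ and deriving a single quadratic recursion for $x_n$ (the index shift $n-s+1$ is then absorbed by passing to $y_n := x_n + x_{n+1}$, which lands exactly in the template of Lemma~\ref{lem:cauchy_square}).

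A second, related problem: once the coupling is restored, your step two can no longer be linearized into $z_n \le a_0 b_0^n + C\sum_s b_0^s z_{n-s+1}$ with known geometric coefficients, because $\nor{Q^s_{0,\mu}}{p}$ itself contains $\ab{E^s_\mu} \lesssim \nor{q^s_{1,\mu}}{p} = $ (part of $z_s$). The true recursion is quadratic of Cauchy-square type, and a plain geometric ansatz $z_n \le ab^n$ does \emph{not} close under such a recursion — $\sum_{s=1}^{n-1}(ab^s)(ab^{n-s}) = (n-1)a^2b^n$ grows strictly faster than $b^n$ — which is precisely why Lemma~\ref{lem:cauchy_square} carries the $n^{-3/2}$ damping. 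So the ``direct induction with a sufficiently large geometric ansatz'' you propose as the escape route for the shifted index only appears to work because of the illegitimate linearization; after the fix you must route the argument through Lemma~\ref{lem:cauchy_square} (or an equivalent damped bound). On the positive side, your justification that $\nor{A K^1_\mu A}{} < \infty$ via the explicit finite-rank formula for $\rr_\mu(0)$ is a point the paper leaves implicit, and your propagation step (to $E^n_\mu$, $\Phi^n_\mu$ via~\eqref{eq:es_phis_degenerate}, and to $\phi^n_\mu$ via~\eqref{eq:recover_phi}) matches the paper's.
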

\begin{proof}
We recall that $c_{H,\infty}$ was defined in~\eqref{eq:def_cH_cK}. From~\eqref{eq:es_phis_degenerate} we have
\begin{align}\label{eq:borne_En} 
	\ab{E^n_\mu} &\le \nor{q^n_{1,\mu}}{p} \nor{\Phi^0_\mu}{e}^2, \nonumber \\
	\nor{h^n_\mu}{p} &\le c_{H,\infty} + c_A^2 \ab{E^n_\mu} \le c_{H,\infty} + c_A^2 \nor{q^n_{1,\mu}}{p} \nor{\Phi^0_\mu}{e}^2,
\end{align}
and for $i \in \{0,1\}$, 
\begin{align}\label{eq:borne_Qni} 
\nor{Q^n_{i,\mu}}{p} \le \nor{q^n_{i,\mu}}{p} + c_A^2 \nor{q^n_{1,\mu}}{p} \nor{\Phi^0_\mu}{e}^2.
\end{align}
Next,
\begin{align*}
\nor{q^n_{0,\mu}}{p} &\le c_{H,\infty} + \nore{K^0_\mu} \sum_{s=1}^{n-1} \nor{h^s_{\mu}}{p} \nor{Q^{n-s}_{0,\mu}}{p} \\
&\le c_{H,\infty} + \nore{K^0_\mu} \sum_{s=1}^{n-1} \bpa{c_{H,\infty} + c_A^2 \nor{q^s_{1,\mu}}{p} \nor{\Phi^0_\mu}{e}^2 } \\
&\qquad \qquad \qquad \qquad \qquad \qquad \times \pa{\nor{q^{n-s}_{0,\mu}}{p} +c_A^2 \nor{q^{n-s}_{1,\mu}}{p} \nor{\Phi^0_\mu}{e}^2}.
\end{align*}
We define
\begin{align*}
	C := \max\pa{c_{H,\infty},\nore{K^0_\mu},\nore{K^1_\mu}, 1+ c_A^2 \nor{\Phi^0_\mu}{e}^2},
\end{align*}
and we have
\begin{align*}
\nor{q^n_{0,\mu}}{p} \le C + C^3 \sum_{s=1}^{n-1} \bpa{1 + \nor{q^s_{1,\mu}}{p}} \pa{\nor{q^{n-s}_{0,\mu}}{p} + \nor{q^{n-s}_{1,\mu}}{p}}.
\end{align*}
Moreover,
\begin{align*}
	&\nor{q^n_{1,\mu}}{p} \le \nor{q^n_{0,\mu}}{p}  + \nore{K^1_\mu} \sum_{s=2}^{n-1} \nor{Q^s_{0,\mu}}{p}\nor{Q^{n-s+1}_{1,\mu}}{p} \\
			    &\le \nor{q^n_{0,\mu}}{p} \\
			    & +\pa{1  +  c_A^2 \nor{\Phi^0_\mu}{e}^2}\nore{K^1_\mu} \sum_{s=2}^{n-1} \pa{\nor{q^s_{0,\mu}}{p}  +  c_A^2 \nor{q^{s}_{1,\mu}}{p} \nor{\Phi^0_\mu}{e}^2} \nor{q^{n-s+1}_{1,\mu}}{p} \\
			    &\le \nor{q^n_{0,\mu}}{p}  + C^3 \sum_{s=2}^{n-1} \pa{\nor{q^s_{0,\mu}}{p}  +  \nor{q^{s}_{1,\mu}}{p} } \nor{q^{n-s+1}_{1,\mu}}{p}.
\end{align*}
We define $x_n := \nor{q^n_{0,\mu}}{p} + \nor{q^n_{1,\mu}}{p} + 1$ and estimate
\begin{align*}
	x_n &\le 2C + 2C^3 \sum_{s=1}^{n-1} \bpa{1 + \nor{q^s_{1,\mu}}{p}} \pa{\nor{q^{n-s}_{0,\mu}}{p} + \nor{q^{n-s}_{1,\mu}}{p}} \\
	    &\qquad \qquad\qquad \qquad  + C^3 \sum_{s=2}^{n-1} \pa{\nor{q^s_{0,\mu}}{p}  +  \nor{q^{s}_{1,\mu}}{p} } \nor{q^{n-s+1}_{1,\mu}}{p} \\
	    &\le 2C\pa{1 + C^2} \pa{ \sum_{s=1}^{n-1} x_s x_{n-s} + \sum_{s=2}^{n-1} x_s x_{n-s+1}} \\
	      &=   2C(1+C^2) \pa{x_1 x_{n-1} + \sum_{s=1}^{n-2} \pa{x_s + x_{s+1}} x_{n-s}}\\
	      &\le   4C(1+C^2)  \sum_{s=1}^{n-2} \pa{x_s + x_{s+1}} x_{n-s}.
\end{align*}
Then with $y_n := x_n + x_{n+1}$, we have
\begin{align*}
	y_n &\le x_n + 4C(1+C^2) \sum_{s=1}^{n-1} \pa{x_s + x_{s+1}} x_{n+1-s}\\
	    &\le y_{n-1} + 4C(1+C^2) \sum_{s=1}^{n-1} y_s y_{n-s} \le (1 + 4C(1+C^2)) \sum_{s=1}^{n-1} y_s y_{n-s},
\end{align*}
where we used that $1/y_1 \le 1$ in the last inequality. Using Lemma~\ref{lem:cauchy_square}, we deduce that there are $a,b > 0$ such that for any $n \in \Nz$, $y_n \le a b^n$, and then $\nor{q^n_{i,\mu}}{p} \le ab^n$ for $i \in \{0,1\}$. We propagate this property for $\ab{E^n_\mu} $, $\nor{Q^n_{i,\mu}}{p} $ using~\eqref{eq:borne_En} and~\eqref{eq:borne_Qni} and for $\nor{\Phi^n_\mu}{e}$ by using
\begin{align*}
\nor{\Phi^n}{e} \le  \pa{ \nore{K^0_\mu} \nor{Q^n_{1,\mu}}{p} +  \nore{K^1_\mu} \nor{Q^{n+1}_{1,\mu}}{p} } \nor{\Phi^0_\mu}{e}.
\end{align*}
As we see in $C$, we can bound $a$ and $b$ using polynomials in 
\begin{align*}
 c_{H,\infty},\nore{K^0_\mu}, \nore{K^1_\mu}, \nor{\Phi^0_\mu}{e}, c_A.
\end{align*}
The bound on $\nor{\phi^n_\mu}{e}$ can be deduced with the same method as in Lemma~\ref{lem:bound_phi_n}.
\end{proof}

\section{Coefficients in density matrix perturbation theory}%
\label{sec:density matrix perturbation theory}

In this section we present how to compute the coefficients of the perturbative series in density matrix perturbation theory. We use the Liouvillian operator and its partial inverse, a classical tool in perturbation theory, see \cite{Kato,Teufel03}, used in~\cite{BacDerWoj18,MonTeu19}, with a detailed exposition in~\cite[Section 5.1]{CanKamLev21}. See also for instance \cite{TruDiaBow20}.

\subsection{Definitions}%
\label{sub:Definitions_DMPT}

We choose the same context and notations as in Section~\ref{sec:definitions_general}, in particular, we consider a series of operators
\begin{align*}
H(\lambda) = \sum_{n=0}^{+\infty} \lambda^n H^n.
\end{align*}
We consider a cluster of eigenmodes $\pa{E_\mu(\lambda), \phi_\mu(\lambda)}_{\mu=1}^\nu$ of $H(\lambda)$, where all those maps are analytic in $\lambda \in ]-\lambda_0,\lambda_0[$. Let us take $\lambda_0$ small enough so that there is $\kappa_H >0$, independent of $\lambda$, such that for any $\lambda \in ]-\lambda_0,\lambda_0[$,
\begin{align*}
\bpa{\sigma(H) \backslash \{E_\mu(\lambda)\}_{\mu=1}^\nu} \cap \bpa{\cup_{\mu=1}^\nu \; ]E_\mu(\lambda) - \kappa_{H}  , E_\mu(\lambda) + \kappa_{H} [ \;} = \varnothing.
\end{align*}
Take $(\vp_\mu(\lambda))_{\mu=1}^\nu \in \cH^\nu$ such that 
\begin{align}\label{eq:rel_frame} 
\ps{\vp_\mu(\lambda) , \vp_\alpha(\lambda)} = \delta_{\mu\alpha}, \qquad \vp_\mu(\lambda) \in \Ker \pa{H(\lambda) - E_\mu(\lambda)},
 \end{align}
the density matrix corresponding to those eigenmodes is
 \begin{align*}
\Gamma(\lambda) := \sum_{\mu=1}^{\nu} \proj{\vp_\mu(\lambda)} = \sum_{n=0}^{+\infty} \lambda^n \Gamma^n, \quad  \text{where} \quad \Gamma^n := \f{1}{n!}  \pa{\f{\d^n }{\d \lambda^n} \Gamma(\lambda) }_{\mkern 1mu \vrule height 2ex\mkern2mu \lambda = 0},
 \end{align*}
 and is independent of the choice of the frame $\vp_\mu(\lambda)$, as long as it respects~\eqref{eq:rel_frame}.

\subsection{Statement}%
\label{sub:Statement}

Let us take $(\vp_\mu)_{\mu=1}^{\nu} \in \cH^\nu$ such that $\vp_\mu \in \Ker \pa{H(0) - E_\mu(0)}$ and $\ps{\vp_\mu,\vp_\alpha} = \delta_{\mu\alpha}$ for any $\mu,\alpha \in \{1,\dots,\nu\}$, so $\Gamma^0 = \sum_{\mu=1}^{\nu} \proj{\vp_\mu}$. Let us define $A_0 := \Gamma^0$, $B_0 := 0$, $C_0 := 0$ and for any $n \in \Np$, $n \ge 1$,
\begin{multline}\label{eq:ABC} 
A_n := - \sum_{k=1}^{n-1} \pa{A_{n-k} A_k  + B_{n-k}^* B_k}, \qquad C_n :=  \sum_{k=1}^{n-1} \pa{C_{n-k} C_k  + B_{n-k} B_k^*} \\
b_n := \pa{\Gamma^0}^\perp \sum_{k=0}^{n-1} \pa{H^{n-k} \pa{A_k + B_k} - \pa{B_k + C_k} H^{n-k}} \Gamma^0 \\
B_n := \sum_{\mu =1}^\nu K_\mu(0) b_n P_{\vp_\mu},
\end{multline}
where $\sum_{m=a}^{b} := 0$ if $b < a$, and $K_\mu(0)$ is defined in~\eqref{def:pseudoinv_lambda_K}. We see that $A_n$ and $C_n$ are self-adjoint. The following result is classical and comes from~\cite{McWeeny62}. See also~\cite{TruDiaBow20} for other methods of computing $\Gamma^n$.

\begin{proposition}[Coefficients in density matrix perturbation theory, ~\cite{McWeeny62}]\label{prop:dmpt} 
	Let us consider a Hilbert space $\cH$, a self-adjoint energy operator $A$, an analytic family of self-adjoint operators $H(\lambda) = \sum_{n=0}^{+\infty} \lambda^n H^n$, we make the assumptions of Sections~\ref{sub:Analytic family of operators} and~\ref{sub:choose_set_eigvals}. Take $\nu \in \Np$, consider a set of eigenmodes $\pa{E_\mu(\lambda), \phi_\mu(\lambda)}_{\mu =1}^\nu$, analytic in $\lambda$, and the corresponding density matrix $\Gamma(\lambda) := \sum_{\mu=1}^{\nu} \proj{\phi_\mu(\lambda)} = \sum_{n=0}^{+\infty} \lambda^n \Gamma^n$. Then for any $n \in \Nz$, 
\begin{align}\label{eq:explicit_gamma} 
\Gamma^n = A_n + B_n + B_n^* + C_n,
\end{align}
where the involved operators are defined in~\eqref{eq:ABC}. 
\end{proposition}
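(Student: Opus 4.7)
The plan is to exploit the three defining properties of the spectral projector $\Gamma(\lambda)$: self-adjointness $\Gamma(\lambda)^*=\Gamma(\lambda)$, idempotency $\Gamma(\lambda)^2=\Gamma(\lambda)$, and commutation $[H(\lambda),\Gamma(\lambda)]=0$. Each identity is an analytic relation in $\lambda$, so expanding in power series yields, order by order in $n$, recursive relations that I will match with those defining $A_n, B_n, C_n$ in~\eqref{eq:ABC}. The block decomposition $\cH = \Gamma^0\cH \oplus (\Gamma^0)^\perp\cH$ is the key tool: writing $\Gamma^n = A_n + B_n + B_n^* + C_n$ corresponds to the four blocks, where I set $A_n := \Gamma^0 \Gamma^n \Gamma^0$, $B_n := (\Gamma^0)^\perp \Gamma^n \Gamma^0$, $B_n^* = \Gamma^0 \Gamma^n (\Gamma^0)^\perp$ (this is consistent thanks to $\Gamma^n=(\Gamma^n)^*$) and $C_n := (\Gamma^0)^\perp \Gamma^n (\Gamma^0)^\perp$. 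The statement then amounts to showing that these four blocks obey the recursive formulas in~\eqref{eq:ABC}.

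The recurrences for $A_n$ and $C_n$ will come from $\Gamma^2 = \Gamma$, which gives $\Gamma^n = \sum_{k=0}^n \Gamma^{n-k}\Gamma^k$, equivalently
\begin{equation*}
\Gamma^n - \Gamma^0 \Gamma^n - \Gamma^n \Gamma^0 = \sum_{k=1}^{n-1} \Gamma^{n-k}\Gamma^k.
\end{equation*}
Sandwiching this with $\Gamma^0$ on both sides and using that $A_k = \Gamma^0 A_k \Gamma^0$ has zero projections on off-diagonal blocks (and similarly for the other components), one finds $\Gamma^0 \Gamma^{n-k}\Gamma^k \Gamma^0 = A_{n-k}A_k + B_{n-k}^*B_k$; the left side collapses to $-A_n$, which reproduces the formula for $A_n$. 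Sandwiching instead with $(\Gamma^0)^\perp$ on both sides yields $(\Gamma^0)^\perp \Gamma^{n-k}\Gamma^k(\Gamma^0)^\perp = B_{n-k}B_k^* + C_{n-k}C_k$ and the left side collapses to $C_n$, giving the recurrence for $C_n$.

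The recurrence for $B_n$ is the heart of the proof and uses $[H,\Gamma]=0$, which at order $n$ reads
\begin{equation*}
[H^0,\Gamma^n] = -\sum_{k=0}^{n-1}[H^{n-k},\Gamma^k].
\end{equation*}
Applying $(\Gamma^0)^\perp$ on the left and $P_{\vp_\mu}$ on the right, and using $[H^0,\Gamma^0]=0$ together with $H^0 P_{\vp_\mu} = E_\mu^0 P_{\vp_\mu}$, the left-hand side simplifies to $(H^0-E_\mu^0) B_n P_{\vp_\mu}$ since only $B_n$ survives the sandwich $(\Gamma^0)^\perp \Gamma^n \Gamma^0$ (by the block structure). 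On the right-hand side, the block reduction $(\Gamma^0)^\perp \Gamma^k = B_k + C_k$ and $\Gamma^k \Gamma^0 = A_k + B_k$ (extended by $A_0=\Gamma^0$, $B_0=C_0=0$) produces exactly the operator $b_n$ of~\eqref{eq:ABC} (applied to $P_{\vp_\mu}$). Inverting $E_\mu^0 - H^0$ on $(\Gamma^0)^\perp \cH$ via $K_\mu(0)$ gives $B_n P_{\vp_\mu} = K_\mu(0) b_n P_{\vp_\mu}$, and summing over $\mu$ reconstructs $B_n$ since $\Gamma^0 = \sum_{\mu} P_{\vp_\mu}$ and $B_n = B_n \Gamma^0$.

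The main obstacle, such as it is, is bookkeeping: one has to be careful that the block decomposition really does eliminate the cross terms (e.g.\ $A_{n-k}B_k=0$, $B_{n-k}^* A_k=0$), and to treat the index $k=0$ separately so that $A_0=\Gamma^0$, $B_0=C_0=0$ give the correct initialization. The argument proceeds by strong induction on $n$: assuming $\Gamma^m = A_m+B_m+B_m^*+C_m$ has been established for all $m<n$, the idempotency and commutation relations determine $A_n$, $B_n$, $C_n$ uniquely via the recurrences of~\eqref{eq:ABC}, and self-adjointness of $\Gamma^n$ forces the remaining off-diagonal block to equal $B_n^*$, closing the induction.
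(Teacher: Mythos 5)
Your proof is correct and follows essentially the same route as the paper's: the block decomposition of $\Gamma^n$ with respect to $\Gamma^0$, combined with the order-by-order expansion of idempotency, self-adjointness, and the commutation relation $[H(\lambda),\Gamma(\lambda)]=0$, yielding exactly the recurrences of~\eqref{eq:ABC} and closing by induction from the initialization $A_0=\Gamma^0$, $B_0=C_0=0$. The only difference is presentational: where you apply the order-$n$ commutation identity directly to $P_{\vp_\mu}$ and invert $E_\mu^0-H^0$ on $(\Gamma^0)^\perp\cH$ via $K_\mu(0)$, the paper packages the same computation into the Liouvillian super-operator $\cL=[H^0,\cdot]$ and its partial inverse $\cL^+$, first checking that $\cL^+\cL$ is the orthogonal projection onto $\{L : L=(\Gamma^0)^\perp L\,\Gamma^0\}$.
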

For the sake of completeness, we give a rigorous proof in Section~\ref{sub:Proof of Proposition explicit_gamma}. 

Remark that $\Gamma^n$ is invariant under the gauge change $(\vp_\mu)_{\mu=1}^\nu =: \bvp \rightarrow U \bvp$, for any unitary $U \in \cU_\nu$, as long as~\eqref{eq:rel_frame} is respected. Hence $A_n$, $B_n$ and $C_n$ are also invariant under this transformation. So one does not need to compute the exact $\phi_\mu(0)$'s, which are notoriously hard to obtain.

Moreover, we have
\begin{align*}
A_n = \Gamma^0 \Gamma^n \Gamma^0, \;\;\; B_n = (\Gamma^0)^\perp \Gamma^n \Gamma^0, \;\;\; B_n^* = \Gamma^0 \Gamma^n (\Gamma^0)^\perp, \;\;\; C_n = (\Gamma^0)^\perp \Gamma^n (\Gamma^0)^\perp,
\end{align*}
and we can rewrite
\begin{multline*}
	A_n = - \Gamma^0 \sum_{k=1}^{n-1} \Gamma^{n-k} \Gamma^k \Gamma^0, \qquad C_n = \pa{\Gamma^0}^\perp \sum_{k=1}^{n-1} \Gamma^{n-k} \Gamma^k \pa{\Gamma^0}^\perp \\
	b_n = \pa{\Gamma^0}^\perp \sum_{k=0}^{n-1} [H^{n-k},\Gamma^k] \Gamma^0, \qquad B_n = \sum_{\mu =1}^\nu K_\mu(0) b_n P_{\vp_\mu}.
\end{multline*}

We recall that we defined in Section~\ref{sub:First definitions} an operator $A$ enabling to implement the energy norm, and we recall that $\nor{\cdot}{2} $ is the Hilbert-Schmidt norm. Then to prove~\eqref{eq:eigendm_bound} we will need to following bound on the $\Gamma^n$ series. 

\begin{proposition}[Bound for the coefficients of density matrix perturbation theory]\label{prop:dmpt_bound} 
There exist $a,b > 0$, independent of $n \in \Nz$, such that for any $n \in \Nz$, 
\begin{align}\label{eq:bound_Gamma_n} 
\nor{A \Gamma^n A}{2} \le a b^n.
\end{align}
\end{proposition}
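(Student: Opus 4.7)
The proof adapts the approach of Lemma~\ref{lem:bound_phi_n}: we derive a Cauchy-square-type recursion for the energy-weighted Hilbert-Schmidt norms of the density-matrix coefficients and then invoke Lemma~\ref{lem:cauchy_square}. Since $\Gamma^n = A_n + B_n + B_n^* + C_n$ and $\nor{X^*}{2} = \nor{X}{2}$, it suffices to bound
\begin{align*}
x_n := \nor{A A_n A}{2} + \nor{A B_n A}{2} + \nor{A C_n A}{2}
\end{align*}
by $a b^n$, since this yields $\nor{A \Gamma^n A}{2} \le 2 x_n \le 2 a b^n$.

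\textbf{Derivation of the recursion.} For $A_n$ and $C_n$, the plan is to insert an $A^{-1} A$ between the two factors of each summand of~\eqref{eq:ABC}: writing for instance $A A_{n-k} A_k A = (A A_{n-k} A) A^{-1} (A A_k A)$, and using that the finite-rank operators $A A_k A$, $A B_k A$ (both of rank $\le \nu$) satisfy $\nor{\cdot}{} \le \nor{\cdot}{2}$ together with $\nor{A^{-1}}{} \le c_A$, one obtains bounds of the form $\nor{A A_n A}{2}, \nor{A C_n A}{2} \le 2 c_A \sum_{k=1}^{n-1} x_{n-k} x_k$. The estimate on $B_n$ is more delicate. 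From $B_n = \sum_\mu K_\mu(0) b_n P_{\vp_\mu}$, the factorization
\begin{align*}
A B_n A = \sum_{\mu=1}^\nu (A K_\mu(0) A)(A^{-1} b_n A^{-1})(A P_{\vp_\mu} A),
\end{align*}
together with~\eqref{eq:AKA_bound} and the fact that $\vp_\mu \in D(A)$, reduces the task to bounding $\nor{A^{-1} b_n A^{-1}}{2}$. For each $k \ge 1$, a typical summand in $b_n$ is then rewritten as
\begin{align*}
A^{-1}(\Gamma^0)^\perp H^{n-k}(A_k + B_k) A^{-1} = \bpa{A^{-1}(\Gamma^0)^\perp A} \bpa{A^{-1} H^{n-k} A^{-1}} \bpa{A(A_k + B_k) A^{-1}},
\end{align*}
and analogously for the dual terms involving $(B_k + C_k) H^{n-k}$. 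The first factor is bounded because $\Gamma^0 \cH$ is finite-dimensional and contained in $D(A)$, the middle factor is uniformly controlled by the bound from~\eqref{eq:cinfty0}, and $\nor{A (A_k + B_k) A^{-1}}{2} \le c_A^2 x_k$ via the identity $A X A^{-1} = (AXA) A^{-2}$. The exceptional $k=0$ summand $(\Gamma^0)^\perp H^n \Gamma^0$ is bounded uniformly in $n$ by an analogous factorization through $A \Gamma^0 A^{-1}$. Collecting these contributions yields $\nor{A^{-1} b_n A^{-1}}{2} \le C_0 + C_1 \sum_{k=1}^{n-1} x_k$.

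\textbf{Conclusion and main obstacle.} Combining the three estimates produces a recursive inequality of the form $x_n \le C_0' + C_1' \sum_{k=1}^{n-1} x_k + C_2' \sum_{k=1}^{n-1} x_{n-k} x_k$. Setting $y_n := 1 + x_n$, the inequalities $y_j \ge 1$ allow the affine and linear terms to be absorbed into the quadratic sum, giving $y_n \le C \sum_{k=1}^{n-1} y_{n-k} y_k$; Lemma~\ref{lem:cauchy_square} then provides $y_n \le a b^n$, and the claim follows. The main technical difficulty lies in the detailed bookkeeping for the summands of $b_n$: the operators $A_k$, $B_k$, $C_k$ have distinct support structures relative to $\Gamma^0$ and $(\Gamma^0)^\perp$, so that reducing each apparently unbounded composition involving $H^{n-k}$ to a product of three operators controlled respectively in the target norm $\nor{A \cdot A}{2}$, the uniform parameter norm $\nor{A^{-1} \cdot A^{-1}}{}$ from~\eqref{eq:cinfty0}, and suitable auxiliary norms, requires the systematic use of the identities $A_k \Gamma^0 = A_k$, $B_k \Gamma^0 = B_k$, $(\Gamma^0)^\perp B_k = B_k$, and $(\Gamma^0)^\perp C_k = C_k$.
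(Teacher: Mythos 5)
Your proof is correct and follows essentially the same route as the paper's: the same three-factor splittings $A K_\mu(0) A \cdot A^{-1} b_n A^{-1} \cdot A P_{\vp_\mu} A$ and $A^{-1}H^{n-k}A^{-1}\cdot A(A_k+B_k)A^{-1}$, the same absorption of the affine terms via $y_n := 1 + x_n$, and the same appeal to Lemma~\ref{lem:cauchy_square}. The only slip is cosmetic: $A A_{n-k} A_k A = (A A_{n-k} A)\,A^{-2}\,(A A_k A)$ rather than $(A A_{n-k} A)\,A^{-1}\,(A A_k A)$, so the corresponding constant is $c_A^2$ instead of $c_A$, which changes nothing in the argument.
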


\subsection{Proof of Proposition~\ref{prop:dmpt}}
\label{sub:Proof of Proposition explicit_gamma}

\subsubsection{First relations}%
\label{ssub:First relations}
The Euler-Lagrange equation $[H(\lambda),\Gamma(\lambda)] = 0$ gives that for any $n \in \Nz$,
\begin{align}\label{eq:euler_lagrange_series}
\sum_{k=0}^{n} \seg{H^{n-k}, \Gamma^k} = 0.
\end{align}
Moreover, $\Gamma(\lambda)^* = \Gamma(\lambda)$ and $\Gamma(\lambda)^2 = \Gamma(\lambda)$ so for any $n \in \Nz$, $\pa{\Gamma^n}^* = \Gamma^n$ and 
\begin{align}\label{eq:square_P_series}
\sum_{k=0}^{n} \Gamma^{n-k} \Gamma^k = \Gamma^n.
\end{align}

\subsubsection{Decomposition of the projection}%
\label{ssub:Decomposition of the projection}

We define
$P := \Gamma^0$, $P^\perp :=  1 - \Gamma^0$ and
\begin{align*}
\cA_n := P \Gamma^n P, \qquad \cB_n := P^\perp \Gamma^n P, \qquad \cC_n := P^\perp \Gamma^n P^\perp
\end{align*}
so $\cA_n^{*} = \cA_n$ and $\cC_n^* = \cC_n$ and
\begin{align}\label{eq:dec_gamma} 
\Gamma^n = \cA_n + \cB_n + \cB_n^* + \cC_n
\end{align}
so we want to compute the series $\cA_n$, $\cB_n$, $\cC_n$. 

\subsubsection{Formulas for $\cA_n$ and $\cC_n$}%

 We define the Liouvillian 
\begin{align*}
\cL := \seg{H^0, \cdot},
\end{align*}
and $\hls$ denotes the space of Hilbert-Schmidt operators, defined in~\eqref{eq:def_hls}. For any $B,F \in \hls$, we compute
\begin{align*}
	&(\cL B, F)_2 = \tr \pa{\cL B}^* F = \tr [H^0,B]^* F = \tr B^* H^0 F - \tr H^0 B^* F  \\
		     &\qquad = \tr B^* H^0F - \tr B^* F H^0 = \tr B^* [H^0,F] = \tr B^* \cL F = (B,\cL F)_2,
\end{align*}
hence $\cL$ is self-adjoint, or in other words, $\cL^* = \cL$. We define
\begin{align}\label{eq:def_XY} 
X_n := -\sum_{k=0}^{n-1} \seg{H^{n-k}, \Gamma^k}, \qquad Y_n := \sum_{k=1}^{n-1}\Gamma^{n-k} \Gamma^k.
\end{align}
Then~\eqref{eq:euler_lagrange_series} transforms to 
\begin{align*}
\cL \Gamma^n = X_n
\end{align*}
 and~\eqref{eq:square_P_series} transforms to
\begin{align}\label{eq:p_int}
\Gamma^n P + P \Gamma^n - \Gamma^n = -Y_n.
\end{align}
From~\eqref{eq:dec_gamma} we can compute $\Gamma^n P + P \Gamma^n - \Gamma^n = \cA_n - \cC_n$ so~\eqref{eq:p_int} implies $\cA_n - \cC_n = - Y_n$ and
\begin{align}\label{eq:AY} 
\cA_n = - P Y_n P \qquad \qquad \cC_n = P^\perp Y_n P^\perp.
\end{align}
So we can develop
\begin{multline*}
Y_n \underset{\substack{\eqref{eq:dec_gamma} \\ \eqref{eq:def_XY}}}{=} \; \sum_{k=1}^{n-1} \pa{\cA_{n-k} + \cB_{n-k} + \cB_{n-k}^* + \cC_{n-k}} \pa{\cA_k + \cB_k + \cB_k^* + \cC_k}\\
= \sum_{k=1}^{n-1} \cA_{n-k} \pa{\cA_k + \cB_k^*} + \cB_{n-k} \pa{\cA_k + \cB_k^{*}} + \cB_{n-k}^* \pa{\cC_k + \cB_k} + \cC_{n-k} \pa{\cB_k + \cC_k}.
\end{multline*}
Applying $P$ on the left and on the right, and applying $P^\perp$ on the left and on the right, together with~\eqref{eq:AY} we obtain 
\begin{align}\label{eq:formula_AC} 
	\cA_n = -\sum_{k=1}^{n-1} \pa{\cA_{n-k} \cA_k + \cB^*_{n-k} \cB_k}, \qquad \cC_n =  \sum_{k=1}^{n-1} \pa{\cC_{n-k} \cC_k  + \cB_{n-k} \cB_k^*}.
\end{align}

\subsubsection{$\cL \cB_n$}%

We have $H^0 P = P H^0 = PH^0 P$ and $H^0 P^\perp = P^\perp H^0 = P^\perp H^0 P^\perp$ hence for $Q, G \in \{P,P^\perp\}$ and for any operator $F \in \hls$,
\begin{align}\label{eq:property_cL}
\cL \pa{Q F G} = Q \pa{\cL F} G.
\end{align}
By taking $F = X_n$, $Q = P^\perp$, $G = P$, we have
\begin{align}\label{eq:LB} 
\cL \cB_n = P^\perp X_n P \underset{\substack{\eqref{eq:dec_gamma} \\ \eqref{eq:def_XY}}}{=} \; \sum_{k=0}^{n-1} \pa{\cB_k + \cC_k} H^{n-k} P - P^\perp H^{n-k} \pa{\cA_k + \cB_k}.
\end{align}

\subsubsection{Partial inverse of the Liouvillian}%
\label{ssub:partial_inv_liouvillian} 

We define
\begin{align*}
\cO := \{L \in \hls \;|\; L = P^\perp L P\}.
\end{align*}
Let us take $(\vp_\mu)_{\mu=1}^{\nu}$ such that $\vp_\mu \in \Ker \pa{H(0) - E_\mu(0)}$ and $\ps{\vp_\mu,\vp_\alpha} = \delta_{\mu\alpha}$ for any $\mu,\alpha \in \{1,\dots,\nu\}$
and the operator of $\hls$ 
	\begin{align}\label{eq:Lplus_full}
		\cL^+ : \hspace{0.5cm} 
	\begin{array}{rcl}
		\hls & \longrightarrow &  \hls \\
		F & \longmapsto & - \sum_{\mu=1}^\nu K_\mu(0) F P_{\vp_\mu}.
	\end{array}
	\end{align}
For any $F \in \hls$, we compute
\begin{align}\label{eq:LplusLF} 
	&\cL^+ \cL F \hspace{-0.05cm}=\hspace{-0.05cm} -\hspace{-0.05cm} \sum_{\mu =1}^\nu K_\mu(0) [H^0,F] P_{\vp_\mu} \hspace{-0.05cm}= \hspace{-0.05cm}\sum_{\mu =1}^\nu K_\mu(0) F H^0 P_{\vp_\mu} \hspace{-0.05cm}-\hspace{-0.05cm} K_\mu(0) H^0 F P_{\vp_\mu} \nonumber\\
&\quad = \sum_{\mu =1}^\nu E_\mu(0) K_\mu(0) F P_{\vp_\mu} + P^\perp F P_{\vp_\mu} - E_\mu(0) K_\mu(0) F P_{\vp_\mu} = P^\perp F P,
\end{align}
where we used that $K_\mu(0) \pa{H^0 - E_\mu(0)} = - P^\perp$. So if $F \in \cO$, then $\cL^+ \cL F = F$. By a similar computation, we have $\cL \cL^+ = \cL^+ \cL$. Moreover, $(\cL^+ \cL)^2 = \cL^+ \cL$ and $(\cL^+ \cL)^* = \cL^+ \cL$, where the dual operator is taken with respect to the scalar product $\hs{\cdot,\cdot}$. Hence $\cL^+ \cL$ is the orthogonal projection onto $\cO$, and $\cL^+$ is a partial inverse. 

\subsubsection{Formula for $\cB_n$}%

Since $\cB_n \in \cO$, and since $\cL^+ \cL$ is the orthogonal projection onto $\cO$, we have 
\begin{align}\label{eq:formula_Bn} 
	\cB_n &= \cL^+ \cL \cB_n \underset{\substack{\eqref{eq:LB}}}{=} \;\sum_{k=0}^{n-1} \cL^+ \bpa{ \pa{\cB_k + \cC_k} H^{n-k} P} - \cL^+ \bpa{P^\perp H^{n-k} \pa{\cA_k + \cB_k}} \nonumber\\
	&\underset{\substack{\eqref{eq:Lplus_full}}}{=} \; \sum_{\mu =1}^\nu\sum_{k=0}^{n-1} K_\mu(0) \pa{H^{n-k} \pa{\cA_k + \cB_k} - \pa{\cB_k + \cC_k} H^{n-k}} P_{\vp_\mu}.
\end{align}

\subsubsection{Conclusion}%
\label{ssub:Conclusion}

The recursive relations~\eqref{eq:formula_AC} and~\eqref{eq:formula_Bn} respected by $\cA_n$, $\cB_n$ and $\cC_n$ are the same as the ones respected by $A_n$, $B_n$ and $C_n$. Thus from $\cA_0 = A_0$, $\cB_0 = B_0$, $\cC_0 = C_0$, we conclude that $\cA_n = A_n$, $\cB_n = B_n$ and $\cC_n = C_n$ for any $n \in \Nz$. 

\subsection{Proof of Proposition~\ref{prop:dmpt_bound}}
We recall that
\begin{multline*}
	\nor{B}{2,ee} := \nor{A B A}{2},  \qquad c_K := \mymax{1 \le \mu \le \nu} \nor{A KA}{}, \\
	c_{H,\infty} := \mymax{n \in \Nz} \nor{A^{-1} H^n A^{-1}}{}.
\end{multline*}
For any $n \in \Nz$, we define
\begin{align*}
v_n := \max \pa{\nor{A_n}{2,ee}, \nor{B_n}{2,ee}, \nor{C_n}{2,ee}   }.
\end{align*}
Let us take $n \in \Np$ and. We have 
\begin{multline*}
A B_n A = \sum_{\mu =1}^\nu\sum_{k=0}^{n-1} A K_\mu(0) A  \\
\times \pa{A^{-1}H^{n-k} A^{-1} A \pa{A_k + B_k} A^{-1}- A^{-1}\pa{B_k + C_k} A A^{-1} H^{n-k} A^{-1}}  \\
\times A P_{\vp_\mu} A.
\end{multline*}
Moreover, for any $k \in \Nz$ and any $L \in \{A_k,B_k,C_k\}$, 
\begin{align*}
\nor{A L A^{-1}}{} = \nor{A L A A^{-2}}{} \le c_A^2 \nor{ALA}{} \le c_A^2 \nor{L}{2,ee} \le c_A^2 v_k, 
\end{align*}
 and 
\begin{align*}
\nor{A P_{\vp_\mu} A}{2} = \nor{A  \Gamma^0 P_{\vp_\mu}  \Gamma^0 A}{2} \le \nor{A \Gamma^0}{} \nor{P_{\vp_\mu}}{2} \nor{ \Gamma^0A}{}  = \nor{A \Gamma^0}{}^2,
\end{align*}
hence
\begin{align*}
\nor{B_n}{2,ee} \le 4 \nu c_K  c_{H,\infty} c_A^2 \nor{A \Gamma^0}{}^2   \sum_{k=0}^{n-1} v_k.
\end{align*}
For any $k \in \Nz$ we define $u_k := v_k +1$, we have $v_k \le u_k \le u_k u_{n-k}$ so for any $n \ge 1$,
\begin{align*}
	\nor{B_n}{2,ee} + 1 \le \pa{1+4 \nu c_K  c_{H,\infty} c_A^2 \nor{A \Gamma^0}{}^2 }\sum_{k=0}^{n-1} u_k u_{n-k}.
\end{align*}

Similarly, we have
\begin{align*}
\nor{A_n}{2,ee} \le 2 c_A^2 \sum_{k=1}^{n-1} v_{k} v_{n-k}, \qquad \qquad \nor{C_n}{2,ee} \le 2 c_A^2 \sum_{k=1}^{n-1} v_{k} v_{n-k},
\end{align*}
so for any $n \ge 2$,
\begin{align*}
	\nor{A_n}{2,ee} + 1 \le \pa{1 + 2 c_A^2} \sum_{k=1}^{n-1} u_{k} u_{n-k}, \quad \nor{C_n}{2,ee} + 1 \le \pa{1 + 2 c_A^2} \sum_{k=1}^{n-1} u_{k} u_{n-k}
\end{align*}
and we can conclude that
\begin{align*}
	u_n \le \pa{1+ 2 c_A^2 \max \pa{1, 2 \nu c_K  c_{H,\infty} \nor{A \Gamma^0}{}^2}}\sum_{k=1}^{n-1} u_{k} u_{n-k}.
\end{align*}
We obtain~\eqref{eq:bound_Gamma_n} by applying Lemma~\ref{lem:cauchy_square} to $u_n$.

\section{Proof of Corollaries~\ref{cor:main_ec_dm} and~\ref{cor:main_ec_vec}}%
\label{sec:Proof of EC}


We only give a proof of Corollary~\ref{cor:main_ec_vec} in detail, because the proof of Corollary~\ref{cor:main_ec_dm} uses the exact same method.

To apply Rellich's theorem, we remark that we automatically have 
\begin{align*}
\mymax{n \in \Nz} \nor{A^{-1} \cP H^n \cP A^{-1}}{} < +\infty
\end{align*}
 because $\nor{A^{-1} \cP H^n \cP A^{-1}}{} \le c_\cP^2 \nor{A^{-1} H^n A^{-1}}{}$, which was already assumed to be bounded.

\subsection{Proof of Corollary~\ref{cor:main_ec_vec} }%
\label{sub:Inequality on vec}

The proof of Corollary~\ref{cor:main_ec_vec} uses Proposition~\ref{prop:non-deg_main_thm}, and Lemma~\ref{lem:bound_phi_n}.

 Defining
\begin{align*}
c_{H,\infty} := \mymax{n \in \Nz} \nor{A^{-1} H^n A^{-1}}{} < +\infty,
\end{align*}
for any $\lambda < 1$ we have $\nor{A^{-1} H(\lambda) A^{-1}}{} \le c_{H,\infty} \pa{1 - \ab{\lambda}}^{-1}$ so for any $\lambda < 1/2$, 
\begin{align}\label{eq:bound_Hlambda} 
\nor{A^{-1} H(\lambda) A^{-1}}{} \le 2c_{H,\infty}.
\end{align}

We have
\begin{align}
&\nor{- \f 12 \nor{\phi(\lambda) - \psi(\lambda)}{}^2 \psi(\lambda) + \pa{\cE(\lambda) - E(\lambda)} R(\lambda) \pa{\phi(\lambda) - \psi(\lambda)}}{e} \nonumber \\
&\underset{\substack{\eqref{eq:hypo_RP}}}{\le} \; \f 12 c_A^2 \nor{\phi(\lambda) - \psi(\lambda)}{e}^2 \nor{\psi(\lambda)}{e}^2 + c_{\cP,R} \ab{ E(\lambda) - \cE(\lambda) } \nor{\phi(\lambda) - \psi(\lambda)}{e} \nonumber \\
&\underset{\substack{\eqref{eq:diff_errs_abs} \\ \eqref{eq:bound_Hlambda}}}{\le}  \nor{\phi(\lambda) - \psi(\lambda)}{e}^2 \nonumber \\
&\qquad \times \pa{\tfrac 12 c_A^2 \nor{\psi(\lambda)}{e}^2 + c_{\cP,R} \pa{c_A^2 \ab{E(\lambda)} +2c_{H,\infty}} \nor{\phi(\lambda) - \psi(\lambda)}{e}}. \label{eq:cela} 
\end{align}
Since $\phi(0) = \psi(0)$, and by continuity of the maps $\lambda \mapsto \phi(\lambda)$ and $\lambda \mapsto \psi(\lambda)$, we have $\nor{\phi(\lambda) - \psi(\lambda)}{e} \rightarrow 0$ as $\lambda \rightarrow 0$, and then we can take $\lambda_0$ small enough such that for any $\lambda \in ]-\lambda_0, \lambda_0[$,
\begin{multline}\label{eq:en_smaller_12} 
	\nor{\phi(\lambda) - \psi(\lambda)}{e} \\
	\le  \f 12  \pa{\tfrac 12  c_A^{2}\nor{\psi(\lambda)}{e}^2 + c_{\cP,R} \pa{c_A^2 \ab{E(\lambda)} + 2 c_{H,\infty}}\nor{\phi(\lambda) - \psi(\lambda)}{e}}^{-1}.
\end{multline}
We use $\ps{\phi(\lambda) , \psi(\lambda)} \in \R$, to apply Proposition~\ref{prop:non-deg_main_thm} at each $\lambda$. Thus from~\eqref{eq:equality_diff} and~\eqref{eq:cela} (see also~\eqref{eq:ineq_cons}) we obtain that for any $\lambda \in ]-\lambda_0,\lambda_0[$,
\begin{align*}
\nor{\phi(\lambda) - \psi(\lambda)}{e} \le 2 \pa{1 + c_A \nor{ARA}{} } \nor{\cP^\perp \phi(\lambda)}{e}.
\end{align*}
We recall from Appendix~\ref{sec:interm_normalization} that
\begin{align*}
\Phi(\lambda) := \f{\phi(\lambda)}{\ps{\phi^0, \phi(\lambda)}},  \qquad \Phi^n := \f{1}{n!} \pa{\f{\d^n}{\d \lambda^n} \Phi(\lambda)}_{\mkern 1mu \vrule height 2ex\mkern2mu \lambda = 0}, \qquad \phi(\lambda) = \f{\Phi(\lambda)}{\nor{\Phi(\lambda)}{}}.
\end{align*}
so
\begin{align*}
\cP^\perp \phi(\lambda) = \nor{\Phi(\lambda)}{}^{-1} \cP^\perp \Phi(\lambda) \underset{\substack{~\eqref{eq:cond_der_vec}}}{=} \; \nor{\Phi(\lambda)}{}^{-1} \sum_{n=\ell +1}^{+\infty} \lambda^n \Phi^n.
\end{align*}
We use the phasis gauge $\ps{\phi^0 , \phi(\lambda)} \in \R_+$ to obtain the bounds on the derivatives~\eqref{eq:bound_Q}, and thus there is $c,b > 0$ independent of $\ell$ and $\lambda$ such that
\begin{align*}
\nor{\phi(\lambda) - \psi(\lambda)}{e} \le c \pa{\ab{\lambda} b}^{\ell +1}.
\end{align*}
From this, we deduce that $\phi^n = \psi^n$ for any $n \in \{0,\dots,\ell\}$. From this last inequality and~\eqref{eq:diff_errs_abs} we also obtain that for some $c,b > 0$ independent of $\ell$ and $\lambda$,
\begin{align}\label{eq:ths} 
	\nor{\phi(\lambda) - \psi(\lambda)}{}^2 \le c \pa{\ab{\lambda} b}^{2(\ell +1)}, \qquad  \ab{\cE(\lambda) - E(\lambda)}  \le c \pa{\ab{\lambda} b}^{2(\ell +1)}.
\end{align}

Using~\eqref{eq:equality_diff} once more, we have
\begin{multline}\label{eq:diff_precise} 
\phi(\lambda) - \psi(\lambda) - \lambda^{\ell +1 }\pa{1 + R(0)H(0)} \cP^\perp \phi^{\ell +1} \\
= \pa{1 + R(\lambda)H(\lambda)} \cP^\perp \pa{\phi(\lambda) -  \lambda^{\ell +1 }\phi^{\ell +1}} \\
+  \lambda^{\ell +1 }\pa{R(\lambda) H(\lambda) - R(0)H(0)} \cP^\perp \phi^{\ell +1} - \f 12 \nor{\phi(\lambda) - \psi(\lambda)}{}^2 \psi(\lambda) \\
+ \pa{\cE(\lambda) - E(\lambda)} R(\lambda) \phi(\lambda).
\end{multline}
We now seek to bound each of those terms. First,
\begin{align*}
\nor{\cP^\perp \pa{ \phi(\lambda)  - \lambda^{\ell +1 }\phi^{\ell +1}}}{e} = \nor{\sum_{n=\ell +2}^{+\infty} \lambda^n \phi^n}{e} \underset{\substack{\eqref{eq:bound_Q}}}{\le} \; c \pa{\ab{\lambda} b}^{\ell + 2}
\end{align*}
for some $c,b > 0$ independent of $\ell$ and $\lambda$. Then, by analyticity of $\lambda \mapsto R(\lambda) H(\lambda)$, at $\lambda = 0$, we have
\begin{align*}
\nor{ A \pa{R(\lambda) H(\lambda) - R(0)H(0) } A^{-1}}{} \le c \lambda,
\end{align*}
where $c$ does not depend on $\lambda$. We can reproduce the same reasoning for the norm $\nor{\cdot}{} $. Finally, also using~\eqref{eq:ths}, ~\eqref{eq:diff_precise} yields, for $\delta \in \{0,1\}$,
\begin{align*}
&\ab{\nor{\phi(\lambda) - \psi(\lambda)}{e,\delta} - \ab{\lambda}^{\ell +1} \xi^{\textup{non-deg}}_{\textup{RBM+PT},\ell}}\\
& \qquad \qquad \le \nor{\phi(\lambda) - \psi(\lambda) - \lambda^{\ell +1 }\pa{1 + R(0)H(0)} \cP^\perp \phi^{\ell +1} }{e,\delta}  \\
& \qquad \qquad \le c \pa{\ab{\lambda} b}^{\ell + 2}.
\end{align*}

The proof of the eigenvalue bound~\eqref{eq:eigenvectors_bound} is similar.

\subsection{Proof of Corollary~\ref{cor:main_ec_dm}}%

We remark directly from~\eqref{eq:explicit_diff} that the leading order of $\Gamma(\lambda) - \Lambda(\lambda)$ is
\begin{multline*}
	\sum_{\mu=1}^{\nu} \pa{ \pa{1 + R_\mu(\lambda)H(\lambda)} \cP^\perp \Gamma(\lambda) P_{\psi_\mu(\lambda)} + \text{adj.}} \\
	= \lambda^{\ell +1} \sum_{\mu=1}^{\nu} \pa{ \pa{1 + R_\mu(0)H(0)} \cP^\perp \Gamma^{\ell +1} P_{\phi_\mu(0)} + \text{adj.}} + O(\lambda^{\ell +2}).
\end{multline*}
We used that $\Gamma(\lambda) \cP^\perp = \sum_{n=\ell+1}^{+\infty} \lambda^n \Gamma^n \cP^\perp$, because $\cP^\perp \Gamma^k = 0$ for all $k \in \{0,\dots,\ell\}$ by the assumption~\eqref{eq:cond_der_vec} stating that $\im \Gamma^k \subset \cP \cH$. The bounds~\eqref{eq:eigendm_bound} and~\eqref{eq:eigenvals_bound} are obtained by using similar arguments, and follow the same steps. We need the bound~\eqref{eq:bound_Gamma_n} on the derivatives $\Gamma^n$ for showing~\eqref{eq:eigendm_bound}.

\subsection{Proof of Lemma~\ref{lem:build_red_space_dm} }%
\label{sub:Proof of Lemma span space }

We have $\phi_\mu(\lambda) = \sum_{n=0}^{+\infty} \lambda^n \phi_\mu^n$ and
\begin{align*}
	\Gamma(\lambda) = \sum_{\mu=1}^{\nu} \proj{\phi_\mu(\lambda)} = \sum_{\substack{0 \le k,p < +\infty \\ 1 \le \mu \le \nu}} \lambda^{k+p} \ketbra{\phi_\mu^k}{\phi_\mu^p}
\end{align*}
hence identifying the coefficients of $\lambda^n$ gives $\Gamma^n = \sum_{k=0}^{n} \sum_{\mu=1}^{\nu} \ketbra{\phi_\mu^{n-k}}{\phi_\mu^k}$. From this we see that
\begin{align}\label{eq:ima} 
\bigoplus_{n=0}^\ell \im \Gamma^n \subset \Span\pa{\phi^n_\mu \;|\; 0\le n \le \ell, 1 \le \mu \le \nu}.
\end{align}
Moreover, for any $\xi \in \cH$,
\begin{align}\label{eq:gamma_applied} 
	\Gamma^n \xi = \sum_{\alpha=1}^{\nu} \phi^n_\alpha \ps{\phi_\alpha^0 , \xi} + \sum_{\substack{1 \le \alpha \le \nu \\ 0 \le k \le n-1}} \phi_\alpha^{k} \ps{\phi^{n-k}_\alpha, \xi}.
\end{align}
Since $(\vp_\mu)_{\mu=1}^\nu$ is a basis of $\Span\pa{(\phi^0_\mu)_{\mu=1}^\nu}$, this relation enables to show recursively the following proposition for any $n \in \Nz$,
\begin{multline*}
	\mathbb{P}(n) : \\
\Span\pa{\Gamma^k \vp_\alpha \;|\; 0\le k \le n, 1 \le \alpha \le \nu} = \Span\pa{\phi_\alpha^k \;|\; 0\le k \le n, 1 \le \alpha \le \nu}.
\end{multline*}
Indeed, $\mathbb{P}(0)$ holds because $(\vp_\mu)_{\mu=1}^\nu$ is a basis of $\Span\pa{(\phi^0_\mu)_{\mu=1}^\nu}$ by assumption. Then take $n \in \Nz$ such that $\mathbb{P}(n)$ holds, we want to prove that $\mathbb{P}(n+1)$ holds. First, take $\mu\in\{1,\dots,\nu\}$, by~\eqref{eq:gamma_applied} applied to $\xi = \vp_\mu$ and with $n \rightarrow n+1$, we have $\Gamma^{n+1} \vp_\mu \in \Span\pa{\pa{\phi^k_\alpha}_{1 \le \alpha \le \nu}^{0\le k \le n+1}}$, hence, also using $\mathbb{P}(n)$, we obtain
\begin{multline*}
	\Span\pa{\Gamma^k \vp_\alpha \;|\; 0\le k \le n+1, 1 \le \alpha \le \nu}\\
	\subset \Span\pa{\phi^k_\alpha \;|\; 0\le k \le n+1, 1 \le \alpha \le \nu}.
\end{multline*}
There exists $(c_\alpha)_{1 \le \alpha \le \nu} \in \C^\nu$ such that $\phi^0_\mu = \sum_{\alpha=1}^{\nu} c_\alpha \vp_\alpha$, so~\eqref{eq:gamma_applied} applied to $\xi = \phi^{0}_\mu$ and with $n \rightarrow n +1$ gives
\begin{multline*}
	\phi^{n+1}_\mu = \Gamma^{n+1} \phi^0_\mu - \sum_{\substack{1 \le \alpha \le \nu \\ 0 \le k \le n}} \phi_\alpha^{k} \ps{\phi^{n+1-k}_\alpha, \phi^0_\mu} \\
	= \sum_{\alpha=1}^{\nu} c_{\alpha}  \Gamma^{n+1} \vp_\alpha - \sum_{\substack{1 \le \alpha \le \nu \\ 0 \le k \le n}} \phi_\alpha^{k} \ps{\phi^{n+1-k}_\alpha, \phi^0_\mu}.
\end{multline*}
We then use $\mathbb{P}(n)$, more precisely that
\begin{align*}
 \Span\pa{\phi^k_\alpha \;|\; 0\le k \le n, 1 \le \alpha \le \nu}\subset\Span\pa{\Gamma^k \vp_\alpha \;|\; 0\le k \le n, 1 \le \alpha \le \nu},
\end{align*}
to conclude that $\phi^{n+1}_\mu \in \Span\pa{\pa{\Gamma^k \vp_\alpha}_{1 \le \alpha \le \nu}^{0\le k \le n+1}}$ and hence
\begin{multline*}
	\Span\pa{\phi^k_\alpha \;|\; 0\le k \le n+1, 1 \le \alpha \le \nu} \\
	\subset \Span\pa{\Gamma^k \vp_\alpha \;|\; 0\le k \le n+1, 1 \le \alpha \le \nu},
\end{multline*}
and we obtain $\mathbb{P}(n+1)$.

From $\mathbb{P}(\ell)$ we have that
\begin{align*}
\Span\pa{\phi^n_\mu \;|\; 0\le n \le \ell, 1 \le \mu \le \nu} \subset\bigoplus_{n=0}^\ell \im \Gamma^n,
\end{align*}
and together with~\eqref{eq:ima} this gives
\begin{align*}
\bigoplus_{n=0}^\ell \im \Gamma^n = \Span\pa{\phi^n_\mu \;|\; 0\le n \le \ell, 1 \le \mu \le \nu}.
\end{align*}

\section{Proof of Theorem~\ref{thm:deg_vecs} }%
\label{sec:Proof deg vec}

We consider Appendix~\ref{sec:interm_normalization} for intermediate normalization. Let us recall that
\begin{align*} 
\Phi_\mu(\lambda) &:= \f{\phi_\mu(\lambda)}{\ps{\phi_\mu^0,\phi_\mu(\lambda)}}, \qquad  \Psi_\mu(\lambda) := \f{\psi_\mu(\lambda)}{\ps{\phi_\mu^0,\psi_\mu(\lambda)}}, \\
\phi_\mu^n &:= \f{1}{n!}  \pa{\f{\d^n }{\d \lambda^n} \phi_\mu(\lambda) }_{\mkern 1mu \vrule height 2ex\mkern2mu \lambda = 0}, \qquad\Phi_\mu^n := \f{1}{n!}  \pa{\f{\d^n }{\d \lambda^n} \Phi_\mu(\lambda) }_{\mkern 1mu \vrule height 2ex\mkern2mu \lambda = 0}, \\
&\qquad \Psi_\mu^n := \f{1}{n!}  \pa{\f{\d^n }{\d \lambda^n} \Psi_\mu(\lambda) }_{\mkern 1mu \vrule height 2ex\mkern2mu \lambda = 0}, \qquad  \cE_\mu^n := \f{1}{n!}  \pa{\f{\d^n }{\d \lambda^n} \cE_\mu(\lambda) }_{\mkern 1mu \vrule height 2ex\mkern2mu \lambda = 0}. \nonumber
\end{align*}
The proof of this result is different from the proof of Corollaries~\ref{cor:main_ec_dm} and~\ref{cor:main_ec_vec}. In particular it does not use the results of Section~\ref{sec:main result on the RBM}.   

\subsection{Core lemma}%
\label{sub:Core lemma}

Before starting the proof, we show the following lemma, giving the error at order $n+1$ when the previous orders are equal.

\begin{lemma}\label{lem:tech_compute_diff} 
Make the same assumptions as in Theorem~\ref{thm:deg_vecs}. Take $n \in \Nz$. If for all $k \in \{0,\dots,n\}$, $\Phi^k_\mu = \Psi_\mu^{k}$ and $E^k_\mu = \cE_\mu^{k}$, then $E^{n+1}_\mu = \cE^{n+1}_\mu$ and
\begin{align}\label{eq:lems_eq} 
\Phi^{n+1}_\mu - \Psi^{n +1}_\mu = \pa{1+\rr_\mu(0) H^1} \pa{1 + R_\mu(0) H^0} \cP^\perp \Phi^{n+1}_\mu.
\end{align}
\end{lemma}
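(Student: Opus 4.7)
The plan is to derive both parts of the lemma from the Rayleigh--Schr\"odinger equations at orders $n+1$ and $n+2$, using the induction hypothesis to cancel all lower-order contributions. Setting $\Delta := \Phi_\mu^{n+1} - \Psi_\mu^{n+1}$, three steps are needed: prove $E^{n+1}_\mu = \cE^{n+1}_\mu$; compute $\Gamma(0)^\perp \Delta$; compute $\Gamma(0) \Delta$. For the first, I would take the scalar product of the order-$(n+1)$ RS equation for $\Psi_\mu$ with $\phi_\mu^0 \in \cP \cH$ and invoke $H^0 \phi_\mu^0 = E^0_\mu \phi_\mu^0$ together with intermediate normalization $\ps{\phi_\mu^0, \Psi_\mu^k} = \delta_{k,0}$ to obtain $\cE_\mu^{n+1} = \sum_{k=0}^n \ps{\phi_\mu^0, H^{n+1-k} \Psi_\mu^k}$; the analogous identity for $\Phi_\mu$ has $\Phi_\mu^k$ in place of $\Psi_\mu^k$, and the two coincide by hypothesis.

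With $E^{n+1}_\mu = \cE^{n+1}_\mu$ in hand, both order-$(n+1)$ RS equations share the right-hand side $w := \sum_{k=0}^n (H^{n+1-k} - E^{n+1-k}_\mu) \Phi_\mu^k$, namely $(E^0_\mu - H^0)\Phi_\mu^{n+1} = w$ and $(E^0_\mu - \cP H^0 \cP)\Psi_\mu^{n+1} = \cP w$. Projecting the first by $\cP$ and subtracting the second yields $(E^0_\mu - \cP H^0 \cP)(\cP \Phi_\mu^{n+1} - \Psi_\mu^{n+1}) = \cP H^0 \cP^\perp \Phi_\mu^{n+1}$; applying $R_\mu(0)$ on the left and invoking $\Lambda(0) = \Gamma(0)$ (which holds because $\psi_\alpha(0) = \phi_\alpha(0)$ for every $\alpha$) leaves $\Gamma(0)^\perp (\cP \Phi_\mu^{n+1} - \Psi_\mu^{n+1}) = R_\mu(0) H^0 \cP^\perp \Phi_\mu^{n+1}$. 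Combining with $\cP^\perp \Delta = \cP^\perp \Phi_\mu^{n+1}$ (from $\Psi_\mu^{n+1} \in \cP \cH$) and with the inclusion $\cP^\perp \cH \subset \Gamma(0)^\perp \cH$ (a consequence of $\Gamma(0) \subset \cP \cH$ from the second line of~\eqref{eq:cond_der_vec_deg}) produces
\begin{align*}
\Gamma(0)^\perp \Delta = (1 + R_\mu(0) H^0) \cP^\perp \Phi_\mu^{n+1}.
\end{align*}

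For the $\Gamma(0)$-component I would go one order higher. Projecting the order-$(n+2)$ RS equations for $\Phi_\mu$ and $\Psi_\mu$ on $\phi_\alpha^0$ for $\alpha \neq \mu$ and subtracting, every contribution with $k \le n$ vanishes by the induction hypothesis, by the previous step, and by the orthogonality $\ps{\phi_\alpha^0, \phi_\mu^0} = 0$, so only the $k = n+1$ term survives, giving $\ps{\phi_\alpha^0, (H^1 - E^1_\mu) \Delta} = 0$. Decomposing $\Delta$ along $\Gamma(0) \oplus \Gamma(0)^\perp$ and exploiting that $\phi_\alpha^0$ is an eigenvector of $\hh$ with eigenvalue $E^1_\alpha$, this rearranges into $(E^1_\mu - E^1_\alpha) \ps{\phi_\alpha^0, \Delta} = \ps{\phi_\alpha^0, H^1 \Gamma(0)^\perp \Delta}$. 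Hypothesis~\eqref{eq:E1_non_deg} makes $E^1_\mu - E^1_\alpha$ invertible for $\alpha \neq \mu$, and combined with $\ps{\phi_\mu^0, \Delta} = 0$ (intermediate normalization) the explicit representation $\rr_\mu(0) = \sum_{\alpha \neq \mu}(E^1_\mu - E^1_\alpha)^{-1} P_{\phi_\alpha^0}$ yields $\Gamma(0) \Delta = \rr_\mu(0) H^1 \Gamma(0)^\perp \Delta$; summing with the preceding display and factoring produces~\eqref{eq:lems_eq}.

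The main difficulty is bookkeeping: juggling three distinct partial inverses ($K_\mu(0)$, $R_\mu(0)$, $\rr_\mu(0)$, of which only the last two appear in the final identity), tracking the projections $\cP, \cP^\perp, \Gamma(0), \Gamma(0)^\perp, P_{\phi_\mu^0}$ and their intersections, and recognizing that the $\Gamma(0)$-component of $\Delta$ at order $n+1$ cannot be read off the order-$(n+1)$ equation alone and must be extracted from the order-$(n+2)$ one --- which is precisely the mechanism responsible for the extra factor $(1 + \rr_\mu(0) H^1)$ in front of $(1 + R_\mu(0) H^0) \cP^\perp \Phi_\mu^{n+1}$.
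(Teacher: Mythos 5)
Your proposal is correct and follows essentially the same route as the paper: the order-$(n+1)$ Rayleigh--Schr\"odinger equations give $E^{n+1}_\mu=\cE^{n+1}_\mu$ and the $\Gamma(0)^\perp$-component of the difference via $R_\mu(0)$, and the order-$(n+2)$ equations give the $\Gamma(0)$-component via $\rr_\mu(0)$; your use of scalar products against the $\phi^0_\alpha$ together with the spectral representation of $\rr_\mu(0)$ is just a coordinate version of the paper's operator-level application of the projectors $\Gamma^0 P_{\phi^0_\mu}^\perp$ and the partial inverse.
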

\begin{proof}

We have $\phi^0_\alpha \in \cP \cH$ for any $\alpha \in \{1,\dots,\nu\}$ so
\begin{align}\label{eq:commute_gamma} 
\Gamma^0 \cP = \cP \Gamma^0 = \cP,
\end{align}
hence $\cP \pa{\Gamma^0}^\perp = \pa{\Gamma^0}^\perp \cP$, we have
\begin{align*}
1 = \cP^\perp + \cP \pa{\Gamma^0}^\perp + \Gamma^0 P_{\phi^0_\mu}^\perp + P_{\phi^0_\mu}
\end{align*}
and we will split
\begin{align*}
	\cH = \cP^\perp \cH \oplus \cP \pa{\Gamma^0}^\perp \cH \oplus \Gamma^0 P_{\phi^0_\mu}^\perp \cH  \oplus P_{\phi^0_\mu} \cH.
\end{align*}
We will compute $\Phi^{n+1}_\mu - \Psi^{n +1}_\mu$ on each of those subspaces.

We define $\xi^q_\mu := \Phi^{q}_\mu - \Psi^{q}_\mu$ for any $q \in \Nz$. For any $q \in \Np$, 
\begin{align}\label{eq:perpos} 
\Phi^q_\mu \perp \Phi^0_\mu \qquad \text{and} \qquad \Psi^q_\mu \perp \Psi^0_\mu = \Phi^0_\mu, \qquad \text{ hence } \quad P_{\phi^0_\mu} \xi^{q}_\mu = 0.
\end{align}

We define $w^{k}_\mu := H^k - \cE_\mu^{k}$, $h^k_\mu := H^k - E_\mu^k$ and 
\begin{align*}
w_\mu(\lambda) := H(\lambda) - \cE_\mu(\lambda) = \sum_{k=0}^{+\infty} \lambda^k w^{k}_\mu, \quad h_\mu(\lambda) := H(\lambda) - E_\mu(\lambda) = \sum_{k=0}^{+\infty} \lambda^k h^{k}_\mu.
\end{align*}
Since $\Psi_\mu(\lambda)$ is en eigenvector of $\PHlP$ with eigenvalue $\cE_\mu(\lambda)$, $\cP w_\mu(\lambda) \Psi_\mu(\lambda) = 0$, 
so identifying the different factors of $\lambda^q$ of the last equation, for any $q \in \Nz$ we have that 
\begin{align}\label{eq:schro_pert_w_psi} 
\sum_{k=0}^{q} \cP w^{q-k}_\mu \Psi^k_\mu = 0.
\end{align}
Since $\pa{E_\mu(\lambda),\Phi_\mu(\lambda)}$ is an eigenmode of $H(\lambda)$, $h_\mu(\lambda) \Phi_\mu(\lambda) = 0$, and this yields that for any $q \in \Nz$,
\begin{align}\label{eq:schro_Phi} 
\sum_{k=0}^{q} h^{q-k}_\mu \Phi^k_\mu = 0.
\end{align}
Applying $\cP$ to~\eqref{eq:schro_Phi} and substracting~\eqref{eq:schro_pert_w_psi} yields
\begin{align}\label{eq:fu} 
	0 &= \sum_{k=0}^{q} \cP \pa{h^{q-k}_\mu \Phi^k_\mu - w^{q-k}_\mu \Psi^k_\mu} = \sum_{k=0}^{q} \cP \pa{h^{q-k}_\mu \xi^k_\mu + \bpa{\cE^{q-k}_\mu - E^{q-k}_\mu} \Psi^k_\mu} \nonumber \\
	  &=\sum_{k=0}^{q} \cP \pa{h^{q-k}_\mu \xi^k_\mu + \bpa{\cE^{k}_\mu - E^{k}_\mu} \Psi^{q-k}_\mu}.
\end{align}
We know that $\cE^{k}_\mu - E^{k}_\mu = 0$ and $\xi_\mu^k = 0$ for all $k\in \{0,\dots,n\}$. So using~\eqref{eq:fu} with $q=n+1$ gives
\begin{align*}
	0 = \cP h^0_\mu \xi_\mu^{n+1} + \pa{\cE^{n+1}_\mu - E^{n+1}_\mu} \Phi^0_\mu.
\end{align*}
Taking the scalar product with $\Phi^0_\mu$ gives $\cE^{n+1}_\mu = E^{n+1}_\mu$ and applying $P_{\phi^0_\mu}^\perp$ gives $\cP h^0_\mu \xi^{n+1}_\mu = 0$, so
\begin{align*}
\cP h^0_\mu \cP \xi^{n+1}_\mu =  -  \cP h^0_\mu \cP^\perp \xi^{n+1}_\mu = -  \cP H^0 \cP^\perp \Phi^{n+1}_\mu
\end{align*}
and applying $R_\mu(0)$ yields 
\begin{align}\label{eq:xi_zero_gamma_perp} 
\cP \pa{\Gamma^0}^\perp \xi^{n+1}_\mu = R_\mu(0) H^0 \cP^\perp \Phi^{n+1}_\mu,
\end{align}
and
\begin{align}\label{eq:xi_zero_gamma_perp2} 
 \pa{\Gamma^0}^\perp \xi^{n+1}_\mu =\cP \pa{\Gamma^0}^\perp \xi^{n+1}_\mu  + \cP^\perp \xi^{n+1}_\mu  = (1+R_\mu(0) H^0) \cP^\perp \Phi^{n+1}_\mu.
\end{align}

Next, applying~\eqref{eq:fu} with $q=n+2$ gives
\begin{align*}
0 = \cP h^1_\mu \xi^{n+1}_\mu + \cP h^0_\mu \xi^{n+2}_\mu + \pa{\cE^{n+2}_\mu - E^{n+2}_\mu} \Phi^0_\mu.
\end{align*}
Applying $\Gamma^0 P_{\phi^0_\mu}^\perp$ and using~\eqref{eq:commute_gamma} gives
\begin{align*}
	0 &=  \Gamma^0 P_{\phi^0_\mu}^\perp h^1_\mu \xi^{n+1}_\mu = \Gamma^0 P_{\phi^0_\mu}^\perp h^1_\mu \pa{\cP^\perp + \cP \pa{\Gamma^0}^\perp +  \Gamma^0 P_{\phi^0_\mu}^\perp + P_{\phi^0_\mu}}\xi^{n+1}_\mu \\
	  &\underset{\substack{P_{\phi^0_\mu} \xi^{q}_\mu = 0 \\\eqref{eq:xi_zero_gamma_perp}}}{=} \;\Gamma^0 P_{\phi^0_\mu}^\perp h^1_\mu  \Gamma^0 P_{\phi^0_\mu}^\perp \xi^{n+1}_\mu + \Gamma^0 P_{\phi^0_\mu}^\perp h^1_\mu \pa{1 + R_\mu(0) H^0} \cP^\perp \Phi^{n+1}_\mu.
\end{align*}
We now apply $\rr_\mu(0)$, being such that $\rr_\mu(0) \Gamma^0 P_{\phi^0_\mu}^\perp h^1_\mu  \Gamma^0 P_{\phi^0_\mu}^\perp = -\Gamma^0 P_{\phi^0_\mu}^\perp$, which gives
\begin{align*}
\Gamma^0 P_{\phi^0_\mu}^\perp \xi^{n+1}_\mu = \rr_\mu(0) h^1_\mu \pa{1 + R_\mu(0) H^0} \cP^\perp \Phi^{n+1}_\mu.
\end{align*}

Finally, using it, together with~\eqref{eq:xi_zero_gamma_perp2} and $P_{\phi^0_\mu} \xi^{q}_\mu = 0$ yields
\begin{align*}
	\xi^{n+1}_\mu &= \pa{\pa{\Gamma^0}^\perp +  \Gamma^0 P_{\phi^0_\mu}^\perp + P_{\phi^0_\mu}} \xi^{n+1}_\mu  \\
	&= \pa{1+\rr_\mu(0) h^1_\mu} \pa{1 + R_\mu(0) H^0} \cP^\perp \Phi^{n+1}_\mu \\
	& = \pa{1+\rr_\mu(0) H^1 } \pa{1 + R_\mu(0) H^0} \cP^\perp \Phi^{n+1}_\mu,
\end{align*}
where we used that $\Gamma(0) R_\mu(0) = 0$, and hence $\rr_\mu(0) R_\mu(0)= 0$, in the last line.
\end{proof}

We then transform the last result into a result on the intermediate normalization series.
\begin{lemma}\label{lem:tech_compute_diff_unit} 
Take $n \in \Nz$. If for all $k \in \{0,\dots,n\}$, $\Phi^k_\mu = \Psi_\mu^{k}$, then for all $k \in \{0,\dots,n\}$, $\phi_\mu^{k} = \psi_\mu^{k}$, and $\phi_\mu^{n+1} - \psi_\mu^{n+1} = \Phi_\mu^{n+1} - \Psi_\mu^{n+1}$.
\end{lemma}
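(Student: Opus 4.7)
The plan is to leverage the specific phase gauge chosen in the statement, which makes the normalization factor real-valued and positive, and then rely on the structural identity $\ps{\phi_\mu^0,\Phi_\mu(\lambda)} = 1$ coming from intermediate normalization. Since $\norm{\phi_\mu(\lambda)} = \norm{\psi_\mu(\lambda)} = 1$ and the gauges impose $\ps{\phi_\mu^0,\phi_\mu(\lambda)}, \ps{\phi_\mu^0,\psi_\mu(\lambda)} \in \R_+$, one has $\phi_\mu(\lambda) = \Phi_\mu(\lambda)/N_\phi(\lambda)$ and $\psi_\mu(\lambda) = \Psi_\mu(\lambda)/N_\psi(\lambda)$, where $N_\phi(\lambda) := \norm{\Phi_\mu(\lambda)}$ and $N_\psi(\lambda) := \norm{\Psi_\mu(\lambda)}$ are real analytic in $\lambda$ with $N_\phi(0) = N_\psi(0) = 1$. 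The whole lemma will then reduce to showing that, under the hypothesis $\Phi_\mu^k = \Psi_\mu^k$ for $k \in \{0,\dots,n\}$, the two norm series $N_\phi$ and $N_\psi$ in fact agree at one order higher than might naively be expected, namely up to and including order $n+1$.

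The key identity producing this one-order gain is
\[ N_\phi(\lambda)^2 = \sum_{q \ge 0} \lambda^q \sum_{k=0}^{q} \ps{\Phi_\mu^k, \Phi_\mu^{q-k}}. \]
For every $q \ge 1$, intermediate normalization yields $\ps{\phi_\mu^0,\Phi_\mu^q} = 0$, so the two extremal summands $k=0$ and $k=q$ vanish in the inner sum. In particular, for $q = n+1$ the only surviving indices are $1 \le k \le n$, and by the hypothesis the corresponding terms coincide with those in the analogous expansion of $N_\psi^2$. Thus $N_\phi^2$ and $N_\psi^2$ share their first $n+2$ Taylor coefficients, and since $N_\phi(0) = N_\psi(0) = 1 \neq 0$, so do their reciprocals $N_\phi^{-1}$ and $N_\psi^{-1}$; inversion of a power series at a nonvanishing constant term is a polynomial operation on the finitely many low-order coefficients, so this transfer is automatic.

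To finish, one writes the Cauchy products $\phi_\mu^k = \sum_{j=0}^{k} \Phi_\mu^j \, (N_\phi^{-1})^{k-j}$ and the analogous formula for $\psi_\mu^k$. For $k \le n$, every factor appearing on the two sides is covered by the paired equalities obtained above (all indices are bounded by $n+1$), giving $\phi_\mu^k = \psi_\mu^k$. At order $k = n+1$, the terms with $j \in \{0,\dots,n\}$ again cancel pairwise because $\Phi_\mu^j = \Psi_\mu^j$ and $(N_\phi^{-1})^{n+1-j} = (N_\psi^{-1})^{n+1-j}$, while the new term $j = n+1$ reads $\Phi_\mu^{n+1}(N_\phi^{-1})^0 - \Psi_\mu^{n+1}(N_\psi^{-1})^0 = \Phi_\mu^{n+1} - \Psi_\mu^{n+1}$. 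This yields exactly $\phi_\mu^{n+1} - \psi_\mu^{n+1} = \Phi_\mu^{n+1} - \Psi_\mu^{n+1}$.

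The only conceptually nontrivial step is the one-order gain $n \mapsto n+1$ for the norm series, which is precisely the vanishing of the extremal cross-terms afforded by intermediate normalization; everything else is routine bookkeeping with Cauchy products of power series.
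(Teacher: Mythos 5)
Your proof is correct and follows essentially the same route as the paper's: both arguments rest on the observation that, thanks to $\ps{\phi_\mu^0,\Phi_\mu^q}=\ps{\phi_\mu^0,\Psi_\mu^q}=0$ for $q\ge 1$, the order-$(n+1)$ coefficient of the squared norm involves only $\Phi_\mu^1,\dots,\Phi_\mu^n$ (the one-order gain), after which the Cauchy-product relation $\phi_\mu^k=\sum_j \Phi_\mu^j (N_\phi^{-1})^{k-j}$ — the paper's~\eqref{eq:recover_phi} written with the $Y$/$X$ recursions of Lemma~\ref{lem:interm_to_unit_norm} — yields the conclusion. The only cosmetic difference is that you phrase the transfer from $N^2$ to $N^{-1}$ as power-series inversion where it is really inversion composed with a square root at a nonvanishing constant term, but the justification you give (low-order coefficients determine low-order coefficients) covers this.
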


\begin{proof}
As in Lemma~\ref{lem:interm_to_unit_norm}, for $\Theta \in \{\Phi, \Psi\}$, we define $Y_\Theta^0 := 1, Y_\Theta^1 := 0$, and for any $q \in \Np$,
\begin{align*}
Y_\Theta^q := \f 12 \sum_{k=1}^{q-1}  \pa{\ps{\Theta_\mu^{q-k}, \Theta_\mu^k} - Y_\Theta^{q-k} Y_\Theta^k},
\end{align*}
and we have
\begin{align*}
\phi_\mu^q = \Phi_\mu^q - \sum_{k=0}^{q-2} Y^{q-k}_\Phi \phi_\mu^k, \qquad \psi_\mu^q = \Psi_\mu^q - \sum_{k=0}^{q-2} Y^{q-k}_\Psi \psi_\mu^k.
\end{align*}
Since for any $k \in \{0,\dots,n\}$, $\Phi_\mu^k = \Psi_\mu^k$, then one can prove by induction that $Y^k_{\Phi}= Y^k_{\Psi}$ for any $k \in \{0,\dots,n +1\}$, then $\phi_\mu^k = \psi_\mu^k$ for any $k \in \{0,\dots,n\}$ and $\phi_\mu^{n +1} - \psi_\mu^{n +1} = \Phi_\mu^{n +1} - \Psi_\mu^{n +1}$. 
\end{proof}

\subsection{Proof of~\eqref{eq:eigenvectors_bound_deg}}
\label{sub:Proof deg case}

We are now ready to prove~\eqref{eq:eigenvectors_bound_deg}.

\subsubsection{From $n=0$ to $n=\ell$}%

We make a recursive proof on $n \in \{0,\dots,\ell\}$ of the proposition
\begin{multline}
\cF(n) : \\
\quad \forall k \in \{0,\dots,n\}, p \in \{0,\dots,2n \}, \quad \Phi^k_\mu = \Psi_\mu^{k}, \quad \phi^k_\mu = \psi^k_\mu \text{ and } E^p_\mu = \cE_\mu^{p}.
\end{multline}
We have $\Phi^0_\mu = \Psi_\mu^{0} = \phi^0_\mu = \phi_\mu(0)$ and $E^0_\mu = \cE_\mu^{0} = E_\mu(0)$, proving $\cF(0)$. 

Let us now take $n \in \{0,\dots,\ell-1\}$, assume $\cF(n)$ and we want to show $\cF(n+1)$, that is we want to show that $\Phi^{n+1}_\mu = \Psi_\mu^{n+1}$, $\phi^{n+1}_\mu = \psi_\mu^{n+1}$,and that $E^p_\mu = \cE_\mu^{p}$ for $p\in \{2n+1,2n+2\}$. Since $\cP^\perp \Phi^{n+1}_\mu = 0$, applying Lemma~\ref{lem:tech_compute_diff} yields $\Phi^{n+1}_\mu = \Psi_\mu^{n+1}$ and applying Lemma~\ref{lem:tech_compute_diff_unit} yields $\phi^{n+1}_\mu = \psi_\mu^{n+1}$. Then we have
\begin{align*}
\phi_\mu(\lambda) - \psi_\mu(\lambda) = \sum_{k=n +2 }^{+\infty}  \lambda^k \bpa{\phi_\mu^k - \psi_\mu^k}.
\end{align*}
We use Lemma~\ref{lem:bound_phi_n}, i.e. that $\nor{\phi_\mu^k}{e}+ \nor{\psi_\mu^k}{e} \le ab^k$ for any $k \in \Nz$, some $a,b >0$. We have
\begin{align*}
	\nor{\phi_\mu(\lambda) - \psi_\mu(\lambda)}{e} &\le \sum_{k=n+2}^{+\infty} \ab{\lambda}^k \pa{\nor{\phi_\mu^k}{e} + \nor{\psi_\mu^k}{e}} \\
& \le \f{2a}{1 - \ab{\lambda}b} \pa{\ab{\lambda} b}^{n+2} \le c  \pa{\ab{\lambda} b}^{n+2},
\end{align*}
for some constant $c > 0$ independent of $\lambda$ and $n$. Applying it with~\eqref{eq:diff_errs_abs} gives
\begin{align*}
	\ab{E_\mu(\lambda) - \cE_\mu(\lambda)} \le c \pa{\ab{\lambda} b}^{2n+4}
\end{align*}
where $c$ is independent of $\lambda$ and $n$. Letting $\lambda \rightarrow 0$ gives $E^p_\mu = \cE_\mu^{p}$ for $p\in \{2n+1,2n+2\}$ as expected, and this concludes the induction, showing $\cF(n)$ for all $n \in \{0,\dots,\ell\}$.

\subsubsection{$n=\ell$ and the conclusion}%

By Lemma~\ref{lem:tech_compute_diff_unit}, we have
\begin{align*}
\phi_\mu^{\ell+1} - \psi_\mu^{\ell+1} = \Phi_\mu^{\ell+1} - \Psi_\mu^{\ell+1}.
\end{align*}
Applying $\cP^\perp$ yields $\cP^\perp \phi_\mu^{n +1} = \cP^\perp \phi_\mu^{n +1}$ and thus with~\eqref{eq:lems_eq},
\begin{align*}
\phi_\mu^{\ell+1} - \psi_\mu^{\ell+1} = \pa{1 + \rr_\mu(0) H^1}\pa{1 + R_\mu(0) H^0} \cP^\perp \phi_\mu^{\ell+1}.
\end{align*}
Returning to the series,
\begin{align*}
\phi_\mu(\lambda) - \psi_\mu(\lambda) = \lambda^{\ell +1} \bpa{\phi^{\ell+1}_\mu - \psi^{\ell+1}_\mu} + \sum_{n=\ell +2}^{+\infty} \lambda^n \pa{\phi^n_\mu - \psi^n_\mu}.
\end{align*}
We obtain~\eqref{eq:eigenvectors_bound_deg} by using the same reasoning as in Section~\ref{sub:Inequality on vec}, and we need Lemma~\ref{lem:bound_phi_n_deg}.

\section{Proof of Lemma~\ref{lem:perturbation_theory_bound}}%
\label{sec:Proof of Lemma}

We have 
\begin{align*}
	\phi_\mu(\lambda) - \vp_\mu(\lambda) &= \phi_\mu(\lambda) - \sum_{n=0}^{\ell} \lambda^n \phi_\mu^n + \sum_{n=0}^{\ell} \lambda^n \phi_\mu^n - \vp_\mu(\lambda) \\
&= \sum_{n=\ell +1}^{+\infty} \lambda^n \phi_\mu^n + \pa{1 - \nor{\sum_{n=0}^{\ell} \lambda^n \phi_\mu^n}{}^{-1} } \sum_{n=0}^{\ell} \lambda^n \phi_\mu^n.
\end{align*}
Then we write $1 = \nor{\sum_{n=0}^{+\infty} \lambda^n \phi_\mu^n}{}^{-1}$ and use that for any $u,v \in \R_+^*$, 
\begin{align*}
\ab{u^{-1} - v^{-1}}\le \ab{u - v} u^{-1} v^{-1}
\end{align*}
 so   
\begin{multline*}
	\nor{\phi_\mu(\lambda) - \vp_\mu(\lambda)}{e,\delta} \le \nor{\sum_{n=\ell +1}^{+\infty} \lambda^n \phi_\mu^n}{e,\delta} \\
	+ \ab{\nor{\sum_{n=0}^{+\infty} \lambda^n \phi_\mu^n}{} - \nor{\sum_{n=0}^{\ell} \lambda^n \phi_\mu^n}{} }\nor{\sum_{n=0}^{\ell} \lambda^n \phi_\mu^n}{}^{-1} \nor{\sum_{n=0}^{\ell} \lambda^n \phi_\mu^n}{e,\delta} \\
\le \nor{\sum_{n=\ell +1}^{+\infty} \lambda^n \phi_\mu^n}{e,\delta} \pa{1 + \nor{A^{-1}}{}^\delta \nor{\sum_{n=0}^{\ell} \lambda^n \phi_\mu^n}{}^{-1} \nor{\sum_{n=0}^{\ell} \lambda^n \phi_\mu^n}{e,\delta}}.
\end{multline*}
Finally, $\nor{\sum_{n=\ell +1}^{+\infty} \lambda^n \phi_\mu^n}{e,\delta} \le \sum_{n=\ell +1}^{+\infty} \ab{\lambda}^n \nor{\phi_\mu^n}{e,\delta}$ and we apply Lemma~\ref{lem:bound_phi_n}. 

The bound en eigenvalues can be deduced from the previous one.

\subsection*{Acknowledgement}
We warmly thank Long Meng, Eric Cancès and Antoine Levitt for useful discussions. The code of simulations will be made available on reasonable request.

\appendix

\section{Table of notations}%
\label{sec:Table of notations}

We recall here some of the main notations used in this document, together with their significations.

\vspace{1cm}
\begin{addmargin}[-2cm]{2cm}
	\begin{tabular}{c|l}
		Object & Notation \\
		\hline \\
		$\cH$ & Hilbert space on which operators act, or total ``variational'' space \\
		$\cP \cH$ & Reduced space \\
		$\cP$ & Orthogonal projection onto the reduced space \\
		$\cP^\perp$ & Orthogonal projection onto $(\cP \cH)^\perp$, equals $1-\cP$ \\
		$A$ & ``Energy operator'' enabling to define the energy norms, see Section~\ref{sub:First definitions}  \\
		$H$ & Exact operator considered \\
		$\PHP$ & Effective operator supposed to reproduce a subset of the eigenmodes of $H$ \\
		$\nu$ & Number of approximated eigenmodes \\
		$(\phi_\mu,E_\mu)$ & Exact eigenmode (the one of $H$) \\
		$(\psi_\mu,\cE_\mu)$ & Approximate eigenmode (the one of $\PHP$) \\
		$\Gamma$ & Exact eigen-density matrix (the one of $H$) \\
		$\Lambda$ & Approximate eigen-density matrix (the one of $\PHP$) \\
		$R_\mu$ & Partial inverse, defined in~\eqref{eq:def_R} \\
		$c_A$ & Norm of $A^{-1}$, defined in~\eqref{eq:cA_cH} \\
		$c_H$ & Energy norm of $H$, defined in~\eqref{eq:cA_cH} \\
		$c_\cP$ & Energy norm of $\cP$, defined in~\eqref{eq:ccP} \\
		$\nor{\cdot}{e}$ & Energy norm for vectors, defined in~\eqref{eq:energy_norm_vects} \\
		$\nor{\cdot}{2,e}$ & Energy norm for density matrices, defined in~\eqref{eq:energy_norm} \\
		$\lambda$ & Perturbative parameter \\
		$\ell$ & Perturbative order \\
		$f^\ell$ & $\ell^{\text{th}}$ derivative of $f(\lambda)$ at $\lambda = 0$, up to a factor, where $\lambda \mapsto f(\lambda)$ is any map \\
		$\vp^\ell(\lambda)$ & Perturbative approximation of $\phi(\lambda)$ at order $\ell$, defined in~\eqref{eq:pt_approx}
	\end{tabular}
\end{addmargin}

\section{Other kinds of couplings between reduced basis methods and perturbation theory}%
\label{sec:Other kinds}

In the physics literature, the reduced basis method or subspace projections methods are called variational approximation. A number of approaches relating perturbation theory and reduced basis methods were developped, they are often called ``variational perturbation theory'' but are not equivalent to RBM+PT that we studied in this document. We present some of them in this section. Other contributions using variational methods and perturbation theory are present in
~\cite{SisSolSol94,SisSolShe94,ChrJorHat98}.

\subsection{General reduced basis developped in $\lambda$}%
\label{sub:General reduced basis developped in lambda}

In~\cite{SilLeu67}, Silverman and van Leuven consider a self-adjoint operator $H(\lambda)$ depending on the real parameter $\lambda \in \R$. They take a general basis of $m$ functions $(g_a)_{1 \le a \le m}$, forming $\Span (g_a)_{1 \le a \le m} = \cP \cH$ where $\cP$ is an orthogonal projection. They minimize the quadratic form associated to $H(\lambda)$ on this basis, which provides $m$ eigenvectors $(f_a(\lambda))_{1 \le a \le m}$ depending on $\lambda$, and then they study the Taylor expansion of the $f_a(\lambda)$ in $\lambda$. Nevertheless, this is equivalent to RBM+PT in the case where the initial basis $(g_a)_{1 \le a \le m}$ is formed by the derivatives of the exact eigenvector with respect to $\lambda$, but we did not see them following this path in their developments. They called this method the perturbation-variational Rayleigh-Ritz (PVRR) approach, which was used in~\cite{SilLeu70,Silverman77,SilSob78,Silverman83,Silverman84,Silverman86,SilHin88} for instance.

\subsection{Hylleraas variational perturbation theory}%
\label{sub:Hylleraas variational perturbation theory}

Hylleraas variational perturbation theory~\cite{Hylleraas30}, also presented in~\cite[Section 25.$\beta$]{BetSal57} and~\cite{CavDav88,CavDav88b}, enables to compute approximations of the coefficients of the Taylor expansion of $\phi(\lambda)$. It consists in computing $\phi(0)$, and defining a new operator $A$~\cite[(25.13)]{BetSal57} which has $\phi^1$ as exact eigenvector and which only involves $H^0$, $H^1$ and $\phi(0)$. Considering an orthogonal projector $\cP$ producing a reduced space $\cP \cH$, an approximation of $\phi^1$ is obtained as being an eigenvector of $\pa{\cP A \cP}_{\mkern 1mu \vrule height 2ex\mkern2mu \cP \cH \rightarrow \cP \cH}$.

\subsection{Approximating the quantum partition function}%
\label{sub:Approximating the quantum partition function}

To approach the quantum partition function of one particle in a potential $V$, Feynman and Kleinert~\cite{FeyKle86} approximate it by a classical partition function in an effective potential $V_\text{cl}$. This effective potential is supposed to incorporate quantum fluctuations. It is build from a Gaussian smearing of $V$ having a parameter $a$ which is obtain by the optimization of a free energy. Then $V_\text{cl}$ can be obtained from $V$, by a series which smallness parameter is the inverse temperature $\beta$. The method was continued in~\cite{Kleinert93,Kleinert92,KleJan95}, see also the textbook~\cite[Chapter 5]{Kleinert06}. The method was called variational perturbation theory.

\section{Intermediate normalization}%
\label{sec:interm_normalization}

In this section, we show several results about intermediate normalization, which is aimed to be applied to Rayleigh-Schrödinger series eigenvectors in another part of this document, for both degenerate and non-degenerate cases. 

\subsection{Unit normalization}%
\label{sub:Unit normalization}

We consider a Hilbert space $\cH$ with scalar product $\ps{\cdot,\cdot}$ and norm $\nor{\cdot}{}$, and a map $\phi : \R \rightarrow \cH$ depending on one real parameter $\lambda$. We consider that 
\begin{align*}
\nor{\phi(\lambda)}{} = 1
\end{align*}
 for any $\lambda \in \R$, which is called unit normalization. We assume that $\phi$ is analytic at $0$ so we can expand it
\begin{align*}
\phi^n := \f{1}{n!} \pa{\f{\d^n}{\d \lambda^n} \phi(\lambda)}_{\mkern 1mu \vrule height 2ex\mkern2mu \lambda = 0}, \qquad \phi(\lambda) = \sum_{n=0}^{+\infty} \lambda^n \phi^n.
\end{align*}

\subsection{Definition of intermediate normalization}
\label{sub:Intermediate normalization}

Let us define
\begin{align*}
\Phi(\lambda) := \f{\phi(\lambda)}{\ps{\phi^0, \phi(\lambda)}},\qquad  \qquad \Phi^n := \f{1}{n!} \pa{\f{\d^n}{\d \lambda^n} \Phi(\lambda)}_{\mkern 1mu \vrule height 2ex\mkern2mu \lambda = 0}.
\end{align*}
We then define $Z(\lambda) := \f{1}{\ps{\phi^0, \phi(\lambda)}}$. Let us denote by $P_{\phi^0}$ the projector onto $\C \phi^0$, so $P_{\phi^0} \phi(\lambda) = \ps{\phi^0,\phi(\lambda)} \phi^0$. Then we have
\begin{align*}
	\Phi(\lambda) &= Z(\lambda) \pa{P_{\phi^0} \phi(\lambda) + \bpa{1-P_{\phi^0}} \phi(\lambda)} = \phi^0 + \bpa{1-P_{\phi^0}} Z(\lambda)\phi(\lambda),
\end{align*}
Thus $\Phi^0 = \phi^0 = \Phi(0) = \phi(0)$, and for any $n \in \Np$, $\Phi^n \in \bpa{1-P_{\phi^0}} \cH = \{\Phi^0\}^\perp$. We conclude that
\begin{align}\label{eq:inter_normal} 
\Phi^n \perp \Phi^0, \qquad \forall n \ge 1.
\end{align}

The normalization of $\Phi(\lambda)$ is called the intermediate normalization. It is not a unit vector for all $\lambda \neq 0$ in general, but has the convenient property~\eqref{eq:inter_normal}. For instance in the case of families of eigenvectors, it is computable as recalled in Section~\ref{sec:Bounds on the Rayleigh-Schrödinger series in perturbation theory}. For this reason, this is usually the one that is computer first in eigenvalue problems depending on one parameter.

\subsection{From standard normalization to unit normalization}%
\label{sub:Rayleigh-Schrödinger series in standard normalization}

Once $\Phi^n$ is computed, or once one has proved properties on it, one can need to work with $\phi^n$ again. One way of going from intermediate normalization to unit normalization is to fix the phasis gauge of $\phi(\lambda)$ such that 
\begin{align*}
\ps{\phi^0, \phi(\lambda)} \in \R_+.
\end{align*}
 Then $\f{\Phi(\lambda)}{\nor{\Phi(\lambda)}{}} =  \pa{\nor{\Phi(\lambda)}{} \ps{\phi^0, \phi(\lambda)}}^{-1}\phi(\lambda)$ so $\f{\Phi(\lambda)}{\nor{\Phi(\lambda)}{} }$ and $\phi(\lambda)$ have the same phasis, and since they both have unit normalization they are equal, 
\begin{align}\label{eq:interm_to_unit_norm} 
\phi(\lambda) = \f{\Phi(\lambda)}{\nor{\Phi(\lambda)}{}}.
\end{align}

The next result shows how to obtain the series $\phi^n$ from the $\Phi^n$'s.

\begin{lemma}[Obtaining the unit normalization series from the intermediate normalization one]\label{lem:interm_to_unit_norm} 
 We define $Y^0 := X^0 := 1$, $Y^1 := X^1 := 0$ and, recursively, for any $n \ge 2$,
\begin{align}\label{eq:Yn} 
Y^n := \f 12 \sum_{k=1}^{n-1}  \pa{\ps{\Phi^{n-k}, \Phi^k} - Y^{n-k} Y^k}, \qquad  X^n := - \sum_{k=0}^{n-2}  X^k Y^{n-k}.
\end{align}
Then $\phi^0 = \Phi^0$, $\phi^1 = \Phi^1$ and for any $n \ge 2$,
\begin{align}\label{eq:recover_phi} 
\phi^n = \Phi^n + \sum_{k=0}^{n-2} X^{n-k}\Phi^k.
\end{align}
\end{lemma}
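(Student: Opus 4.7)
The plan is to identify $Y(\lambda)$ as the Taylor series of $\|\Phi(\lambda)\|$ and $X(\lambda)$ as the Taylor series of $\|\Phi(\lambda)\|^{-1}$, after which the claimed formula follows from the Cauchy product applied to the identity $\phi(\lambda)=\Phi(\lambda)/\|\Phi(\lambda)\|$ recalled in~\eqref{eq:interm_to_unit_norm}.

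First, I would set $N(\lambda) := \|\Phi(\lambda)\|^2 = \ps{\Phi(\lambda),\Phi(\lambda)}$ and expand
\begin{align*}
N(\lambda)=\sum_{n\ge 0}\lambda^n N^n,\qquad N^n=\sum_{k=0}^{n}\ps{\Phi^{n-k},\Phi^k}.
\end{align*}
The orthogonality relation~\eqref{eq:inter_normal}, which gives $\ps{\Phi^0,\Phi^k}=0$ for all $k\ge 1$, implies $N^0=1$, $N^1=0$, and $N^n=\sum_{k=1}^{n-1}\ps{\Phi^{n-k},\Phi^k}$ for $n\ge 2$. I then claim that $Y^n$ is precisely the $n$-th Taylor coefficient of $\sqrt{N(\lambda)}=\|\Phi(\lambda)\|$, which I would prove by an induction on $n$ using the relation $\sum_{k=0}^n Y^{n-k}Y^k = N^n$ coming from squaring the series: isolating the $k=0$ and $k=n$ terms and using $Y^0=1$ yields exactly the recursion in~\eqref{eq:Yn}, the initial values $Y^0=1,Y^1=0$ being checked directly.

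Next, I would introduce $Z(\lambda):=1/\|\Phi(\lambda)\|=1/Y(\lambda)$ with expansion $Z(\lambda)=\sum_n\lambda^n Z^n$. The identity $Y(\lambda)Z(\lambda)=1$ yields $Z^0=1$ and, for $n\ge 1$, $Z^n=-\sum_{k=0}^{n-1}Z^k Y^{n-k}$. Since $Y^1=0$, the term $k=n-1$ vanishes, so for $n\ge 2$ one obtains $Z^n=-\sum_{k=0}^{n-2}Z^k Y^{n-k}$, which is exactly the recursion defining $X^n$; together with the common initial data $Z^0=X^0=1,Z^1=X^1=0$ this gives $Z^n=X^n$ for all $n$. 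Finally, $\phi(\lambda)=Z(\lambda)\Phi(\lambda)$ expanded by the Cauchy product gives $\phi^n=\sum_{k=0}^n Z^{n-k}\Phi^k$; separating the $k=n$ term (with $Z^0=1$) and the $k=n-1$ term (which vanishes since $Z^1=0$) leaves $\phi^n=\Phi^n+\sum_{k=0}^{n-2}X^{n-k}\Phi^k$, as claimed.

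The argument is essentially bookkeeping; the only subtle point is checking that the $k=n-1$ contribution drops out thanks to $Y^1=Z^1=0$, which in turn traces back to the intermediate normalization property $\Phi^1\perp\Phi^0$. No obstacle is expected beyond ensuring that the indexing conventions in~\eqref{eq:Yn} and~\eqref{eq:recover_phi} match those obtained from the squaring and inverting identities.
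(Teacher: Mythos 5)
Your proposal is correct and follows essentially the same route as the paper's proof: square the norm series to identify $Y^n$ with the Taylor coefficients of $\|\Phi(\lambda)\|$, invert the series to identify $X^n$ with those of $\|\Phi(\lambda)\|^{-1}$, and conclude via the Cauchy product applied to $\phi(\lambda)=\Phi(\lambda)/\|\Phi(\lambda)\|$, the vanishing of the $k=n-1$ term coming from $Y^1=0$ exactly as in the paper. The only cosmetic difference is that you introduce $N(\lambda)=\|\Phi(\lambda)\|^2$ as an explicit intermediate object, which changes nothing of substance.
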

We remark that $\phi^2 = \Phi^2 - \f 12 \nor{\phi^1}{}^2 \phi^0$.
\begin{proof}
We define $y(\lambda) := \nor{\Phi(\lambda)}{}$ and consider its Taylor series 
\begin{align*}
y^n := \f{1}{n!} \pa{\f{\d^n}{\d \lambda^n} y(\lambda)}_{\mkern 1mu \vrule height 2ex\mkern2mu \lambda = 0},
\end{align*}
the relation $y(\lambda)^2 = \nor{\Phi(\lambda)}{}^2$ gives, for any $n \in \Nz$,
\begin{align*}
\sum_{k=0}^{n} y^{n-k} y^k = \sum_{k=0}^{n} \ps{\Phi^{n-k}, \Phi^k}
\end{align*}
hence, using $y^0 = \nor{\phi^0}{}^{-2} = 1$ and~\eqref{eq:inter_normal}, we get a recursive way of obtaining the $y^n$'s, which is $y^1 = 0$ and for any $n \ge 2$, via
\begin{align*}
y^n = \f 12 \sum_{k=1}^{n-1}  \pa{\ps{\Phi^{n-k}, \Phi^k} - y^{n-k} y^k},
\end{align*}
so $y^n = Y^n$ for any $n \in \Nz$. We then define $x(\lambda) := 1/y(\lambda)$, and its Taylor series $x^n := \f{1}{n!} \pa{\f{\d^n}{\d \lambda^n} x(\lambda)}_{\mkern 1mu \vrule height 2ex\mkern2mu \lambda = 0}$. The relation $x(\lambda) y(\lambda) = 1$ gives $\sum_{k=0}^{n} x^{k} y^{n-k} = \delta_n$ for any $n \in \Nz$, yielding $x^0 = 1$, $x^1 = 0$ and for any $n \ge 2$,
\begin{align*}
x^n = - \sum_{k=0}^{n-2}  x^k y^{n-k},
\end{align*}
so $x^n = X^n$ for any $n \in \Nz$. Finally, from~\eqref{eq:interm_to_unit_norm} we have $\phi(\lambda) = \Phi(\lambda) x(\lambda)$ hence we deduce~\eqref{eq:recover_phi}.
\end{proof}

\section{Error bounds between eigenvectors,\\density matrices and eigenvalues}%
\label{sec:appendix_error_bounds_DM}

Eigenvectors are controlled by eigen-density matrices, so it is equivalent to obtain bounds using eigenvectors or bounds using density matrices. This is the object of this appendix, and it enables to provide precisions on how to derive the bounds~\eqref{eq:bound_rot} and~\eqref{eq:sum_eigenvals_converge}.

For any set of eigenvalues $\bvp := \pa{\vp_\alpha}_{\alpha =1}^\nu\in \cH^\nu$, we define the norm
\begin{align*}
\nor{\bvp}{}^2 := \sum_{\mu=1}^{\nu} \nor{\vp_\mu}{}^2.
\end{align*}
The following Lemma is well-known, see~\cite[Lemma 3.3]{CanDusMad20} and~\cite[Lemma 2.1]{CanDusMad21}.

\begin{lemma}[Comparing errors between eigenvectors, density matrices and eigenvalues, \cite{CanDusMad21,CanDusMad20}]\label{lem:compare_errors} 
Take two self-adjoint operators $A$ and $H$ acting on a Hilbert space $\cH$, assume that there exists $a \in \R$ such that $0$ is in the resolvent set of $H + a$. Take an orthogonal projection $\cP$, consider $\bphi := \pa{\phi_\alpha}_{\alpha =1}^\nu \in \cH^\nu$ and $\bpsi := \pa{\psi_\alpha}_{\alpha =1}^\nu \in \pa{\cP \cH}^\nu$ such that $\pa{E_\alpha, \phi_\alpha}_{\alpha =1}^\nu$ are eigenmodes of $H$ and $\pa{\cE_\alpha, \psi_\alpha}_{\alpha =1}^\nu$ are eigenmodes of $\cP H \cP$. Define the density matrices $\Gamma := \dm{\bphi}$, $\Lambda := \dm{\bpsi}$, define $U^{\bphi,\bpsi}$ as one of the optimizer(s) of the problem
\begin{align*}
\mymin{U \in \cU_\nu} \nor{\bphi - U \bpsi}{}
\end{align*}
and define $\bpsi^{\bphi} :=  U^{\bphi,\bpsi} \bpsi$. Then we have
\begin{multline}\label{eq:error_dm_vecs_energy} 
\nor{A \bpa{\bphi - \bpsi^\bphi}}{} \le \nor{A \ggm}{} \nor{\ggp A^{-1}}{} \\
\times \pa{1 + \f 14 \nor{\pa{H+a}^{-\f 12}}{}^2\nor{\Gamma - \Lambda}{2}^2 \mymax{1 \le \alpha \le \nu} \ab{E_\alpha + a}}^{\f 12} \nor{A \pa{\Gamma - \Lambda}}{2},
\end{multline}
and
\begin{align}\label{eq:control_energies} 
\ab{\sum_{\alpha=1}^{\nu} \pa{E_\alpha - \cE_\alpha}} \le \pa{\nor{A^{-1} H A^{-1}}{}+ \nor{A^{-1}}{} ^2 \mymax{1 \le \alpha \le \nu} \ab{E_\alpha}} \nor{A \bpa{\bphi - \bpsi^\bphi}}{}^2.
\end{align}
\end{lemma}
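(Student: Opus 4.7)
The starting point is the classical Procrustes alignment. Existence of a minimizer $U^{\bphi,\bpsi}$ follows from continuity and compactness of $\cU_\nu$. Writing $\|\bphi - U\bpsi\|^2 = 2\nu - 2\re\tr(U^* G)$ with $G_{\alpha\beta} := \ps{\phi_\alpha,\psi_\beta}$, the minimum is realized by the polar factor of $G$; equivalently, one may pass to orthonormal bases of $\Gamma\cH$ and $\Lambda\cH$ that jointly diagonalize the Gram correspondence, producing $\bphi = (u_i)_i$, $\bpsi^\bphi = (v_i)_i$ with $\ps{u_i,v_j}=\sigma_i\delta_{ij}$ and $\sigma_i\in[0,1]$. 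Equivalently, the rotated Gram matrix $M := \langle\bphi,\bpsi^\bphi\rangle$ is Hermitian positive semi-definite with $0\le M\le I$.

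First I would prove the ordinary-norm comparison. A direct expansion using orthonormality of each family and $M^*=M$ gives
\begin{align*}
\nor{\bphi-\bpsi^\bphi}{}^2 = 2\tr(I-M), \qquad
\nor{\Gamma-\Lambda}{2}^2 = 2\tr(I-M^2) = 2\tr\pa{(I-M)(I+M)},
\end{align*}
from which $\nor{\bphi-\bpsi^\bphi}{}^2\le\nor{\Gamma-\Lambda}{2}^2$ follows because $I+M\ge I$.

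Next, the energy-norm bound~\eqref{eq:error_dm_vecs_energy}. The factorization
\begin{align*}
\nor{A(\bphi-\bpsi^\bphi)}{}\le\nor{A\ggm}{}\nor{\ggp(\bphi-\bpsi^\bphi)}{},\qquad
\nor{\ggp(\Gamma-\Lambda)}{2}\le\nor{\ggp A^{-1}}{}\nor{A(\Gamma-\Lambda)}{2}
\end{align*}
reduces the task to the comparison in the natural $\ggp$-norm. In the SVD basis, decompose $v_i=\sigma_i u_i + w_i$ with $w_i\in\Gamma^\perp\cH$, $\nor{w_i}{}^2=1-\sigma_i^2$, and use $(H+a)u_i=(E_i+a)u_i$. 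Expanding both $\sum_i\nor{\ggp(u_i-v_i)}{}^2$ and $\nor{\ggp(\Gamma-\Lambda)}{2}^2$ in this basis, the density-matrix quantity produces the weights $(1-\sigma_i^2)$ while the vector quantity produces $(1-\sigma_i)$. The pointwise discrepancy $1-\sigma_i = (1-\sigma_i^2)/(1+\sigma_i)$ is controlled using $\sigma_i\ge 1-\tfrac12\nor{\Gamma-\Lambda}{2}^2$ (from $\sum(1-\sigma_i^2)=\tfrac12\nor{\Gamma-\Lambda}{2}^2$) together with the operator bound $(E_i+a)\le\nor{\ggm}{}^{-2}$ on the spectral cone; this yields the correction $(1+\tfrac14\nor{\ggm}{}^2\nor{\Gamma-\Lambda}{2}^2\max_\alpha|E_\alpha+a|)^{1/2}$. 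The main obstacle is exactly this step: organizing the $(1-\sigma_i)$ versus $(1-\sigma_i^2)$ discrepancy cleanly, handling the possible mixing between eigenspaces of $H$ and of $\cP H\cP$ through the vectors $w_i$, and producing a single scalar factor rather than an eigenvalue-dependent one.

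Finally, the eigenvalue comparison~\eqref{eq:control_energies}. Since $\psi_\alpha\in\cP\cH$ one has $\ps{\psi_\alpha,H\psi_\alpha}=\cE_\alpha$, so $\sum\cE_\alpha=\tr(\Lambda H)$, and by cyclicity (invariance of the trace under the unitary alignment) also $\sum\cE_\alpha=\sum\ps{\psi_\alpha^\bphi,H\psi_\alpha^\bphi}$. Applying the Rayleigh-type identity~\eqref{eq:diff_errs} with $\phi=\phi_\alpha$, $\psi=\psi_\alpha^\bphi$ pairwise gives
\begin{align*}
\sum_{\alpha=1}^\nu(E_\alpha-\cE_\alpha)=\sum_{\alpha=1}^\nu\ps{\phi_\alpha-\psi_\alpha^\bphi,(E_\alpha-H)(\phi_\alpha-\psi_\alpha^\bphi)},
\end{align*}
and inserting $A^{-1}$ in the quadratic form yields the bound by $\max_\alpha\nor{A^{-1}(E_\alpha-H)A^{-1}}{}$ times $\nor{A(\bphi-\bpsi^\bphi)}{}^2$, with the usual splitting $\nor{A^{-1}(E_\alpha-H)A^{-1}}{}\le\nor{A^{-1}HA^{-1}}{}+|E_\alpha|\nor{A^{-1}}{}^2$.
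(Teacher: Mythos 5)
Your Procrustes setup, the ordinary-norm comparison $\nor{\bphi-\bpsi^\bphi}{}\le\nor{\Gamma-\Lambda}{2}$, and the proof of~\eqref{eq:control_energies} are all correct and essentially identical to the paper's argument (the paper likewise reduces the eigenvalue sum to the Rayleigh identity~\eqref{eq:diff_errs} applied pairwise, after checking that $\sum_\alpha\ps{\psi^\bphi_\alpha,H\psi^\bphi_\alpha}=\sum_\alpha\cE_\alpha$ by unitary invariance).

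The gap is in the energy bound~\eqref{eq:error_dm_vecs_energy}, precisely at the step you yourself flag as the main obstacle. Once you rotate to a basis $(u_i)$ diagonalizing the Hermitian Gram matrix $M$, the $u_i$ are linear combinations of the $\phi_\alpha$'s, so the identity $(H+a)u_i=(E_i+a)u_i$ is false unless all the $E_\alpha$ coincide: the rotation preserves the span $\Gamma\cH$ but not the individual eigenvectors. Everything downstream of that line (the weights $|E_i+a|$, the claimed role of the $1-\sigma_i$ versus $1-\sigma_i^2$ discrepancy in producing the correction factor) therefore does not go through as written. The paper avoids this by never diagonalizing: it keeps the original eigenvectors $\phi_\alpha$, uses only the Hermiticity of $M$ to obtain $\ps{\phi_\alpha,\phi_\mu-\psi^\bphi_\mu}=\tfrac12\ps{\phi_\alpha-\psi^\bphi_\alpha,\phi_\mu-\psi^\bphi_\mu}$, computes both $\nor{\ggp(\bphi-\bpsi^\bphi)}{}^2$ and $\nor{\ggp(\Gamma-\Lambda)}{2}^2$ exactly as sums of the form $\sum_\alpha\pa{|\cE_\alpha+a|-|E_\alpha+a|+\cdots}$, and then bounds their difference by $\tfrac14\max_\alpha|E_\alpha+a|\,\nor{\Gamma-\Lambda}{2}^4$ via Cauchy--Schwarz on that identity; this is where the correction factor actually originates. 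If you insist on the SVD route, the repair is to replace $|E_i+a|$ by $\ps{u_i,|H+a|u_i}$ throughout (legitimate, since $|H+a|$ commutes with $\Gamma$, so the cross terms $\ps{u_i,|H+a|w_i}$ vanish); the difference of the two energy quantities then becomes $-2\sum_i\sigma_i(1-\sigma_i)\ps{u_i,|H+a|u_i}\le 0$, which would in fact give the inequality with no correction factor at all --- but that is a different computation from the one you sketched, and you would need to carry it out explicitly rather than borrow the eigenvalue relation.
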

We provide a proof in our context for the sake of completeness. It closely follows~\cite[Lemma 3.3]{CanDusMad20} and~\cite[Lemma 2.1]{CanDusMad21}.
\begin{proof}
	First, 
\begin{align}\label{eq:error_dm_vecs} 
	2^{-\f 12} \nor{\Gamma - \Lambda}{2} \le \nor{\bphi - \bpsi^\bphi}{} \le \nor{\Gamma - \Lambda}{2}
\end{align}
is obtained from~\cite[Lemma 2.1]{CanDusMad21} and~\cite[Lemma 4.3]{CanChaMad12}. In~\cite{CanDusMad21} and \cite{CanChaMad12} it is proved for orthogonal matrices, i.e. in the real case, but the proof extends naturally to the complex case. Defining the $\nu \times \nu$ matrix $M$ by $M_{\alpha,\mu} := \ps{\psi^{\bphi}_\alpha, \phi_\mu}$ for any $\alpha, \mu \in \{1,\dots,\nu\}$, by~\cite[Lemma 4.3]{CanChaMad12}, $M$ is hermitian (again, we apply the results to the complex case), so
\begin{multline*}
\ps{\phi_\alpha , \phi_\mu - \psi^{\bphi}_\mu} - \f 12 \ps{\phi_\alpha - \psi^{\bphi}_\alpha , \phi_\mu - \psi^{\bphi}_\mu} \\
= \f 12 \ps{\phi_\alpha , \phi_\mu - \psi^{\bphi}_\mu} + \f 12 \ps{\psi^{\bphi}_\alpha , \phi_\mu - \psi^{\bphi}_\mu} = \f 12 \pa{\ps{\psi^{\bphi}_\alpha , \phi_\mu} - \ps{\phi_\alpha , \psi^{\bphi}_\mu}} = 0
\end{multline*}
and for any $\alpha, \mu \in \{1,\dots,\nu\}$ we have
\begin{align}\label{eq:tcomp} 
\ps{\phi_\alpha , \phi_\mu - \psi^{\bphi}_\mu} = \f 12 \ps{\phi_\alpha - \psi^{\bphi}_\alpha , \phi_\mu - \psi^{\bphi}_\mu}.
\end{align}
Then
\begin{align*}
	\nor{\ab{H + a}^{\f 12} \Lambda}{2}^2 &=\sum_{\alpha=1}^{\nu}  \ps{\psi_\alpha , \ab{H + a} \psi_\alpha} = \sum_{\alpha=1}^{\nu}  \ab{\cE_\alpha + a} \\
	\nor{\ab{H + a}^{\f 12} \Gamma}{2}^2 &= \sum_{\alpha=1}^{\nu}  \ab{E_\alpha + a},
\end{align*}
and
\begin{align*}
	&\hs{\ab{H + a}^{\f 12} \Lambda , \ab{H + a}^{\f 12} \Gamma} = \tr \Lambda \ab{H + a} \Gamma= \tr \sum_{\alpha=1}^{\nu}  \Lambda \ab{H + a} \proj{\phi_\alpha} \\
	&\qquad \qquad = \sum_{\alpha=1}^{\nu} \ab{E_\alpha + a} \ps{\phi_\alpha, \Lambda \phi_\alpha} \underset{\substack{\Lambda = \Lambda^2}}{=} \; \sum_{\alpha=1}^{\nu} \ab{E_\alpha + a} \nor{\Lambda \phi_\alpha}{}^2 \\
	&\qquad \qquad = \sum_{\alpha=1}^{\nu} \ab{E_\alpha + a} \pa{1 - \nor{\Lambda^\perp \phi_\alpha}{}^2}.
\end{align*}
We can hence compute
\begin{align*}
	\nor{ \ggp \pa{\Gamma - \Lambda}}{2}^2 &= \nor{\ab{H + a}^{\f 12} \Gamma}{2}^2+ \nor{\ab{H + a}^{\f 12} \Lambda}{2}^2 \\
&\qquad \qquad \qquad \qquad - 2 \re \hs{\ab{H + a}^{\f 12} \Lambda , \ab{H + a}^{\f 12} \Gamma}  \\
&=\sum_{\alpha=1}^{\nu} \ab{\cE_\alpha + a} - \ab{E_\alpha + a} + 2\ab{E_\alpha +a} \nor{\Lambda^\perp \phi_\alpha}{}^2.
\end{align*}
Then,
\begin{align*}
&\nor{\ggp \pa{\bphi - \bpsi^{\bphi}}}{}^2 =\sum_{\alpha=1}^{\nu} \nor{ \ggp \pa{\phi_\alpha - \psi^{\bphi}_\alpha}}{}^2 \\
&\qquad = \sum_{\alpha=1}^{\nu} \nor{ \ggp \phi_\alpha }{}^2 + \nor{ \ggp \psi^{\bphi}_\alpha}{}^2  - 2\re \ps{ \ab{H + a} \phi_\alpha ,  \psi^{\bphi}_\alpha} \\
&\qquad = \sum_{\alpha=1}^{\nu}  \ab{\cE_\alpha + a} + \ab{E_\alpha + a} - 2 \ab{E_\alpha +a} \re \ps{  \phi_\alpha ,  \psi^{\bphi}_\alpha} \\
&\qquad = \sum_{\alpha=1}^{\nu} \ab{\cE_\alpha + a} - \ab{E_\alpha + a} + \ab{E_\alpha +a} \nor{\phi_\alpha - \psi^{\bphi}_\alpha}{}^2.
\end{align*}
where we used $\ps{\phi_\alpha , \psi^{\bphi}_\alpha} = 1 - \f 12  \nor{\phi_\alpha - \psi^{\bphi}_\alpha}{}^2$ in the last equality, which comes from~\eqref{eq:tcomp}. We define $\lambda_{\textup{max}} := \mymax{1 \le \alpha \le \nu} \ab{E_\alpha + a}$ and following~\cite[Appendix A]{CanDusMad20}, we have
\begin{align*}
&\nor{\ggp \pa{\bphi - \bpsi^{\bphi}}}{}^2 - \nor{ \ggp \pa{\Gamma - \Lambda}}{2}^2 \\
&\qquad  = \sum_{\alpha=1}^{\nu} \ab{E_\alpha +a}\pa{\nor{\phi_\alpha - \psi^{\bphi}_\alpha}{}^2 - 2 \nor{\Lambda^\perp \pa{ \phi_\alpha - \psi^{\bphi}_\alpha}}{}^2} \\
& \qquad \le \sum_{\alpha=1}^{\nu} \ab{E_\alpha +a}\pa{\nor{\phi_\alpha - \psi^{\bphi}_\alpha}{}^2 - \nor{\Lambda^\perp \pa{ \phi_\alpha - \psi^{\bphi}_\alpha}}{}^2} \\
& \qquad = \sum_{\alpha=1}^{\nu} \ab{E_\alpha +a}\nor{\Lambda \pa{ \phi_\alpha - \psi^{\bphi}_\alpha}}{}^2 = \sum_{\substack{1 \le \alpha, \mu \le \nu}} \ab{E_\alpha + a} \ab{\ps{\phi_\alpha - \psi^{\bphi}_\alpha , \phi_\mu}}^2 \\
&\qquad  \underset{\substack{\eqref{eq:tcomp}}}{\le} \; \f 14\lambda_{\textup{max}} \sum_{\substack{1 \le \alpha, \mu \le \nu}} \ab{\ps{\phi_\alpha - \psi^{\bphi}_\alpha , \phi_\mu - \psi^{\bphi}_\mu}}^2 \le \f 14 \lambda_{\textup{max}}\nor{\bphi - \bpsi^{\bphi}}{2}^4   \\
&\qquad \underset{\substack{\eqref{eq:error_dm_vecs}}}{\le} \; \f 14 \lambda_{\textup{max}}\nor{\Gamma - \Lambda}{2}^4  \\
&\qquad \le \f 14 \lambda_{\textup{max}} \nor{\pa{H+a}^{-\f 12}}{}^2  \nor{\Gamma - \Lambda}{2}^2 \nor{\ggp \pa{\Gamma - \Lambda}}{2}^2.
\end{align*}
Then
\begin{multline}\label{eq:this_ineq} 
\nor{\ggp \pa{\bphi - \bpsi^{\bphi}}}{} \le \pa{1 + \f 14 \lambda_{\textup{max}} \nor{\pa{H+a}^{-\f 12}}{}^2  \nor{\Gamma - \Lambda}{2}^2}^{\f 12} \\
\times\nor{\ggp \pa{\Gamma - \Lambda}}{2}.
\end{multline}
Next,
\begin{align*}
&\nor{A \bpa{\bphi - \bpsi^{\bphi}}}{} \le \nor{A \ggm}{}  \nor{ \ggp \pa{\bphi - \bpsi^{\bphi}}}{} \\
& \qquad \le \nor{A \ggm}{} \pa{1 + \f 14 \lambda_{\textup{max}} \nor{\pa{H+a}^{-\f 12}}{}^2  \nor{\Gamma - \Lambda}{2}^2}^{\f 12} \\
&\qquad \qquad \qquad \qquad \qquad \qquad \times\nor{\ggp \pa{\Gamma - \Lambda}}{2},
\end{align*}
and we deduce~\eqref{eq:error_dm_vecs_energy} by using~\eqref{eq:this_ineq}.

Let us now show~\eqref{eq:control_energies}. For any $U \in \cU_\nu$, we have
\begin{align*}
	&\sum_{\alpha=1}^{\nu} \ps{U \psi_\alpha , H U \psi_\alpha} = \sum_{\substack{1 \le \alpha, \mu, \beta \le \nu}} \overline{U}_{\alpha\mu} U_{\alpha\beta} \ps{\psi_\mu, H \psi_\beta} \\
	&\qquad = \sum_{\substack{1 \le \alpha, \mu, \beta \le \nu}} \overline{U}_{\alpha\mu} U_{\alpha\beta} \cE_\beta \ps{\psi_\mu, \psi_\beta} = \sum_{\substack{1 \le \alpha, \mu, \beta \le \nu}} \overline{U}_{\alpha\mu} U_{\alpha\beta} \cE_\beta \delta_{\mu-\beta} \\
	&\qquad = \sum_{\substack{1 \le \alpha, \mu\le \nu}} \overline{U}_{\alpha\mu} U_{\alpha\mu} \cE_\mu = \sum_{\mu=1}^{\nu} \cE_\mu \sum_{\alpha=1}^{\nu} U^*_{\mu\alpha} U_{\alpha\mu}  = \sum_{\mu=1}^{\nu} \cE_\mu \pa{U^* U}_{\mu\mu} = \sum_{\mu=1}^{\nu} \cE_\mu.
\end{align*}
Hence similarly as in~\eqref{eq:diff_errs},
\begin{align*}
	&\sum_{\alpha=1}^{\nu} \ps{\phi_\alpha - \psi^\bphi_\alpha , \pa{E_\alpha - H} \pa{ \phi_\alpha - \psi^\bphi_\alpha } } = \sum_{\alpha=1}^{\nu} \ps{ \psi^\bphi_\alpha , \pa{E_\alpha - H}  \psi^\bphi_\alpha } \\
	&=\sum_{\alpha=1}^{\nu} E_\alpha - \ps{ U^{\bphi,\bpsi} \psi_\alpha, H U^{\bphi,\bpsi} \psi_\alpha } =\sum_{\alpha=1}^{\nu} \pa{ E_\alpha -  \cE_\alpha}.
\end{align*}
Thus
\begin{align*}
&\ab{\sum_{\alpha=1}^{\nu} \pa{E_\alpha - \cE_\alpha}} \le \sum_{\alpha=1}^{\nu} \ab{\ps{A \bpa{\phi_\alpha - \psi^\bphi_\alpha} , A^{-1}\bpa{E_\alpha - H}A^{-1} A \bpa{ \phi_\alpha - \psi^\bphi_\alpha } }} \\
&\le \pa{ \mymax{1 \le \alpha \le \nu} \nor{A^{-1}\pa{E_\alpha - H}A^{-1}}{} }\sum_{\alpha=1}^{\nu} \nor{\phi_\alpha - \psi^\bphi_\alpha}{e}^2 \\
&\le \pa{c_H + c_A^2 \mymax{1 \le \alpha \le \nu} \ab{E_\alpha}} \nor{A \bpa{\bphi - \bpsi^\bphi}}{}^2.
\end{align*}
\end{proof}

\section{Alternative bound for $\Gamma - \Lambda$}%
\label{sec:Alternative bound}

In this section, we provide another way of bounding $\Gamma - \Lambda$, which represent an alternative to~\eqref{eq:explicit_diff}. It uses another method.

For any $z \in \{E_\mu\}_{\mu=1}^\nu \cup \pa{\C \backslash \sigma( H )}$ we define
\begin{align}\label{def:pseudoinv_z_H}
\pa{z - H}_\perp^{-1} := 
\left\{
\begin{array}{ll}
\pa{\pa{z -  H }_{\mkern 1mu \vrule height 2ex\mkern2mu  \Gamma^\perp \cH \rightarrow \Gamma^\perp \cH}}^{-1} & \mbox{on }  \Gamma^\perp \cH, \\
0 & \mbox{on } \Gamma \cH
\end{array}
\right.
\end{align}
extended by linearity on $\cH$. 
\begin{proposition}[Alternative bound for $\Gamma - \Lambda$]\label{prop:another_bound_PerrP} 
Under the same assumptions as in Theorem~\ref{thm:main_deg_thm}, and if further the following gap assumptions
\begin{align}\label{eq:assum_sec_bound} 
	\dist \pa{\{\cE_\mu\}_{\mu=1}^\nu , \sigma_{\textup{d}}\bpa{H_{\mkern 1mu \vrule height 2ex\mkern2mu \Gamma^\perp \cH \rightarrow \Gamma^\perp \cH}}} &> 0, \nonumber\\
	\dist\pa{\{E_\mu\}_{\mu=1}^\nu , \sigma_{\textup{d}}\bpa{\pa{\cP H \cP}_{\mkern 1mu \vrule height 2ex\mkern2mu \Lambda^\perp \cP \cH \rightarrow \Lambda^\perp \cP \cH}}} &> 0,
 \end{align}
hold, then 
\begin{multline}\label{eq:total_PerrP_2} 
\nor{\Gamma - \Lambda}{2,\delta} \le c_A^\delta \nor{ \cP^\perp \Gamma }{2,\delta}^2 + \nu c_\cP^\delta \nor{ \cP^\perp H \cP \Lambda }{} \mymax{1 \le \mu \le \nu} \nor{ A^\delta \pa{\cE_\mu - H}_\perp^{-1} }{}  \\
+ \pa{c_A c_\cP\nor{A \Gamma}{}}^\delta \nor{\cP^\perp \Gamma}{2,\delta} \pa{2 + \nu \mymax{1 \le \mu \le \nu} \nor{\pa{E_\mu - \cP H \cP}_\perp^{-1} \cP H \cP^\perp}{} } .
\end{multline}
\end{proposition}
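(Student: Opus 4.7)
Following the blueprint of the proof of Theorem~\ref{thm:main_deg_thm} in Section~\ref{sec:Proof of Theorem mauin}, I would start from the $\cP+\cP^\perp$ block decomposition
\begin{align*}
\Gamma - \Lambda = \cP^\perp \Gamma \cP^\perp + \cP \Gamma \cP^\perp + \cP^\perp \Gamma \cP + \cP(\Gamma - \Lambda)\cP,
\end{align*}
obtained using $\cP^\perp \Lambda = 0$. The three ``frame'' pieces are bounded verbatim as in Section~\ref{sub:Treating first terms}, producing $c_A^\delta \nor{\cP^\perp \Gamma}{2,\delta}^2$ and the contribution $2(c_A c_\cP \nor{A\Gamma}{})^\delta \nor{\cP^\perp \Gamma}{2,\delta}$, which accounts for the first summand and the ``$2$'' inside the parenthesis of~\eqref{eq:total_PerrP_2}.

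The new ingredient replaces the Liouvillian approach of Section~\ref{sub:Treating lambda_lambda_perp} by a direct use of the partial inverses from~\eqref{def:pseudoinv_z_H} and~\eqref{def:pseudoinv_z_PHP}. From $\cP H \cP \psi_\mu = \cE_\mu \psi_\mu$ with $\psi_\mu \in \cP\cH$ one has $(\cE_\mu - H)\psi_\mu = -\cP^\perp H \psi_\mu$; since $[H,\Gamma]=0$ and $\cE_\mu \notin \sigma(H_{|\Gamma^\perp \cH})$ by~\eqref{eq:assum_sec_bound}, applying $(\cE_\mu - H)_\perp^{-1}$ yields $\Gamma^\perp \psi_\mu = -(\cE_\mu - H)_\perp^{-1} \cP^\perp H \psi_\mu$, hence
\begin{align*}
\Gamma^\perp \Lambda = -\sum_{\mu=1}^\nu (\cE_\mu - H)_\perp^{-1} \cP^\perp H P_{\psi_\mu}.
\end{align*}
Symmetrically, projecting $(H - E_\mu)\phi_\mu = 0$ onto $\cP\cH$ gives $(\cP H \cP - E_\mu)\cP\phi_\mu = -\cP H \cP^\perp \phi_\mu$; since $\Lambda \cP H \cP^\perp = \sum_\alpha \cE_\alpha P_{\psi_\alpha}\cP^\perp = 0$, the right-hand side lies in $\Lambda^\perp \cP\cH$, the second gap in~\eqref{eq:assum_sec_bound} makes $(E_\mu - \cP H \cP)_\perp^{-1}$ effective, and
\begin{align*}
\Lambda^\perp \cP \Gamma = \sum_{\mu=1}^\nu (E_\mu - \cP H \cP)_\perp^{-1} \cP H \cP^\perp P_{\phi_\mu}.
\end{align*}

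To close, I would invoke the skew identity $\Gamma - \Lambda = \Gamma \Lambda^\perp - \Gamma^\perp \Lambda$ together with $\Lambda^\perp \cP = \cP \Lambda^\perp$ to split $\cP(\Gamma-\Lambda)\cP = \cP \Gamma \cP \Lambda^\perp - \cP \Gamma^\perp \Lambda$. For the second piece, $\nor{\cP \Gamma^\perp \Lambda}{2,\delta} \le c_\cP^\delta \sum_\mu \nor{A^\delta(\cE_\mu-H)_\perp^{-1}}{} \nor{\cP^\perp H \psi_\mu}{}$ combined with $\nor{\cP^\perp H \psi_\mu}{} \le \nor{\cP^\perp H \cP \Lambda}{}$ (since $\psi_\mu = \cP\Lambda\psi_\mu$) yields the second summand of~\eqref{eq:total_PerrP_2}. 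For the first piece, the successive applications of $\nor{A^\delta \cP A^{-\delta}}{} = c_\cP^\delta$, $\Gamma = \Gamma^2$, and $\nor{A^\delta \Gamma}{} = \nor{A\Gamma}{}^\delta$ (as in Section~\ref{sub:Treating first terms}) extract the prefactor $(c_A c_\cP \nor{A\Gamma}{})^\delta$ and leave $\nor{\Lambda^\perp \cP \Gamma}{2}$; rewriting $\cP H \cP^\perp \phi_\mu = \cP H \cP^\perp \cdot \cP^\perp \phi_\mu$ and bounding $\nor{\cP^\perp \phi_\mu}{} \le \nor{\cP^\perp \Gamma}{2}$ (with the extra $\nor{\cP^\perp \Gamma}{2} \le c_A \nor{\cP^\perp \Gamma}{2,1}$ when $\delta=1$) then produces the $\nu \max_\mu \nor{(E_\mu - \cP H \cP)_\perp^{-1} \cP H \cP^\perp}{}$ factor multiplying $\nor{\cP^\perp \Gamma}{2,\delta}$, completing~\eqref{eq:total_PerrP_2}. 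The main obstacle will be the weight bookkeeping for $\delta = 1$: the factor $\nor{A\Gamma}{}$ must absorb precisely the $A$ that cannot be moved past the partial inverses, so that only the plain operator norms of $(\cE_\mu-H)_\perp^{-1}$ (weighted by $A^\delta$) and $(E_\mu-\cP H \cP)_\perp^{-1}\cP H \cP^\perp$ (unweighted) remain in the final bound.
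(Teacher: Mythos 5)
Your argument is correct and lands on exactly the same key operator identity as the paper, namely
\begin{align*}
\cP\pa{\Gamma-\Lambda}\cP \;=\; \sum_{\mu=1}^{\nu}\pa{\cP P_{\phi_\mu}\cP^\perp H\pa{E_\mu-\cP H\cP}_\perp^{-1} \;+\; \cP\pa{\cE_\mu-H}_\perp^{-1}\cP^\perp H P_{\psi_\mu}},
\end{align*}
but you reach it by a genuinely different route. The paper derives this representation from the resolvent identity $\cP(z-H)^{-1}\cP-(z-\cP H\cP)^{-1}_{\cP\cH}=\cP(z-H)^{-1}\cP^\perp H(z-\cP H\cP)^{-1}_{\cP\cH}$ followed by a Cauchy contour integral around the cluster, using that the double poles $(z-E_\mu)^{-1}(z-\cE_\alpha)^{-1}$ integrate to zero. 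You instead apply the two partial inverses directly to the eigenvalue equations, obtaining $\Gamma^\perp\Lambda=-\sum_\mu\pa{\cE_\mu-H}_\perp^{-1}\cP^\perp H P_{\psi_\mu}$ and $\Lambda^\perp\cP\Gamma=\sum_\mu\pa{E_\mu-\cP H\cP}_\perp^{-1}\cP H\cP^\perp P_{\phi_\mu}$, and recombine them through $\Gamma-\Lambda=\Gamma\Lambda^\perp-\Gamma^\perp\Lambda$. Your derivation is more elementary (no holomorphy or contour is needed) and makes transparent why the two gap conditions in~\eqref{eq:assum_sec_bound} are exactly what is required; the paper's contour argument is the one that generalizes when one wants to compare spectral projectors over more general regions. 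The block decomposition, the treatment of the three frame terms, and the final norm estimates are the same in both proofs, and your bookkeeping of the $A^\delta$ weights reproduces~\eqref{eq:total_PerrP_2} correctly.

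One intermediate claim of yours is false, though harmless. You assert $\Lambda\cP H\cP^\perp=\sum_\alpha\cE_\alpha P_{\psi_\alpha}\cP^\perp=0$; in fact $P_{\psi_\alpha}\cP H\cP^\perp=\ketbra{\psi_\alpha}{\cP^\perp H\psi_\alpha}$, which does not vanish in general, since only $\cP H\cP P_{\psi_\alpha}=\cE_\alpha P_{\psi_\alpha}$ holds (the operator $\cP H\cP^\perp$ is not $\cP H\cP$ acting on $\psi_\alpha$). Fortunately you do not need the right-hand side $\cP H\cP^\perp\phi_\mu$ to lie in $\Lambda^\perp\cP\cH$: the partial inverse $\pa{E_\mu-\cP H\cP}_\perp^{-1}$ annihilates $\Lambda\cH\oplus\cP^\perp\cH$ by construction (see~\eqref{def:pseudoinv_z_PHP}), so applying it to both sides of $\pa{E_\mu-\cP H\cP}\cP\phi_\mu=\cP H\cP^\perp\phi_\mu$ still yields $\Lambda^\perp\cP\phi_\mu=\pa{E_\mu-\cP H\cP}_\perp^{-1}\cP H\cP^\perp\phi_\mu$, and the rest of your argument goes through unchanged.
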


The proof of this result is provided in Section~\ref{sec:Proof of Theorem mauin}.

	The quantity $\cP^\perp H \Lambda$ can be interpreted as an \textup{a posteriori} one. When $\Gamma$ is close to $\Lambda$ it is small because, since $[H,\Gamma]=0$ and $\cP^\perp \Lambda = 0$, then $\cP^\perp H \Lambda = \cP^\perp [H, \Lambda - \Gamma]$. The bound~\eqref{eq:bound_Omega} involves $\nor{ \Gamma - \Lambda }{2,\delta}^2$ while~\eqref{eq:total_PerrP_2} does not. If one rather needs an \textup{a posteriori} quantification, ~\eqref{eq:total_PerrP_2} might be better.

\subsection{Proof of Proposition~\ref{prop:another_bound_PerrP}}%
\label{sub:Proof of}

In this section we present another way of treating $\Gamma - \Lambda$.

For any $z \in \C \backslash \sigma \bpa{(\cP H \cP)_{\mkern 1mu \vrule height 2ex\mkern2mu \cP \cH \rightarrow \cP \cH} }$, we define the partial inverse
\begin{align}\label{def:pseudoinv_z_PHP_restr}
\pa{z - \cP H \cP}^{-1}_{\cP \cH} := 
\left\{
\begin{array}{ll}
\pa{\pa{z -  \cP H \cP}_{\mkern 1mu \vrule height 2ex\mkern2mu \cP \cH \rightarrow \cP \cH}}^{-1} & \mbox{on }  \cP \cH, \\
0 & \mbox{on } \cP^\perp \cH,
\end{array}
\right.
\end{align}
extended by linearity on $\cH$. By the definition~\eqref{def:pseudoinv_z_PHP_restr}, 
\begin{align*}
\pa{z - \cP H \cP} \pa{z - \cP H \cP}^{-1}_{\cP \cH} \vp =
\left\{
\begin{array}{ll}
\vp & \mbox{if } \vp \in \cP \cH,  \\
0 & \mbox{if } \vp \in \cP^\perp \cH,
\end{array}
\right.
\end{align*}
hence
\begin{align}\label{eq:res_PHP} 
\pa{z - \cP H \cP} \pa{z - \cP H \cP}^{-1}_{\cP \cH} = \cP.
\end{align}
Then
\begin{align*}
&\cP (z-H)^{-1} \cP - (z-\cP H \cP)^{-1}_{\cP \cH} = \cP (z-H)^{-1} \pa{\cP - (z-H) (z- \cP H \cP)^{-1}_{\cP \cH}} \\
&\qquad\qquad = \cP (z-H)^{-1} \pa{\cP - (z- \cP H \cP + \cP H \cP - H) (z- \cP H \cP)^{-1}_{\cP \cH} } \\
&\qquad\qquad \underset{\substack{\eqref{eq:res_PHP}}}{=} \; \cP (z-H)^{-1} (H - \cP H \cP ) (z- \cP H \cP)^{-1}_{\cP \cH} \\
&\qquad\qquad = \cP (z-H)^{-1} \cP^\perp H (z- \cP H \cP)^{-1}_{\cP \cH},
\end{align*}
where we used that $(z- \cP H \cP)^{-1}_{\cP \cH} = \cP (z- \cP H \cP)^{-1}_{\cP \cH}$ in the last step.
We now use that 
\begin{align*}
\Gamma (z-H)^{-1} = \sum_{\mu=1}^{\nu} P_{\phi_\mu}  (z-H)^{-1}= \sum_{\mu=1}^{\nu} P_{\phi_\mu}  (z-E_\mu)^{-1}
\end{align*}
to deduce $\Gamma (z-H)^{-1} \Gamma^\perp = \Gamma^\perp (z-H)^{-1} \Gamma = 0$ and $\Gamma (z-H)^{-1} = \Gamma (z-H)^{-1} \Gamma$, so we can write
\begin{align*}
	&(z-H)^{-1} \\
	&\qquad = \Gamma (z-H)^{-1} \Gamma + \Gamma^\perp (z-H)^{-1} \Gamma + \Gamma (z-H)^{-1} \Gamma^\perp + \Gamma^\perp (z-H)^{-1} \Gamma^\perp \\
	&\qquad = (z-H)^{-1}_\perp + \sum_{\mu=1}^{\nu} P_{\phi_\mu}  (z-E_\mu)^{-1}.
\end{align*}
Similarly,
\begin{align*}
(z- \cP H \cP)^{-1}_{\cP \cH} &= \Lambda (z- \cP H \cP)^{-1}_{\cP \cH} +  \Lambda^\perp (z- \cP H \cP)^{-1}_{\cP \cH} \Lambda^\perp \\
&= (z- \cP H \cP)^{-1}_\perp  +  \sum_{\mu=1}^{\nu} P_{\psi_\mu}  (z-\cE_\mu)^{-1}.
\end{align*}
The operators $(z-H)^{-1}_\perp$ and $(z- \cP H \cP)^{-1}_\perp$ are holomorphic in the interior of $\cC$ so they will ``participate passively'' to the Cauchy integral. 
We have
\begin{align}\label{eq:double_sing_zero} 
\f{1}{2\pi i} \oint_\cC (z-E_\mu)^{-1} (z-\cE_\alpha)^{-1} \d z = \delta_{E_\mu \neq \cE_\alpha} \pa{(\cE_\alpha-E_\mu)^{-1} + (E_\mu-\cE_\alpha)^{-1}} = 0.
\end{align}
We are ready to compute the Cauchy integral
\begin{align*}
	&\cP \pa{\Gamma - \Lambda} \cP = \f{1}{2\pi i} \oint_\cC \pa{\cP (z-H)^{-1} \cP - (z-\cP H \cP)^{-1}_{\cP \cH}} \d z \\
&\qquad  = \f{1}{2\pi i} \oint_\cC \cP (z-H)^{-1} \cP^\perp H (z- \cP H \cP)^{-1}_{\cP \cH}\d z \\
&\qquad = \sum_{\mu=1}^{\nu} \pa{\cP P_{\phi_{\mu}} \cP^\perp H  \pa{E_\mu - \cP H \cP}_\perp^{-1}  + \cP \pa{\cE_\mu - H}_\perp^{-1} \cP^\perp H P_{\psi_\mu}}.
\end{align*}
As for inequalities, we have
\begin{align*}
&\nor{A^\delta \cP P_{\phi_{\mu}} \cP^\perp H  \pa{E_\mu - \cP H \cP}_\perp^{-1} }{2} \\
&\qquad =\nor{A^\delta  \cP A^{-\delta} A^\delta \Gamma P_{\phi_{\mu}} \Gamma \cP^\perp \cP^\perp H  \pa{E_\mu - \cP H \cP}_\perp^{-1} }{2} \\
&\qquad \le\nor{A^\delta  \cP A^{-\delta}}{} \nor{ A^\delta \Gamma}{} \nor{P_{\phi_\mu}}{2} \nor{ \Gamma \cP^\perp}{} \nor{\cP^\perp H  \pa{E_\mu - \cP H \cP}_\perp^{-1} }{} \\
&\qquad \le \pa{c_A c_\cP\nor{A \Gamma}{}}^\delta \nor{\pa{E_\mu - \cP H \cP}_\perp^{-1} H \cP^\perp}{}  \nor{\cP^\perp \Gamma}{2,\delta},
\end{align*}
and
\begin{align*}
& \nor{A^\delta \cP \pa{\cE_\mu - H}_\perp^{-1} \cP^\perp H P_{\psi_\mu} }{2} = \nor{A^\delta \cP A^{-\delta} A^\delta \pa{\cE_\mu - H}_\perp^{-1} \cP^\perp H \Lambda P_{\psi_\mu} }{2} \\
&\qquad \le \nor{A^\delta \cP A^{-\delta}}{} \nor{ A^\delta \pa{\cE_\mu - H}_\perp^{-1} }{}  \nor{ \cP^\perp H \Lambda }{} \nor{P_{\psi_\mu}}{2} \\
&\qquad  \le c_\cP^\delta \nor{ A^\delta \pa{\cE_\mu - H}_\perp^{-1} }{}  \nor{ \cP^\perp H \Lambda }{},
\end{align*}
and also using the inequalities of Section~\ref{sub:Treating first terms}, we can deduce~\eqref{eq:total_PerrP_2} of Proposition~\ref{prop:another_bound_PerrP}.

\bibliographystyle{siam}
\bibliography{rbm_pt}
\end{document}